\def
\addsymbol \mathbb{R}^n: {$n$-dimensional Cartesian space}{symbol:rn}
\addsymbol \boldsymbol{X}_H: {Hamiltonian vector field}{symbol:ham-flow}
\addsymbol \boldsymbol{X_i}: {Translational basis vector}{symbol:transbasis}
\addsymbol \boldsymbol{R}: {Rotational basis vector}{symbol:rotbasis}
\addsymbol \boldsymbol{K}: {Killing tensor of valence two}{symbol:killing}
\addsymbol \mathcal{K}^p(\m): {Vector space of valence $p$ Killing tensors defined on $\m$}{symbol:vspace}
\addsymbol \mathcal{K}^p(\m)/G: {Quotient space}{symbol:quotientspace}
\addsymbol \binom{n}{k}: {Number of $k-$subsets of $n$ elements}{symbol:binomial}
\addsymbol \textnormal{d}: {Exterior derivative operator}{symbol:ext}
\addsymbol \mathcal{L}: {Lie derivative operator}{symbol:lie}
\addsymbol \textnormal{HJ}: {Hamilton-Jacobi}{symbol:HJ}
\addsymbol \textnormal{SW}: {Smorodinsky-Winternitz}{symbol:sw}
\addsymbol \textnormal{TTW}: {Tremblay-Turbiner-Winternitz}{symbol:ttw}
\addsymbol \textnormal{PDE}: {Partial differential equation}{symbol:pde}
\addsymbol \mathbb{R}^n: {$n$-dimensional Cartesian space}{symbol:rn}
\addsymbol \boldsymbol{X}_H: {Hamiltonian vector field}{symbol:ham-flow}
\addsymbol \boldsymbol{X_i}: {Translational basis vector}{symbol:transbasis}
\addsymbol \boldsymbol{R}: {Rotational basis vector}{symbol:rotbasis}
\addsymbol \boldsymbol{K}: {Killing tensor of valence two}{symbol:killing}
\addsymbol \mathcal{K}^p(\m): {Vector space of valence $p$ Killing tensors defined on $\m$}{symbol:vspace}
\addsymbol \mathcal{K}^p(\m)/G: {Quotient space}{symbol:quotientspace}
\addsymbol \binom{n}{k}: {Number of $k-$subsets of $n$ elements}{symbol:binomial}
\addsymbol \textnormal{d}: {Exterior derivative operator}{symbol:ext}
\addsymbol \mathcal{L}: {Lie derivative operator}{symbol:lie}
\addsymbol \textnormal{HJ}: {Hamilton-Jacobi}{symbol:HJ}
\addsymbol \textnormal{SW}: {Smorodinsky-Winternitz}{symbol:sw}
\addsymbol \textnormal{TTW}: {Tremblay-Turbiner-Winternitz}{symbol:ttw}
\addsymbol \textnormal{PDE}: {Partial differential equation}{symbol:pde}
\def\addsymbol #1: #2#3{$#1$ \> \parbox{5in}{#2 \dotfill \pageref{#3}}\\} 
\def\newnot#1{\label{#1}} 
\newtheorem{theo}[equation]{Theorem}
\newtheorem{lem}[equation]{Lemma}
\newtheorem{prop}[equation]{Proposition}
\def\m{\mathcal{M}}
\def\k{\mathcal{K}}
\def\g{\boldsymbol{g}}
\def\N{\mathbb{N}}
\def\R{\mathbb{R}}
\def\C{\mathbb{C}}
\begin{document}




\pagenumbering{roman}
\setcounter{page}{0}
\thispagestyle{empty}

\null\vskip0.5in
   \begin{center}
      \hyphenpenalty=10000\LARGE\bfseries {A Geometric Study of Superintegrable Systems}
   \end{center}
   \vfill
   \begin{center}
      \large by\\
      Amelia L. Yzaguirre \\
      ${}$ \\
      \normalsize
      \texttt{ameliay@mathstat.dal.ca}
   \end{center}
   \vfill
   \begin{center}
      Submitted in partial fulfillment of the requirements \\
      for the degree of Master of Science \\[2.5ex]
      at \\[2.5ex]
      Dalhousie University \\
      Halifax, Nova Scotia \\
      August 2012
   \end{center}
   \vskip0.75in
   \begin{center}
      \rmfamily \copyright\ Copyright by Amelia L. Yzaguirre, 2012
   \end{center}

\setcounter{secnumdepth}{3}
\setcounter{tocdepth}{3}

\frontmatter 

\begin{acknowledgements}      

My sincerest gratitude extends to my boyfriend, Antonio, who has provided me with love and light throughout the development of this thesis and beyond.  His support has kept me afloat in the most treacherous of storms, and I thank him immensely for being such a source of inspiration.  I also thank my parents, brothers, and sister for providing encouragement and expressing enthusiasm in my work while I have been so far away from home.  I miss you all beyond words.   Last but not least I want to thank my advisor Dr.~Roman Smirnov for his unbelievable patience and guidance during this study.

This material is based upon work supported by the National Science Foundation Graduate Research Fellowship under Grant No. 1048093. Any opinion, findings, and conclusions or recommendations expressed in this material are my own, and do not necessarily reflect the views of the National Science Foundation.

\vfill
I dedicate the work in this thesis to the passing of my best friend, Kraven.  I think of you every day.
\vfill

\end{acknowledgements}




\begin{abstracts}        

Superintegrable systems are classical and quantum Hamiltonian systems which enjoy much symmetry and structure that permit their solubility via analytic and even, algebraic means. They  include such well-known and important models as the Kepler potential, Calogero-Moser model, and  harmonic oscillator, as well as its integrable perturbations, for example, the Smorodinsky-Winternitz (SW)\newnot{symbol:sw} potential. Normally, the problem of classification of superintegrable systems is approached by considering associated
algebraic, or  geometric structures.   To this end, we invoke the invariant theory of Killing tensors (ITKT).  Through the ITKT, and in particular, the recursive version of the Cartan method of moving frames to derive joint invariants, we are able to intrinsically
characterise and  interpret the arbitrary parameters appearing in the general form of the SW superintegrable potential.  Specifically, we determine, using joint invariants,  that the more general the geometric structure associated with the SW potential is, the fewer arbitrary parameters it admits.   Additionally, we classify the multi-separability of a recently discovered superintegrable system which generalizes the SW potential and is dependent on an additional arbitrary parameter $k$, known as the Tremblay-Turbiner-Winternitz (TTW)\newnot{symbol:ttw} system. We provide a proof that only for the case $k = \pm 1$ does the general TTW system admit orthogonal separation of variables with respect to both Cartesian and polar coordinates.

\end{abstracts}



\tableofcontents

\listoffigures

\chapter*{List of Symbols\hfill} \addcontentsline{toc}{chapter}{List of Symbols}
\markboth{}{List of Symbols}
 \clearpage

\mainmatter 
\chapter{Introduction} 
\label{intro}

\ifpdf
\graphicspath{{Chapter1/Chapter1Figs/PNG/}{Chapter1/Chapter1Figs/PDF/}{Chapter1/Chapter1Figs/}}
\else
\graphicspath{{Chapter1/Chapter1Figs/EPS/}{Chapter1/Chapter1Figs/}}
\fi

The origins of this thesis can be traced back to 1965 when Winternitz and Fri{\v{s}} used the theory of Lie groups to classify the \emph{separable webs} of the Euclidean plane.  Specifically, they computed the invariants of the second-order symmetries of the Laplace equation 
\[
\Delta V(x,y) = \frac{\partial^2 V}{\partial x^2} + \frac{\partial^2 V}{\partial y^2} = 0, 
\]
defined on the Euclidean plane under the action of the Euclidean group.  As we shall develop in Section \ref{section:separability}, the classification of such separable webs is intimately related to the Hamilton-Jacobi theory, which provides a formulation of \emph{Hamiltonian mechanics} that allows one to conclude the solubility of a classical \emph{Hamiltonian system} (see Section \ref{section:hamiltonian}).  Indeed, by determining the orthogonal coordinate systems that afford separation of variables in the corresponding Hamilton-Jacobi equation for a Hamiltonian system (see Section \ref{section:hamilton-jacobi}), one can ultimately obtain exact solutions, or trajectories, of the Hamiltonian system.  

The work of St{\"a}ckel in 1893 and its  extension by Eisenhart in 1934 were both essential to the development of this geometric approach to studying Hamiltonian systems.  Both of these canonical papers established the essential role of a geometric object that naturally arose when studying orthogonal separation of variables of the Hamilton-Jacobi equation: a Killing tensor of valence two.  Studying the intrinsic properties of Killing tensors corresponding to certain orthogonally separable cases thus provided a new approach to the problem of solving a Hamiltonian system.  In 2002, McLenaghan, Smirnov, and The \cite{mclenaghan2002group} utilised classical invariant theory to develop the properties of such Killing tensors, which marked the initial stages in the development of the invariant theory of Killing tensors.  Since then the theory has been utilised by a significant number of authors who have applied it in a more general setting to study Killing tensors defined in spaces of constant curvature (e.g., see \cite{mclenaghan2002group,deeley2003theory,adlam2005thesis,adlam2008joint,chanu2006geometrical,chana2009rsep,cochran2011thesis,horwood2008thesis,horwood2008invariants,horwood2005invariant,horwood2009minkowski,smirnov2006darboux,smirnov2004covariants,winternitz1965invariants,yue2005thesis} and the relevant references therein).  

In this thesis we use the invariant theory of Killing tensors in the context of solving the classification problem for \emph{superintegrable} Hamiltonian systems defined on the Euclidean plane\footnote{The term \emph{superintegrable} was introduced in 1983 by S. Wojciechowski \cite{wojciechowski1983}.}.  One system of particular interest  is the \emph{Tremblay-Turbiner-Winternitz (TTW) system} introduced in 2009 \cite{TTW2009infinite}, which is a generalisation of another superintegrable system of interest,  the \emph{Smorodinsky-Winternitz (SW) system}.
The unique feature of the TTW system is that it provides  an \emph{infinite} family of solvable and integrable (quantum) systems on the Euclidean plane with respect to a parameter $k$ \cite{TTW2009infinite, TTW2010periodic,TW2010third}.  Furthermore, this family of systems was proven to be both classically and quantum superintegrable  for any rational value of $k$ \cite{kalnins2010families,kalnins2011recurrence,kalnins2012structure,kalnins2011pogo}. 

A Hamiltonian system with $n$ degrees of freedom is \emph{completely integrable} if in addition to its Hamiltonian, $H$, it admits $n-1$ first integrals of motion, $F_i,i=1,\dots,n-1$, that are well-defined functions on the phase space.  These first integrals must be functionally independent, 
\[
\textnormal{d}H \wedge \textnormal{d}F_1 \wedge \dots \wedge \textnormal{d}F_{n-1} \neq 0,
\]
and \emph{in involution} with respect to the Poisson bracket (see Section \ref{conventions})
\[
\{H,F_i\}_{P_0} =  \{F_i, F_j\}_{P_0} = 0, \quad i,j = 1,\dots, n-1.
\]
If such a completely integrable system admits additional $n-1$ functionally independent first integrals of motion, $G_i, i=1, \dots n-1$, namely
\[
\textnormal{d}H \wedge \textnormal{d}F_1 \wedge \dots \wedge \textnormal{d}F_{n-1} \wedge \textnormal{d}G_1 \wedge \dots \wedge \textnormal{d}G_{n-1} \neq 0,
\]\newnot{symbol:ext}
which are in involution 
\[
\{H,G_i\}_{P_0} =  \{G_i, G_j\}_{P_0} = 0, \quad i,j = 1,\dots, n-1.
\]then the system is \emph{maximally superintegrable} (see Section \ref{superintegrable}).  Since such systems have more constants of motion than degrees of freedom it is possible to integrate them through analytic and algebraic methods. For this reason, superintegrable systems are often sought after as starting points for studying much more complicated systems.  Some familiar examples of superintegrable systems (in the realm of physics)  include the harmonic oscillator, the Kepler system, the Calogero-Moser system, the hydrogen atom, and others (see e.g.~\cite{bertrand1873,evans1990superclassical,misch1978,nekho1972,sparano2000,tempesta2004notes}).  

The construction of an infinite family of superintegrable systems, such as the TTW system, is fairly straightforward.  We demonstrate the technique with respect to the TTW system, which begins with the second-order system defined by a Hamiltonian (see Section \ref{section:hamiltonian}), of the form
\begin{equation}
H = p_x^2 + p_y^2 - \omega^2(x^2+y^2) + \frac{\alpha}{x^2} + \frac{\beta}{y^2},
\label{eq:sw}
\end{equation}  
where $p_x,p_y$ denote the momenta coordinates, and $x,y$ denote the position coordinates. This is the SW system, which is recognised as a seond-order perturbation of the two-dimensional harmonic oscillator and further, is an example of a superintegrable Hamiltonian system since it is \emph{multi-separable} with respect to both Cartesian and polar coordinates (see Section \ref{section:separability}).  The property of being superintegrable is what we intend to retain in the construction of a new Hamiltonian system.

To this end, we can transform the Hamiltonian (\ref{eq:sw}) to polar coordinates and replace $\theta$ with $k \theta$, where $k$ is some new arbitrary parameter.  Such a transformation of course preserves the property of the system being separable with respect to polar coordinates, though we now arrive at a new Hamiltonian of the form
\begin{equation}
H = p_r^2 + \tfrac{1}{r}p_r + \frac{1}{r^2}p_{\theta}^2 - \omega^2 r^2 + \frac{1}{r^2}\left(\frac{\alpha}{\cos^2{k\theta}} + \frac{\beta}{\sin^2{k\theta}}\right).
\label{eq:TTW}
\end{equation}
This is the TTW system.  One characterising feature of the TTW system is that much like the system (\ref{eq:sw}), it is still separable with respect to polar coordinates and it still behaves as a perturbation of the harmonic oscillator.  Though this development raises the question of whether the system remains superintegrable for all values of $k$.  In their initial publication, Tremblay et al.~\cite{TTW2009infinite} conjectured and provided strong evidence for the superintegrability of this system for all rational values of $k$, a claim which fueled the activity of researchers who were set out to prove its validity. Indeed, if the system remained superintegrable then it provided an example of a second order system that could generate an infinite number of higher order superintegrable systems, which would work towards the development of a classification theory for superintegrable systems. The claim was proven to be true in the classical case, where Kalnins et al.~\cite{kalnins2011pogo} explicitly solved for the higher order first integrals, appearing as polynomials in the momenta, by using the \emph{structure algebra} of first integrals.  In this context, one must determine the \emph{order} defining when the algebra of first integrals closes with respect to the Poisson bracket, and find the corresponding structure equations for the symmetry algebra (see \cite{kalnins2010tools} for more details).  It was also proven in the quantum case through a \emph{recurrence method} introduced by Kalnins et al.~(see e.g. \cite{kalnins2010tools,kalnins2011recurrence,kalnins2012structure}).  In this thesis we investigate whether there are values of $k$ for which the system admits two first integrals of motion quadratic in the momenta.  Prior to this thesis, such a result has not been shown.

In the Euclidean plane, \emph{maximal superintegrability} requires that a Hamiltonian system  defined by $H$ (see Section \ref{section:hamiltonian}) admits two additional first integrals of motion, $F_i, i=1,2$ functionally independent and in involution, meaning that 
\[
\textnormal{d}H \wedge \textnormal{d}F_1 \wedge \textnormal{d}F_2 \neq 0
\]
and $\{H,F_i\} = \boldsymbol{X}_H(F_i) = 0,$ for $i=1,2$.  Note that a Hamiltonian is always a first integral of motion, a property which can be  physically interpreted as a system demonstrating \emph{conservation of energy}.   For the system given by the Hamiltonian (\ref{eq:TTW}), we also readily obtain a second first integral by exploiting the fact that the system is separable with respect to polar coordinates.  Indeed, since (\ref{eq:TTW}) takes the form of a \emph{natural Hamiltonian}, we can use a result by Liouville  \cite{liouville1846} which yields that such a system will admit a first integral of motion quadratic in the momenta according to 
\begin{equation}
F(q^i,p_i) = \tfrac{1}{2}K^{i j}(q^i)p_i p_j + U(q^i), \quad i,j = 1,2,
\label{eq:first}
\end{equation}
with position coordinates $q^i$, \newnot{symbol:pos} momenta coordinates $p_i$, \newnot{symbol:mom} where $K^{ij}$ is a Killing two-tensor whose normal eigenvectors generate the polar coordinate web (see Section \ref{section:separability}), and $U(q^i)$ is a potential satisfying $dU = \boldsymbol{\hat{K}}\textnormal{d}V$, where $\boldsymbol{\hat{K}} = \boldsymbol{K} \boldsymbol{g}^{-1}$, i.e.~$\boldsymbol{\hat{K}}$ is a $(1,1)-$tensor, which in component form is given by $\hat{K}^i{}_j := K^{i\ell}g_{\ell j}$. \newnot{symbol:killing} And so, in fact, there remains only one first integral of motion to find in order for us to conclude superintegrability of the TTW system.

\section{Summary of Results}
In this thesis we provide a new perspective on the study of joint invariants of Killing tensors, which are objects originally defined by Smirnov and Yue in 2004 \cite{smirnov2004covariants}, and further extended by Adlam et al.~\cite{adlam2008joint}.  In particular, we establish a relationship between the geometric and analytic properties of superintegrable Hamiltonian systems, defined on the Euclidean plane, $\mathbb{E}^2$, \newnot{symbol:euclidean-plane} through the vanishing of joint invariants in the product space $\k^2(\mathbb{E}^2)\times \k^2(\mathbb{E}^2)$, where $\k^2(\mathbb{E}^2)$ denotes the vector space of Killing two tensors defined on the Euclidean plane (see Section \ref{section:ITKT}).  We focus our study on the characterisation of the Smorodinsky-Winternitz potential, where we can readily establish a link between the arbitrary parameters appearing in the potential and the parameters in the associated vector space of Killing tensors.  In particular, we find that the more structure we place on the product space $\k^2(\mathbb{E}^2)\times \k^2(\mathbb{E}^2)$, the more general form a corresponding superintegrable potential is allowed to take, and vice versa.  

  As a generalisation of the Smorodinsky-Winternitz potential, we then focus our attention on the Tremblay-Turbiner-Winternitz potential, where we provide a definitive answer to the following question: 
\begin{center}
\emph{For which values of $k$ is the TTW system multi-separable?}  
\end{center}
Through the geometric description of a superintegrable potential provided by certain types of Killing two-tensors (see Chapter \ref{superintegrable}), we find that only when $k= \pm 1$, which in fact reduces the system to the Smorodinsky-Winternitz potential, does the TTW system retain the property of being multi-separable, in particular, with respect to both Cartesian and polar coordinates in the canonical position.
\section{Overview and Historical Development}
\label{section:history}
We begin with a Hamiltonian system defined (on a pseudo-Riemannian manifold $\m$) \newnot{symbol:man} by a natural Hamiltonian of the form
\begin{equation}
H(q^i,p_i) = \tfrac{1}{2}g^{ij}p_ip_j + V(q^i), \quad i,j = 1,2, 
\label{eq:nat}
\end{equation}\newnot{symbol:ham}
where $(\boldsymbol{q}, \boldsymbol{p})$ denote generalised canonical coordinates.  A first integral of a Hamiltonian system is defined as a smooth function $F = F(\boldsymbol{q},\boldsymbol{p})$ which is constant along the Hamiltonian flow.  The flow can be interpreted as the geometric manifestation of Hamilton's equations in classical mechanics (see Section \ref{section:hamiltonian}), and so one is often interested in the explicit form of a Hamiltonian's first integrals of motion.

From a physical perspective, the first integrals characterise quantities that are conserved throughout the motion of a Hamiltonian system, a notion which evokes some familiar examples of \emph{constants of motion} such as energy, and both angular and linear momentum.  In a way, the more first integrals a system has, the more symmetry and structure the system must have.  If  we can determine $n$ functionally independent first integrals in involution for a system with $n$ degrees of freedom, then the system is said to be \emph{completely integrable}, which means we can explicitly determine the trajectories or orbits of the system (see e.g.~\cite{tempesta2004notes}).  

In general, an additional geometric structure is required to determine the first integrals of a Hamiltonian system.  Some methods which have been successfully employed include the Lax representation method (see e.g.~\cite{ranada2000lax,harnad2004}) and the bi-Hamiltonian approach (see e.g.~\cite{ranada2000bi,tempesta2012} and references therein).  In this thesis, we will make use of \emph{orthogonal separation of variables} in the context of the Hamilton-Jacobi theory to identify \emph{quadratic} first integrals of motion, and vice versa (see Section \ref{section:separability} for details). This approach to determining first integrals has a well-established history which we will discuss in Chapter \ref{chapter:theory}. Furthermore, in Section \ref{section:separability} we will present the natural link between orthogonal separation of variables and Killing two-tensors defined on the Euclidean plane.

\begin{figure}[h!tb]
	\centering
	\begin{tabular}{cc}
		\includegraphics[width=0.45\textwidth]{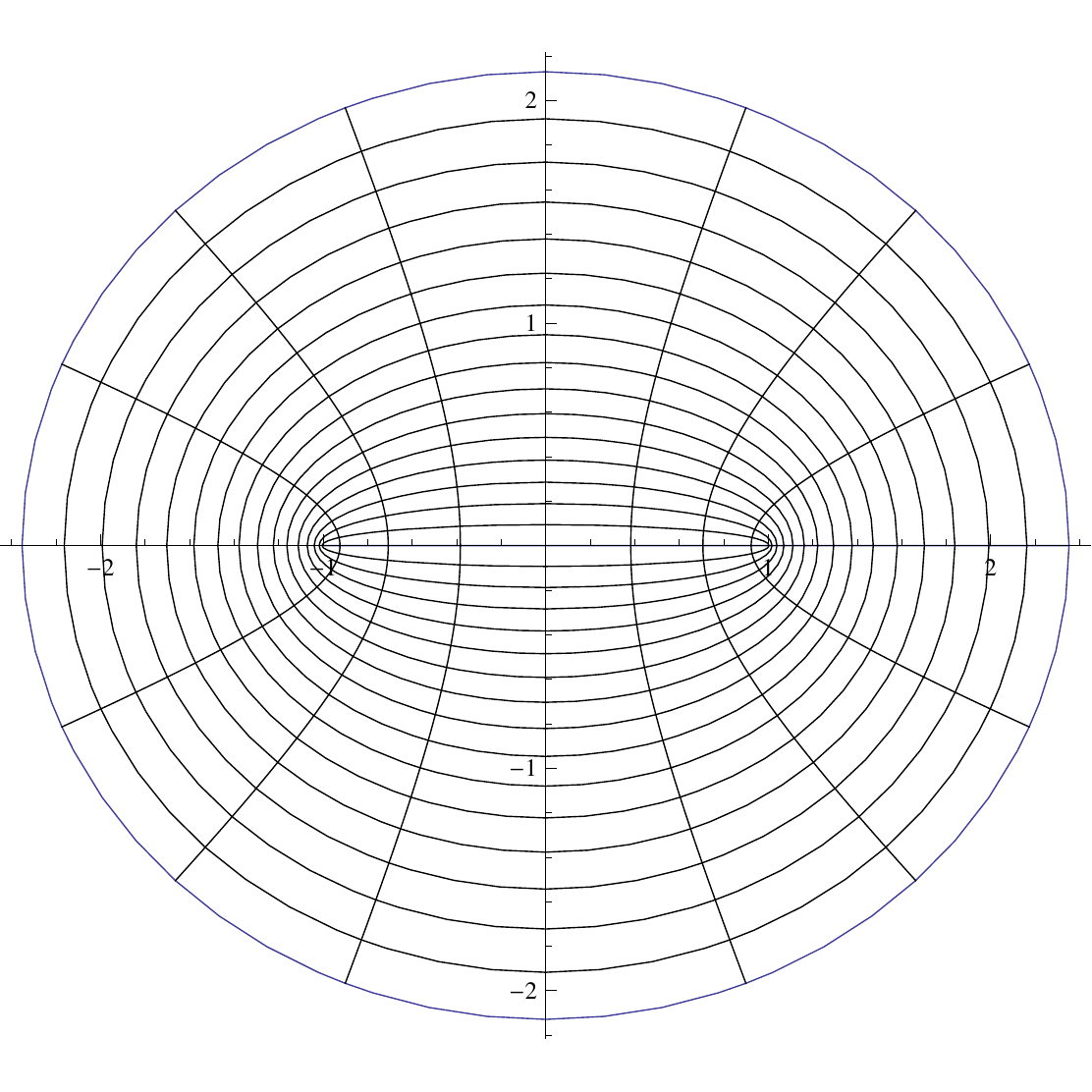}
			& \includegraphics[width=0.45\textwidth]{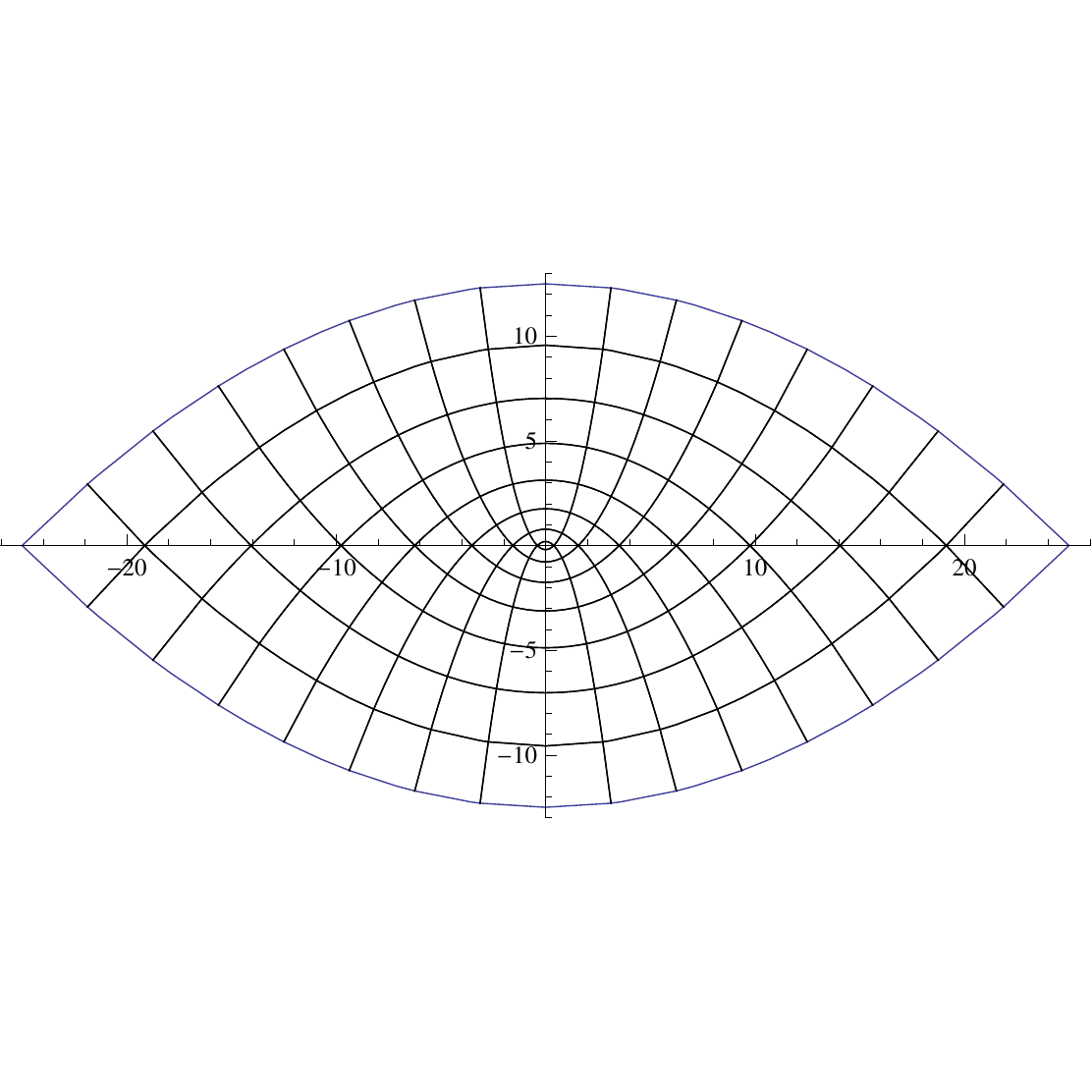} \\
		\includegraphics[width=0.45\textwidth]{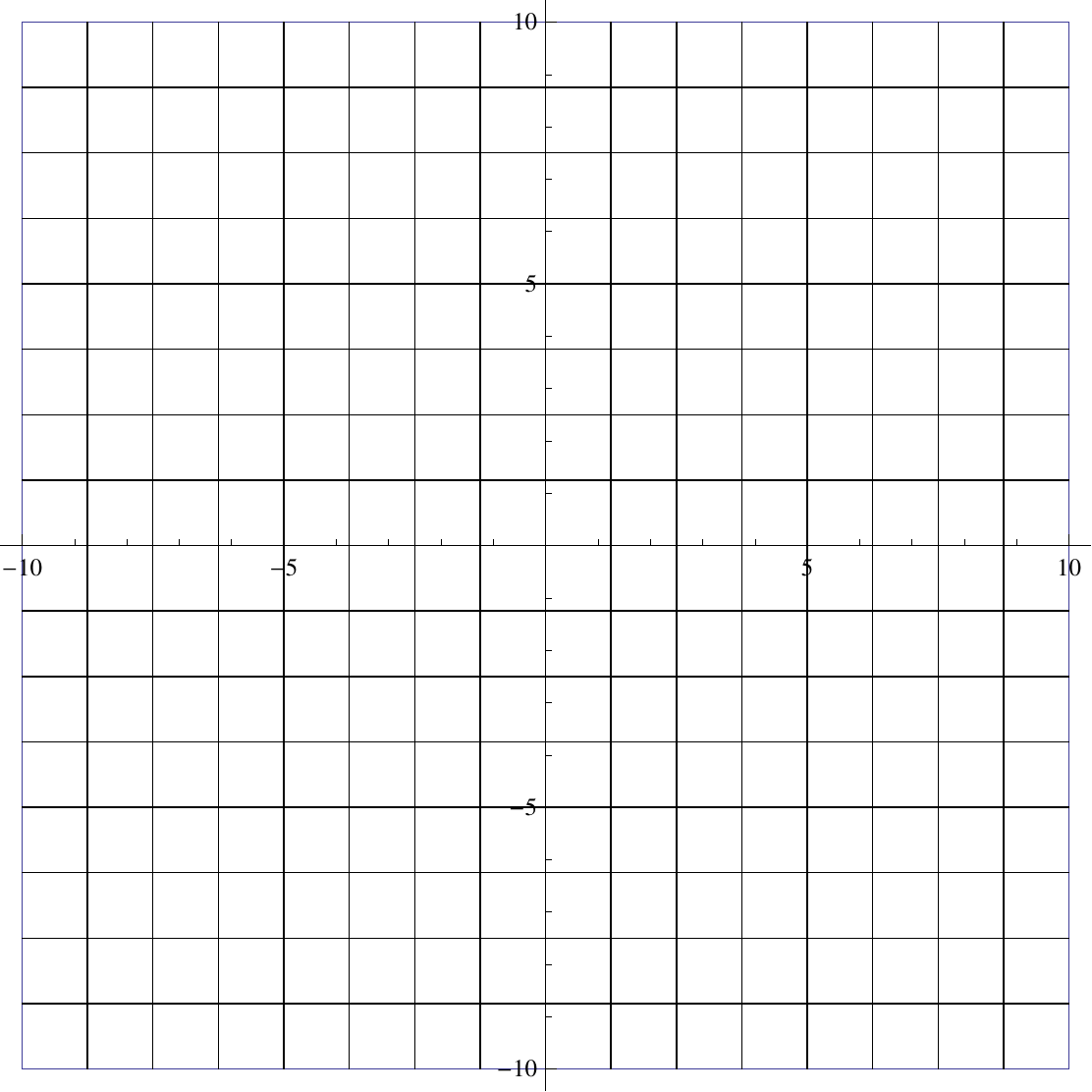}
			& \includegraphics[width=0.45\textwidth]{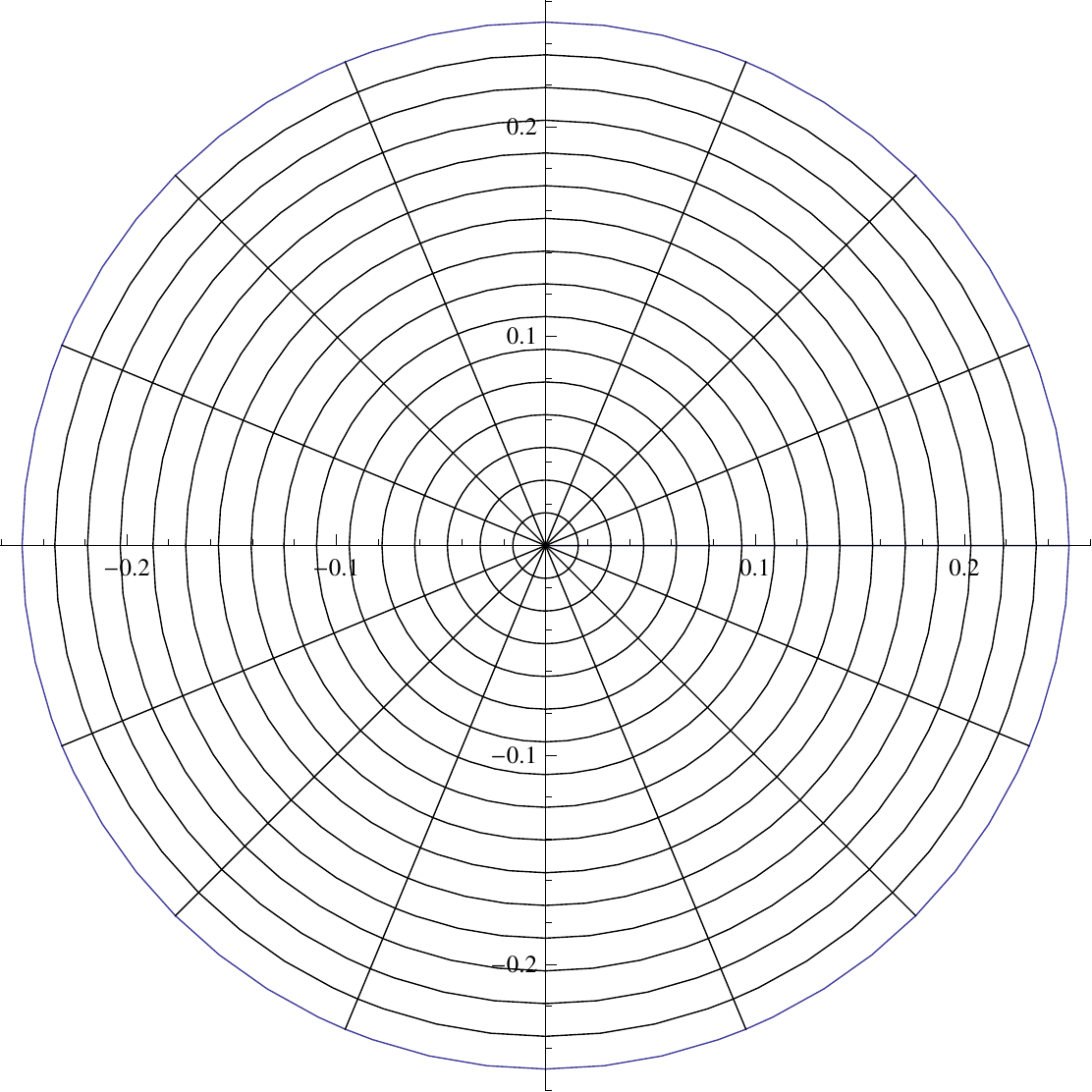}
	\end{tabular}
	\caption{The orthogonal coordinate webs in the Euclidean plane, in the canonical position, arranged according to their number of singular points. Clockwise from top left: elliptic-hyperbolic,  parabolic, polar, and Cartesian.}
\label{canon-webs}
\end{figure}

Formally, the search for superintegrable systems in classical and quantum mechanics (in the context of this thesis) also began in 1965, when Fri{\v{s}} et al.~\cite{fris1965higher}  published a list of superintegrable potentials defined on the Euclidean plane.  Two years later, the study of superintegrable systems defined on three-dimensional Euclidean space, $\mathbb{E}^3$,  was initiated by Winternitz et al.~\cite{winternitz1967symmetry} and Makarov et al.~\cite{makarov1967systematic}.  In 1990, these pioneering works received attention from Evans  \cite{evans1990superclassical}, who was able to extend their results by performing a systematic search for all maximally superintegrable potentials on $\mathbb{E}^3$ that admitted (at most) quadratic first integrals. 

Evans claimed his list of second-order superintegrable systems to be exhaustive up to the equivalence class of linear transformations \cite{evans1990superclassical}.  Though, there was one superintegrable system which did not find itself on Evans' list: the Calogero-Moser model.  If we let $(q^1,q^2,q^3) = (x,y,z)$ denote canonical Cartesian coordinates, then the Calogero-Moser system is defined in $\mathbb{E}^3$ by a natural Hamiltonian, i.e. $$H= \tfrac{1}{2}(p_x^2 + p_y^2 + p_z^2) + V(x,y,z)$$  with potential given by
\begin{equation}
V(x,y,z) = \frac{1}{(x-y)^2} + \frac{1}{(y-z)^2} + \frac{1}{(z-x)^2},
\label{calogero}
\end{equation}
which was well-known to be superintegrable  at the time \cite{wojciechowski1983}. 

The reason such a model was overlooked by Evans' method was due to the fact that he assumed the associated  Killing two-tensors would be in the canonical form \cite{adlam2007orbit}.  This assumption in  \cite{evans1990superclassical} and \cite{fris1965higher} excluded the possibility that a Hamiltonian system could be separable in an orthogonal coordinate system which was in a non-canonical position.  We direct the reader to Figure \ref{non-canon-webs} for examples of such coordinate systems on the Euclidean plane.  Indeed, the Calogero-Moser system is an example of a system that admits five first integrals of motion of the form (\ref{eq:first}), where the associated Killing tensor that comes from solving the \emph{compatibility condition} gives rise to five non-canonical characteristic Killing tensors (see \cite{horwood2005invariant,adlam2007orbit} for more details on the Calogero-Moser system; see Section \ref{section:separability} and  \ref{section:ITKT} for the theory).  

Naturally, this motivates a generalisation of the aforementioned approach which makes use of non-canonical Killing tensors. In 2005, Adlam \cite{adlam2005thesis} began this endeavor and provided a more complete list of such superintegrable potentials defined on the Euclidean plane.  Nevertheless, the case when a potential admits orthogonal separability with respect to elliptic-hyperbolic coordinates in the canonical position, and polar coordinates in a general position has not been completely classified\footnote{The choice of \emph{which} coordinate system is in the canonical position is irrelevant, only that one system is in a general position.}. 

As such, we seek to further develop the growing collection of superintegrable potentials defined on $\mathbb{E}^2$ by providing a new interpretation of their functional form characterised via the invariant theory of Killing tensors.  In particular, we derive a relationship between the joint invariants defined on the product space of Killing two tensors (consisting of Killing tensors compatible with a potential $V$), and the arbitrary parameters of a superintegrable potential which is multi-separable in two orthogonal system of coordinates.  

\begin{figure}[h!tb]
	\centering
	\begin{tabular}{cc}
		\includegraphics[width=0.45\textwidth]{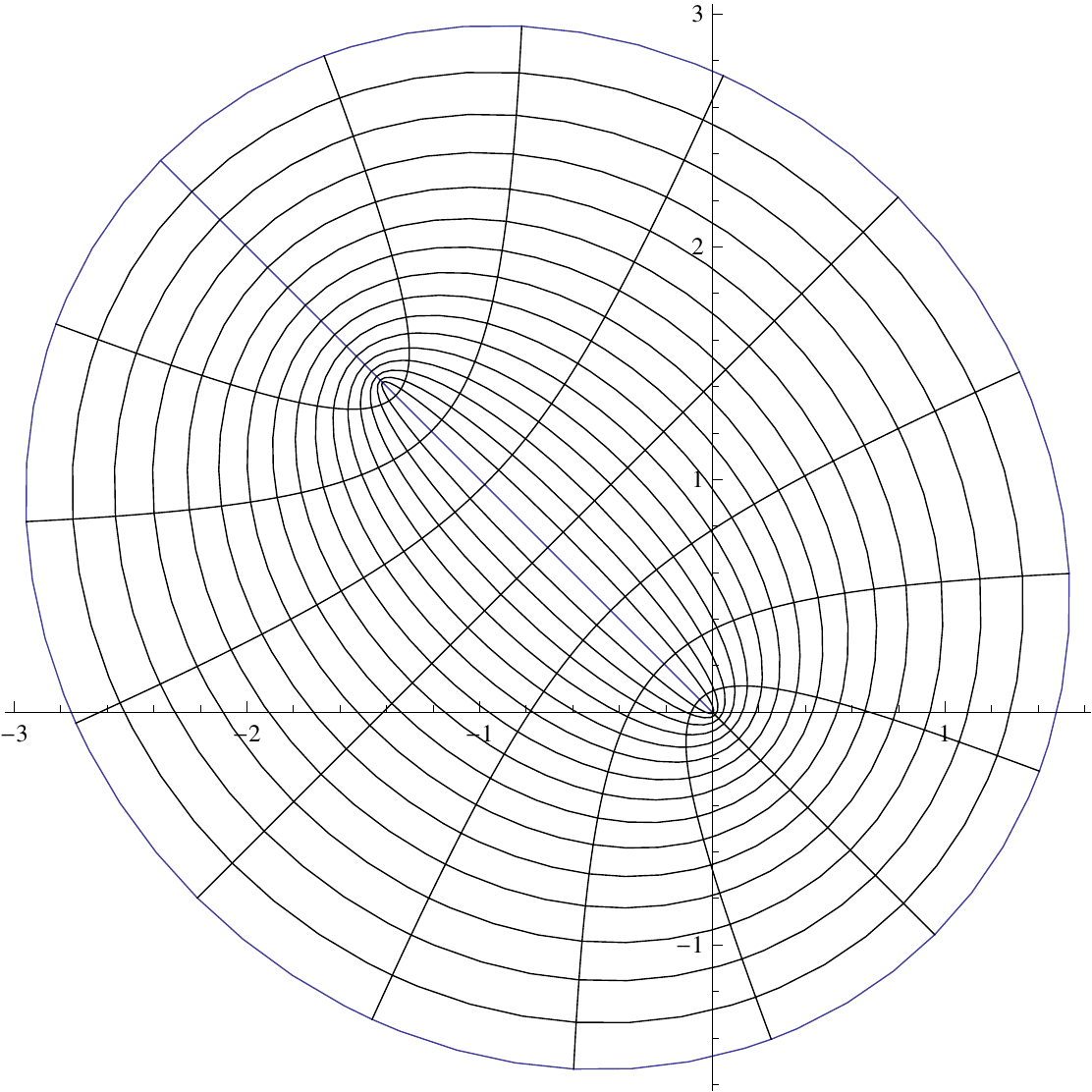}
			& \includegraphics[width=0.45\textwidth]{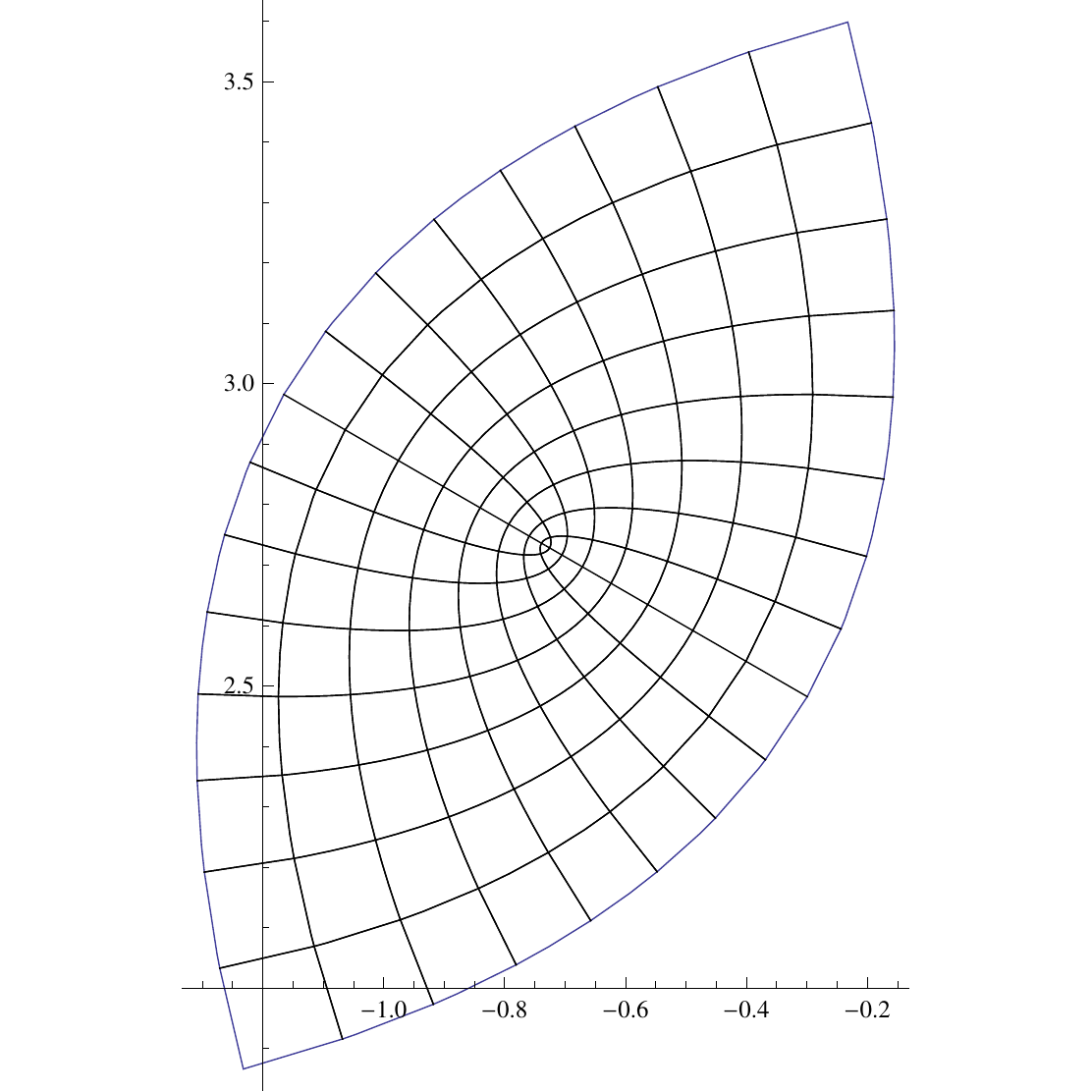} \\
		\includegraphics[width=0.45\textwidth]{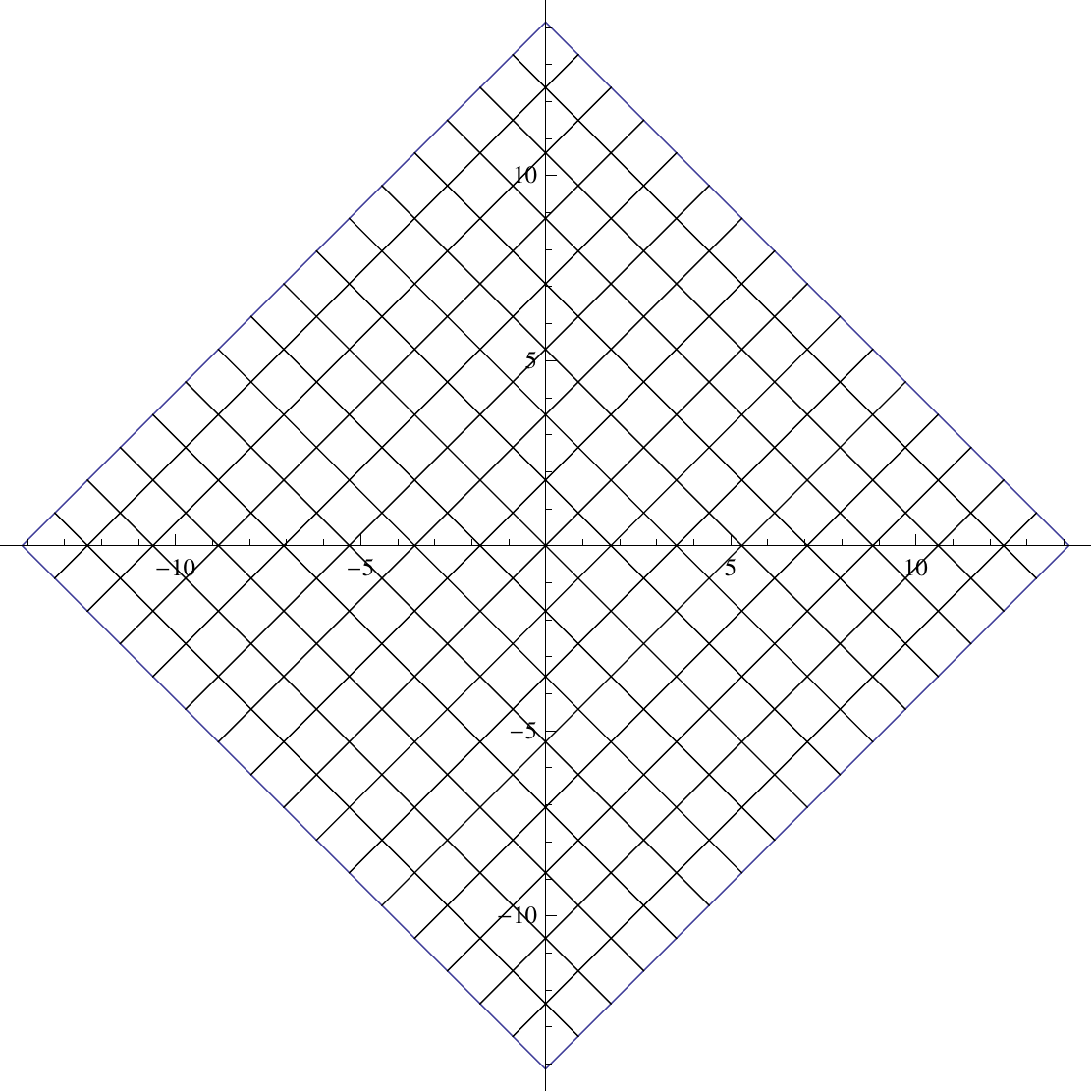}
			& \includegraphics[width=0.45\textwidth]{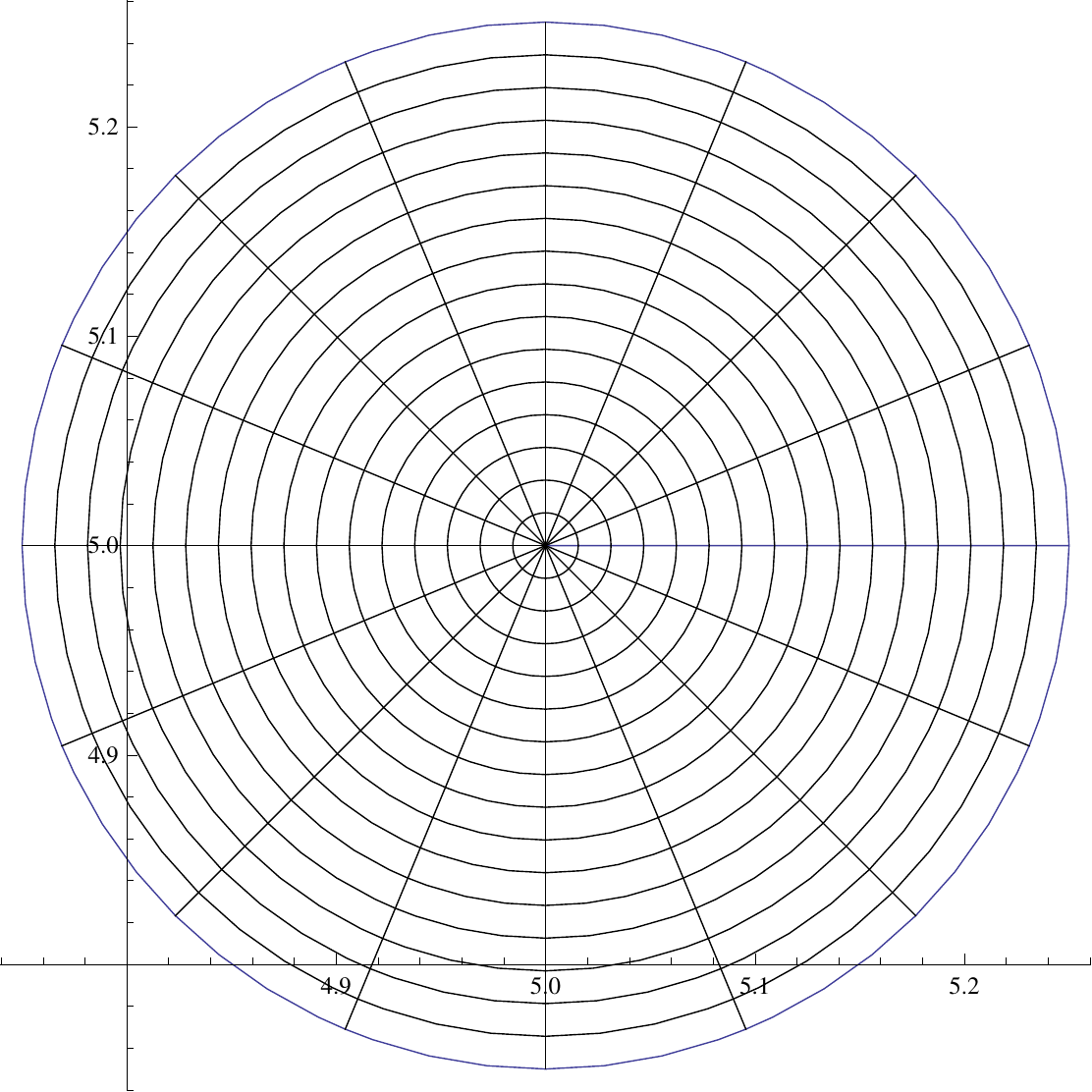}
	\end{tabular}
	\caption{Examples of orthogonal coordinate webs in the Euclidean plane, in some general position.  Clockwise from top left: elliptic-hyperbolic,  parabolic, polar, and Cartesian.}
\label{non-canon-webs}
\end{figure}
\section{Notation and Conventions}
\label{conventions}
In this thesis, we let $\m$ denote an $n$-dimensional pseudo-Riemannian manifold, meaning that $\m$ is a smooth manifold endowed with a non-degenerate covariant symmetric metric tensor $\g$. \newnot{symbol:metric}   The \emph{tangent bundle} of $\m$, $T(\m)$, is defined as the disjoint union of the tangent spaces at all points of $\m$:
\[
T(\m) = \bigsqcup_{p \in \m}{T_p(\m)},
\] \newnot{symbol:tangentspace} \newnot{symbol:tangentbundle} \newnot{symbol:sqcup}
where the tangent space $T_p(\m)$ is the set of all linear maps $X: \m \to \R$ satisfying for all $f,g \in \m$ 
\[
X(fg) = f(p) Xg + g(p)X f,
\]
i.e.~the set of all derivations of $\m$ at $p$.  An element of $T_p(\m)$ is called a tangent vector.  Whence, we also have the \emph{cotangent bundle} of $\m$, defined as the disjoint union of the cotangent spaces at all points of $\m$:
 \[
 T^{*}(\m) = \bigsqcup_{p \in \m}{T_p^{*}(\m)}, 
 \]\newnot{symbol:cotangentspace} \newnot{symbol:cotangentbundle}
where the cotangent space is the dual space of $T_p(\m)$.  Recall that a covector is a real-valued linear functional on a space, i.e.~a linear map $\omega: T_p(\m) \to \R$.  A significant fact which helps to identify $T^{*}_p(\m)$ for $p \in \m$ is then that given a basis for the tangent space, $(E_i)$, the covectors $(\epsilon^i)$ \newnot{symbol:levi-civita} defined by 
\[
\epsilon^i(E_j) = \delta^i{}_j = \left\{
     \begin{array}{c}
     1 \quad \textnormal{if~} i=j,\\
     0 \quad \textnormal{if~} i \neq j
     \end{array}
   \right.
\]   
form a basis for $ T^{*}_p(\m),~p \in \m$.  

 Our results will take place in the Euclidean plane, so in general we will take $\m = \mathbb{E}^2$ where the metric tensor's components then become trivial, e.g.~in Cartesian coordinates the components of the metric tensor are  $g_{ij} = \delta_{ij}$ where $\delta_{ij}$ denotes the familiar Kronecker delta. \newnot{symbol:kronecker}  The remaining content of this section is devoted to the presentation of some specific definitions and theorems relevant to the later sections.
\subsection{Tensors and Index Notation}
Following the presentation in \cite{lee2003intro},  a \emph{multilinear}\footnote{A map is said to be multilinear if it is linear with respect to each of its arguments, e.g.~a multilinear function in two variables is a \emph{bilinear} function.} function 
\[
T: \underbrace{T^*_p(\m) \times \dots \times T^*_p(\m)}_{\textnormal{s copies}}\times \underbrace{T_p(\m) \times \dots \times T_p(\m)}_{\textnormal{r copies}} \to \R
\]
defined at a point $p \in \m$ which maps $r$ covectors and $s$ vectors to a real number is called a \emph{tensor of valence} $(r,s)$, denoted $T^r_s$. \newnot{symbol:tensor}  In the case that $r$ or $s$ is identically zero, we will refer to such a tensor as a covariant $s$-tensor or contravariant $r$-tensor, respectively.  As a few familiar examples, every covector $\omega: T_p(\m) \to \R$ is identified as a covariant $1$-tensor, the dot product on $\R^n$ \newnot{symbol:rn} is a covariant $2$-tensor, as a bilinear form, and the determinant, considered as a function on $n$ vectors, is a covariant $n$-tensor on $\R^n$.  

In Section \ref{section:ttw} we will require all tensor components to be written in polar coordinates, and so we will make use of the usual tensor transformation law to acquire the appropriately defined objects.  For example, the metric tensor defined with respect to a new set of coordinates $(\tilde{q}^1,\dots,\tilde{q}^n)$ is determined with respect to the old coordinates $(q^1,\dots,q^n)$ via
\begin{equation*}
\tilde{g}_{ij} = \Lambda^k{}_i\Lambda^{\ell}{}_j g_{k\ell},
\end{equation*}
where $\tilde{g}_{i j}$ denotes the metric tensor in the new coordinates, and the $\Lambda^k{}_i$ denotes the usual Jacobian matrix defined according to 
\begin{equation}
\Lambda^k{}_i = \frac{\partial q^k}{\partial \tilde{q}^i},
\label{eq:jacobian}
\end{equation}
 relating ``new" and ``old" coordinates. 

As already noted above, we will frequently make use of the Einstein summation convention for tensor indices, by which any repeated upper and lower index implies summation over that index from $1$ to $ n$.  As an example which will prove useful in Section \ref{section:ITKT}, we have the components of a general Killing tensor defined on the Euclidean plane in Cartesian coordinates $(q^1,q^2) = (x,y)$ as
\[
K^{ij} = A^{i j} + 2 \epsilon^{(i}{}_{\ell}B{}^{j)}q^{\ell} + C \epsilon^i{}_m\epsilon^j{}_k q^m q^k,
\]
where 
\[
A^{i j} = \left( \begin{array}{cc}
\beta_1 & \beta_3\\
\beta_3 & \beta_2
\end{array} \right),\quad
B^i = \left(
\begin{array}{c}
	\beta_4 \\
	-\beta_5
\end{array}
	  \right),\quad C = \beta_6,
\]
and $\epsilon^i{}_j = g^{i\ell}\epsilon_{\ell j}$, where $\epsilon_{ij}$ denotes the familiar two-dimensional Levi-Civita symbol\footnote{In particular, this is a pseudotensor which is antisymmetric under the exchange of any two slots:  $\epsilon_{ij} = -\epsilon^{ji}$.  Its values are given by $\epsilon_{12} = \sqrt{|g|},$ where $|g| = \det{g_{ij}}$.\cite{mathworld}}.  Indeed, making use of the Einstein summation convention we can explicitly write out the four components of this general Killing tensor as follows
\begin{align*}
K^{11}&= A^{11} +2 \epsilon^1{}_2 B^1 q^2 + C \epsilon^1{}_2\epsilon^1{}_2 q^2 q^2 \\
			&= \beta_1 + 2\beta_4 y + \beta_6 y^2,\\
K^{12}&= A^{12} + \epsilon^1{}_2 B^2 q^2 + \epsilon^2{}_1 B^1 q^1 + C\epsilon^1{}_2\epsilon^2{}_1 q^1 q^2\\
			&= \beta_3 - \beta_5 y - \beta_4 x - \beta_6 x y,\\
K^{22}&= A^{22} + 2 \epsilon^2{}_1 B^2 q^1 + C\epsilon^2{}_1\epsilon^2{}_1 q^1 q^1\\
			&= \beta_2 + 2 \beta_5 x + \beta_6 x^2.
\end{align*}
We have also made use of round brackets enclosing tensor indices to imply symmetrisation according to 
\[
T_{(ij)} = \tfrac{1}{2}(T_{ij} + T_{ji}),
\]
In an analogous manner, anti-symmetrisation is denoted by the use of square brackets
\[
T_{[i j]} = \tfrac{1}{2}(T_{i j} - T_{ji}).
\]
Lastly we will discuss a few basic tensor operations.  Tensors of the same valence can be added or subtracted by summing corresponding components in the anticipated way.  Multiplication of tensors introduces the tensor product operation $\otimes$ \newnot{symbol:crosstimes} as follows.  If $T \in \mathcal{T}^{r_1}_{s_1}(\m)$ and $S \in \mathcal{T}^{r_2}_{s_2}(\m)$, then their tensor product $T \otimes S$ is a new tensor in $\mathcal{T}^{r_1 + r_2}_{s_1 + s_2}(\m)$ defined by
\begin{align*}
\left(T \otimes S \right) & \left(\omega_1,\dots,\omega_{r_1},\eta_1,\dots,\eta_{r_2};X_1,\dots,X_{s_1},Y_1,\dots,Y_{s_2}\right) \\
	& = T(\omega_1,\dots,\omega_{r_1},X_1,\dots,X_{s_1})S(\eta_1,\dots,\eta_{r_2},Y_1,\dots,Y_{s_2}).
\end{align*}
Of particular importance in Section \ref{section:ITKT} will be the \emph{symmetric tensor product}.  If $T_1 \in \mathcal{T}_{s_1}(\m)$ and $T_2 \in \mathcal{T}_{s_2}(\m),$ then their symmetric tensor product $\odot$ \newnot{symbol:odot} is a symmetric tensor $T_1 \odot T_2$ given by
\begin{align*}
(T_1 \odot T_2)&(X_1,\dots,X_{s_1 + s_2}) \\
&= \frac{1}{(s_1 + s_2)!}\sum_{\sigma \in S_{s_1 + s_2}}{T_1(X_{\sigma{(1)}},\dots,X_{\sigma{(s_1)}})T_2(X_{\sigma{(s_1 + 1)}},\dots,X_{\sigma{(s_1 + s_2)}})},
\end{align*}
where $S_n$ denotes the symmetric group on a set of $n$ elements.  It can be shown that the symmetric tensor product is associative, 
commutative, and bilinear.  An analogous construction is defined by the wedge product $\wedge$ \newnot{symbol:wedge} of tensors, which yields an \emph{antisymmetric} tensor (see e.g.~\cite{lee2003intro}).

We remark that the lowering and raising of tensor indices will be done with the covariant metric $g_{ij}$ and its inverse, the contravariant metric, respectively.  For example, given a vector $\boldsymbol{V}$ with components $V^k$ relative to polar coordinates in the Euclidean plane, its corresponding 1-form $\tilde{V}$ has components $V_i$ given by
\[
V_i = g_{i k}V^k,
\]
where $g_{ik}$ is the metric tensor in polar coordinates, i.e~$\g = \left(\begin{array}{cc}1&0 \\0& r^2\end{array}\right)$. In this case we see that $V_1 = V^1$ and $V_2 = r^2 V^2$.  
\subsection{The Lie Bracket and its Generalisation}
The covariant derivative defines how a vector field changes, i.e.~it is a generalisation of the directional derivative from vector calculus.  It can be defined as a \emph{connection} $\nabla$ \newnot{symbol:cov} on the tangent bundle of a manifold, where a connection provides a prescription for moving vectors from one point to another on the manifold.  The definition of a connection on a manifold does not require that a metric be defined on the manifold, but in the case that a non-degenerate metric \emph{is} defined on a manifold, then there exists a unique torsion-free connection called the \emph{Levi-Civita} connection. This choice of connection is defined by $\nabla \g =0$, i.e.~the choice of connection \emph{with respect to which the covariant derivative of the metric vanishes}.  The components of the Levi-Civita connection are given by the Christoffel symbols, 
\[
\Gamma^i{}_{jk} = \tfrac{1}{2} g^{i \ell}(g_{\ell j,k} + g_{\ell k, j} - g_{j k,\ell}), 
\] \newnot{symbol:christoffel}
so that with this choice the covariant derivative of a covector field $\omega_i$ is given by
\[
\omega_{i;j} = \nabla_j \omega_i = \omega_{i,j} - \Gamma^k{}_{j i }\omega_k,
\]
where a comma followed by an index, e.g.~``,j" denotes partial differentiation with respect to the coordinate $q^j$.  The covariant derivative of a contravariant vector field $V^i$ is given by
\[
V^i{}_{;j} = \nabla_j V^i = V^i{}_{,j} + \Gamma^i{}_{jk} V^k.
\]
This can be generalised for tensor fields of higher valence (see e.g.~\cite{lee2003intro}).  If we work in Cartesian coordinates for $\mathbb{E}^2$ then all components of the Levi-Civita connection vanish, and so all covariant derivatives reduce to simple partial derivatives. 
Accordingly, the covariant derivative gives us a way to ``compare" vectors in neighboring tangent spaces, i.e.~it gives a generalisation of parallel transport (see e.g.~\cite{lee2003intro}).  The \emph{Lie derivative} is then another type of derivative between vector fields that one can define on a manifold.  In particular, it is a canonical way of carrying out such a comparison that does not depend on the existence of a connection. Explicitly, the Lie derivative of a vector field $Y$ along the flow of $X$ is given by
\[
\mathcal{L}_{\boldsymbol{X}}{\boldsymbol{Y}} = X^{\mu}Y^{\nu}{}_{,\mu} - Y^{\mu}X^{\nu}{}_{, \mu} \equiv [X,Y]^{\nu},
\] \newnot{symbol:lie}
where $[~,~]$ denotes the \emph{Lie bracket} which defines a new vector field as the commutator of two vector fields.  It can be shown that the Lie bracket is bilinear, anti-symmetric, and satisfies the Jacobi identity.
 
We now introduce the \emph{Schouten bracket} and discuss a few of its properties.  The Schouten bracket for contravariant tensor fields is a generalisation of the Lie bracket for vector fields. Formally, the Schouten bracket is defined as a real bilinear operator  $[~,~]: T^p(\m) \times T^q(\m) \to T^{p + q-1}$ \newnot{symbol:schouten} whose operation on a pair of contravariant tensor fields $P \in T^p(\m), Q \in T^q(\m)$ is defined by  \cite{schouten1940}
\begin{align}
[P, Q]^{i_{1\dots} i_{p+q-1}} & = \sum_{k=1}^{p}{ P^{(i_{1}{\dots}i_{k-1}\mid\mu \mid i_{k}{\dots}i_{p-1}}} \partial_{\mu}Q^{i_{p}{\dots}i_{p+q-1})} \nonumber \\
&\quad + \sum_{k=1}^{p}{(-1)^{k} P^{[i_{1}{\dots} i_{k-1}\mid \mu \mid i_{k}{\dots}i_{p-1}}}\partial_{\mu}Q^{i_{p}{\dots} i_{p+q-1}]} \nonumber \\
&\quad  - \sum_{\ell=1}^{q}{Q^{(i_{1}{\dots}i_{\ell - 1} \mid \mu \mid i_{\ell}{\dots} i_{q-1}}}\partial_{\mu} P^{i_{q}{\dots}i_{p+q-1})}\nonumber \\
&\quad - \sum_{\ell = 1}^{q}{(-1)^{pq + p + q + \ell}Q^{[i_{1}{\dots}i_{\ell - 1}\mid \mu \mid i_{\ell}{\dots} i_{q-1}}}\partial_{\mu} P^{i_{q}{\dots} i_{p+q-1}]},
\label{eq:schouten-bracket}
\end{align}
where any index located between vertical bars means that the index is \emph{excluded} from any anti-symmetrisation or symmetrisation. If we take $P$ and $Q$ to be tensors of type $(p,0)$ and $(q,0)$, respectively, then the Schouten bracket \cite{schouten1940} of $P$ and $Q$ yields a tensor, $[P,Q]$, of type $(p + q - 1,0)$.  As an example, if we let $P$ and $Q$ be tensors with $p=1$, so that $P$ is now a vector field, and allow $q$ to remain arbitrary, then by definition of the Schouten bracket we have that 
\[
[P, Q]^{k_1\dots k_q} = \left(\mathcal{L}_P Q\right)^{k_1\dots k_q}.
\]
And so indeed, when $q = 1$, we see that we obtain the Lie bracket of the vector fields $P$ and $Q$. 

In general, a \emph{Killing tensor field of valence p} defined on $(\m,\g)$ is a symmetric $(p,0)$ (contravariant) tensor field $\boldsymbol{K}$ satisfying the generalised Killing tensor equation given by
\begin{equation}
[\g,\boldsymbol{K}] = 0,
\label{killme}
\end{equation}
where $[~,~]$ denotes the Schouten bracket.  Since the Schouten bracket is $\R-$bilinear, the set of solutions to the system of overdetermined PDEs \newnot{symbol:pde} given by (\ref{killme}) forms a vector space over $\R$.  We denote the vector space of valence $p$ Killing tensor fields defined on $\m$ by $\k^p(\m)$. \newnot{symbol:vspace}
\subsection{Poisson Manifolds}
Consider the commutator of respective Hamiltonian flows in the classical Poisson bracket (see Section \ref{section:hamiltonian}).  First we define a \emph{Poisson bivector} as a (2,0)-tensor, 
\[
P_0 = P_0^{i j} \frac{\partial}{\partial q^i} \wedge \frac{\partial}{\partial p_j} 
\]\newnot{symbol:pos-bivec}
 defined on $\m$ which satisfies
\[
[P_0, P_0] = 0,
\]
where $[~,~]$ denotes the Schouten bracket.  Any smooth manifold which admits a Poisson bivector, or the general Poisson bracket defined for smooth functions $f,g: \m \to \R$ via
\[
\{f, g\}_{P_0} = P^{ij}_0 \frac{\partial f}{\partial q^i} \frac{\partial g}{\partial p_j}, 
\]\newnot{symbol:poisson-bracket}
is called a \emph{Poisson manifold}.  If we identify $P_0$ as the \emph{canonical} Poisson bivector, then the previous equation becomes
\[
\{f, g\}_{P_0} = \sum_{i}{\frac{\partial f}{\partial p_i} \frac{\partial g}{\partial q^i} - \frac{\partial g}{\partial p_i}\frac{\partial f}{\partial q^i}}.
\]
Following \cite{lee2003intro}, we can construct a smooth vector field on $\m$ for some smooth function $H: T^*(\m) \to \R$ via
\[
\boldsymbol{X}_H = [P_0, H], 
\]\newnot{symbol:ham-flow}
where the flow of this Hamiltonian vector field is called its \emph{Hamiltonian flow}.  With respect to (local) coordinates $x^\alpha = (q^i,p_i)$, the coordinates of the flow are then given as
\[
X_H^i = P^{i \alpha}\frac{\partial H}{\partial x^\alpha}.
\]
A Poisson manifold endowed with a smooth function $H$ defines a \emph{Hamiltonian system}, where the function $H$ is aptly denoted as the \emph{Hamiltonian} of the system.  The integral curves of the Hamiltonian flow are then called the \emph{orbits} or \emph{trajectories} of the system.

The Lie bracket and Poisson bracket can be shown to admit the following relationship:
\begin{prop}[cf. \cite{lee2003intro}]
Let $F, G: \m \to \R$ be smooth functions, where $\m$ is a Poisson manifold.  Denote their respective Hamiltonian flows by $\boldsymbol{X}_F$ and $\boldsymbol{X}_G$.  We then have that
\[
[\boldsymbol{X}_F, \boldsymbol{X}_G] = - \boldsymbol{X}_{\{F,G\}},
\]
where $[~,~]$ is the Lie bracket from the previous section, and $\{~,~\}$ is the Poisson bracket.
\end{prop}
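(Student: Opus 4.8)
The plan is to reduce the claimed identity of vector fields to the Jacobi identity for the Poisson bracket by testing both sides against an arbitrary smooth function. The first step is to record how a Hamiltonian vector field acts as a derivation on smooth functions. From the coordinate expression $X_H^\alpha = P^{\alpha\beta}\,\partial H/\partial x^\beta$ given above and the definition of the Poisson bracket, I would verify that for every smooth $h\colon \m \to \R$ one has $\boldsymbol{X}_H(h) = \{h, H\}_{P_0}$, where the order of the arguments (hence the eventual global sign) is dictated by the sign conventions fixed above for $P_0$ and for the bracket. This identification is exactly what converts a statement about flows into a purely algebraic statement about the bracket.

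Next, since the Lie bracket of two vector fields is by definition their commutator as derivations, I would evaluate $[\boldsymbol{X}_F, \boldsymbol{X}_G]$ on an arbitrary test function $h$, applying the first step twice:
\[
[\boldsymbol{X}_F, \boldsymbol{X}_G](h) = \boldsymbol{X}_F\bigl(\boldsymbol{X}_G(h)\bigr) - \boldsymbol{X}_G\bigl(\boldsymbol{X}_F(h)\bigr) = \{\{h, G\}, F\} - \{\{h, F\}, G\}.
\]
The third step is to simplify the right-hand side using only the antisymmetry and bilinearity of the bracket together with the Jacobi identity
\[
\{F, \{G, h\}\} + \{G, \{h, F\}\} + \{h, \{F, G\}\} = 0,
\]
which rearranges the two nested brackets into the single term $\{\{F, G\}, h\}$. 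Finally, applying the first step once more with $H = \{F, G\}$ gives $\boldsymbol{X}_{\{F,G\}}(h) = \{h, \{F, G\}\} = -\{\{F, G\}, h\}$, so that $[\boldsymbol{X}_F, \boldsymbol{X}_G](h) = -\boldsymbol{X}_{\{F,G\}}(h)$; since $h$ is arbitrary, the two vector fields coincide.

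The main obstacle is not any single computation but the careful bookkeeping needed to land on the correct global sign, and --- more substantively --- the invocation of the Jacobi identity, which is the only nontrivial input. I would emphasize that this identity is not automatic: it is precisely the integrability condition $[P_0, P_0] = 0$ imposed on the Poisson bivector in the definition of a Poisson manifold, re-expressed in terms of the induced bracket. Thus the proof ultimately rests on that defining property of $P_0$, and I would either cite this equivalence directly or, if a self-contained argument is desired, first derive the Jacobi identity for $\{\,\cdot\,,\cdot\,\}$ from $[P_0,P_0]=0$ using the graded Jacobi identity for the Schouten bracket before carrying out the three steps above.
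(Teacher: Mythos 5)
The paper does not actually prove this proposition: it states it with ``cf.~\cite{lee2003intro}'' and remarks that the proof is provided there, so there is no internal argument to compare yours against. Your proof is the standard one (and essentially the cited source's): realise each Hamiltonian vector field as a derivation on functions, expand the Lie bracket as a commutator of derivations applied to a test function $h$, and collapse the two nested Poisson brackets using the Jacobi identity. The computation is correct as you run it, and your closing observation is a genuine point rather than a pedantic one: the paper defines a Poisson manifold only through $[P_0,P_0]=0$ and never records the Jacobi identity for $\{~,~\}_{P_0}$, so the equivalence of these two statements (via the graded Jacobi identity for the Schouten bracket) is a needed input to the proof.

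The one item you should state more carefully is your step 1, where you claim the order $\boldsymbol{X}_H(h) = \{h,H\}_{P_0}$ is ``dictated by the sign conventions fixed above.'' The paper is in fact inconsistent on precisely this point, and the sign in the proposition hinges on it. From the bivector formulas ($X_H^{\alpha} = P^{\alpha\beta}\,\partial H/\partial x^{\beta}$ contracted against $\{f,g\} = P^{\alpha\beta}\,\partial_{\alpha}f\,\partial_{\beta}g$) one indeed obtains $\boldsymbol{X}_H(h) = \{h,H\}$, and then your Jacobi computation yields the stated minus sign. But the paper's explicit canonical formulas --- the displayed bracket $\{f,g\}_{P_0} = \sum_i\bigl(\tfrac{\partial f}{\partial p_i}\tfrac{\partial g}{\partial q^i} - \tfrac{\partial g}{\partial p_i}\tfrac{\partial f}{\partial q^i}\bigr)$, the displayed vector field (\ref{eq:hamiltonian-flow}), and equation (\ref{eq:yo}) --- fix the opposite convention $\boldsymbol{X}_H(F) = \{H,F\}_{P_0}$, and rerunning your argument with that convention produces $[\boldsymbol{X}_F,\boldsymbol{X}_G] = +\boldsymbol{X}_{\{F,G\}}$. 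A quick check: taking $F = \tfrac{1}{2}(q^1)^2$ and $G = \tfrac{1}{2}(p_1)^2$ with the paper's displayed formulas, both $[\boldsymbol{X}_F,\boldsymbol{X}_G]$ and $\boldsymbol{X}_{\{F,G\}}$ equal $p_1\,\tfrac{\partial}{\partial p_1} - q^1\,\tfrac{\partial}{\partial q^1}$, so the minus-sign identity fails there. In other words, the minus sign is correct under Lee's convention --- which is the one you actually adopt --- while the paper's own equation (\ref{eq:yo}) would force a plus sign. Your proof is complete once you say explicitly which convention you are using; as written, the appeal to ``the conventions fixed above'' papers over a sign discrepancy in the source rather than resolving it.
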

A proof is provided in \cite{lee2003intro}.




\chapter{Theoretical Overview}
\label{chapter:theory}

\ifpdf
\graphicspath{{Chapter2/Chapter2Figs/PNG/}{Chapter2/Chapter2Figs/PDF/}{Chapter2/Chapter2Figs/}}
\else
\graphicspath{{Chapter2/Chapter2Figs/EPS/}{Chapter2/Chapter2Figs/}}
\fi

 In this chapter we provide the reader with the mathematical framework on which this study is built.  This begins with a brief review of the Hamiltonian formulation of classical mechanics, which is the context of the problems we are interested in studying.  We define and discuss what it means to \emph{solve} a Hamiltonian system, which motivates our review of the Hamilton-Jacobi (HJ) \newnot{symbol:HJ} theory of separability as a powerful method for solving particular Hamiltonian systems.  The HJ theory will allow us to explicitly define what is meant in saying that a system is \emph{separable} with respect to a system of coordinates on a smooth manifold, where we focus on results relevant to the Euclidean plane.  In this context, we present theorems by Jacobi, Eisenhart, Liouville, and Benenti which establish the link between \emph{orthogonal separation of variables} and \emph{Killing two-tensors} defined on $\mathbb{E}^2$.  This will motivate the final section, where we present the reader with a review of the invariant theory of Killing tensors.  One goal in this final section is to introduce the method of moving frames by using a \emph{recursive} version of it to derive the fundamental invariants of the vector space of Killing tensors of valence two under the action of the isometry group on $\mathbb{E}^2$, $SE(2)$.
\section{Hamiltonian Mechanics}
\label{section:hamiltonian}
In the \emph{Lagrangian formulation} of classical mechanics the state of a mechanical system defined on an $n$-dimensional manifold $\m$ is completely determined once we specify its generalised position and velocity coordinates.  In the most general formulation, this is accomplished by defining a function, $L(q^i,\dot{q}^i)$, \newnot{symbol:lagrangian} of the position coordinates $q^i$ and their velocities $\dot{q}^i$, where the dot denotes a derivative with respect to time, denoted by $t$.  This function $L$ is called the \emph{Lagrangian} of the system and essentially defines a function from the tangent bundle on a manifold $\m$, denoted by $T(\m)$, to $\R$.  

The evolution of a system can then be determined by employing \emph{Hamilton's principle} or the \emph{principle of least action}.  We take the functional (see e.g.~\cite{landau1960})
\begin{equation}
S[q^i] = \int_{t_i}^{t_f}{L(q^i,\dot{q}^i) dt},
\label{eq:action}
\end{equation} \newnot{symbol:action}
to define the \emph{action} of the system, and from it compute the functional derivative of $S$ with respect to each of the $q^i$ coordinates.  By Hamilton's principle we have that the coordinates evolve in such a way (as functions of $t$) for which $S$ remains \emph{stationary} with respect to variations in $q^i$ that leave the initial and final time values unaffected.  In other words, the action should satisfy $\updelta S =0$, where $\updelta$ represents the variation with respect to $q^i$.  Effecting this computation leads to the well-known \emph{Euler-Lagrange equations}, or the \emph{equations of motion}, for a given system, namely
\begin{equation}
\frac{\partial L}{\partial q^i} - \frac{d}{dt}\left(\frac{\partial L}{\partial \dot{q}^i}\right) = 0.
\label{eq:euler-lagrange}
\end{equation} 

In the \emph{Hamiltonian formulation} of classical mechanics, we consider a slightly different set of generalised coordinates.  In particular, we derive the canonical momenta coordinates $p_i$ from the Lagrangian $L$ via 
\begin{equation}
p_i = \frac{\partial L}{\partial \dot{q}^i}.
\label{eq:canonical-momenta}
\end{equation}
In this way, the canonical momenta coordinates can be taken as defining a covector field in the cotangent bundle $T^{*}(\m)$.  The relationship between the Lagrangian and Hamiltonian formulations is then given by the \emph{Legendre transformation} between their respective coordinates, namely a smooth function $H: T^{*}(\m) \to \R$ defined by
\begin{equation}
H(p_i,q^i) = p_i \dot{q}^i - L(q^i,\dot{q}^i),
\label{eq:gen-ham}
\end{equation}
where, for the purpose of this thesis, we have assumed that $H$ has no explicit time dependence.  This function $H$ is called the \emph{Hamiltonian} of the system.  Through equation (\ref{eq:canonical-momenta}) we see that the time derivatives appearing in (\ref{eq:gen-ham}) can be taken as functions of the generalised position and momenta coordinates.  Indeed, substituting (\ref{eq:canonical-momenta}) into the Euler-Lagrange equations (\ref{eq:euler-lagrange}) and taking the partial derivative of (\ref{eq:gen-ham}) with respect to $q^i$ yields the evolution of our momenta coordinates,
\begin{equation}
\dot{p}_i = -\frac{\partial H}{\partial q^i}.
\label{eq:canonical-ham-mom}
\end{equation}
Similarly, we can take a partial derivative of (\ref{eq:gen-ham}) with respect to $p_i$ to obtain the evolution equation of our position coordinates, namely
\begin{equation}
\dot{q}^i = \frac{\partial H}{\partial p_i}.
\label{eq:canonical-ham-pos}
\end{equation}
Together, the equations (\ref{eq:canonical-ham-mom}) and (\ref{eq:canonical-ham-pos}) are called \emph{Hamilton's canonical equations}.  An important observation to make is that Hamilton's equations form a set of $2n$ \emph{first-order} differential equations for the $2n$ unknown functions $q^i(t)$ and $p_i(t)$ of the system.  This system of first-order differential equations thus replaces the $n$ \emph{second-order} differential equations (\ref{eq:euler-lagrange}) admitted by the Lagrangian treatment.  

Hamilton's equations define a vector field on the cotangent bundle $T^{*}(\m)$. The $2n$-dimensional \emph{phase space} of a Hamiltonian, admitted by generalised position coordinates $q^i$ and generalised momenta coordinates $p_i$, is an example of a \emph{symplectic manifold}, which is defined as a smooth manifold with a smooth, non-degenerate, closed $2$-form.   In particular, we can construct such a 2-form via
\begin{equation}
\omega = \sum_{i = 1}^{n}{\textnormal{d}p_i \wedge \textnormal{d}q^i},
\label{eq:canon-form}
\end{equation}
where $\textnormal{d}$ denotes the exterior derivative and $\wedge$ denotes the exterior, or wedge, product. This is the \emph{canonical symplectic 2-form}.  A fundamental result in the theory of symplectic structures, due to Darboux (see e.g.~\cite{lee2003intro}), states that for any $2n$-dimensional symplectic manifold with symplectic form $\omega$
there exist local canonical coordinates such that $\omega$ assumes the canonical form (\ref{eq:canon-form}).  Such coordinates are called \emph{Darboux coordinates}, or \emph{canonical coordinates}.  Further, the canonical symplectic 2-form $\omega$ induces a pairing between the tangent and cotangent bundle, namely a bivector field on the manifold called the canonical \emph{Poisson bivector}, which is  obtained by taking the inverse of the canonical symplectic form
\begin{equation}
\omega^{-1} =\sum_{i =1}^{n}{ \frac{\partial}{\partial q^i} \wedge \frac{\partial }{\partial p_i} \equiv P_0.}
\label{eq:poisson-vec}
\end{equation}
Using the Poisson bivector, we can define the Poisson bracket for two smooth functions $f$ and $g$ defined on the cotangent bundle, by
\begin{align*}
\{f, g\}_{P_0} &= P^{ij}_0 \frac{\partial f}{\partial q^i} \frac{\partial g}{\partial p_j}, \\
				 &= \sum_{i}{\frac{\partial f}{\partial p_i} \frac{\partial g}{\partial q^i} - \frac{\partial g}{\partial p_i}\frac{\partial f}{\partial q^i}}. 
\end{align*}
Given a symplectic manifold $\m$ with a canonical Poisson bivector $P_0$ (\ref{eq:poisson-vec}), if we have a smooth function $H:T^*(\m) \to \R$, then
\begin{align}
\boldsymbol{X}_H &= [P_0, H] \nonumber \\
				&=\sum_i{\frac{\partial H}{\partial p_i} \frac{\partial}{\partial q^i} - \frac{\partial H}{\partial q^i}\frac{\partial}{\partial p_i}}, \quad \textnormal{(in local coordinates)}
\label{eq:hamiltonian-flow}
\end{align}
defines the \emph{Hamiltonian vector field}, where $[~,~]$ denotes the Schouten bracket defined in Section \ref{conventions}.  A symplectic manifold $\m$ with a Hamiltonian vector field defined with respect to a Poisson bivector, and smooth function $H:T^{*}(\m) \to \R$, namely $\{\m, \boldsymbol{X}_H, H\}$, comprise a \emph{classical Hamiltonian system}.  This development also yields the condition for a smooth function $F:T^*(\m) \to \R$ to be an \emph{integral of motion}.  Indeed, to remain constant during the evolution  of the system the Poisson bracket of a time-independent $F$ with respect to the Hamiltonian $H$ must vanish,
\begin{equation}
\left\{H,F\right\}_{P_0} = \boldsymbol{X}_H(F) = 0.
\label{eq:yo}
\end{equation}
The proof of this statement is straightforward upon considering the total time derivative of $F = F(q^i,p_i;t)$,
\[
\frac{d F}{dt} = \frac{\partial F}{\partial t} + \sum_{i}{\frac{\partial F}{\partial q^i} \dot{q}^i + \frac{\partial F}{\partial p_i}\dot{p}_i}.
\]
By Hamilton's equations (\ref{eq:canonical-ham-pos}) and (\ref{eq:canonical-ham-mom}) this takes the form
\begin{equation}
\frac{d F}{dt} = \frac{\partial F}{\partial t} + \{H,F\}_{P_0},
\label{eq:yoyo}
\end{equation}
where 
\[
\{H, F\}_{P_0} = \sum_{i}{\frac{\partial H}{\partial p_i} \frac{\partial F}{\partial q^i} - \frac{\partial F}{\partial p_i}\frac{\partial H}{\partial q^i}},
\]
is the Poisson bracket of $H$ and $F$ with respect to a canonical Poisson bivector $P_0$. In the case that $F$ is time-independent, (\ref{eq:yo}) follows immediately from (\ref{eq:yoyo}). 

Hamiltonian systems play an essential role in classical and quantum mechanics and the theory of differential equations.  They can be used to describe simple Newtonian systems such as the motion of a pendulum, or a heavy symmetrical top with a fixed lower point, and other dynamical systems such as planetary orbits arising in celestial mechanics.  In this thesis, we will focus on a class of Hamiltonian systems whose Hamiltonian functions take the form of a \emph{natural Hamiltonian}.  To establish this distinction we will first note what it means in the case of the Lagrangian formalism, as this readily motivates the interpretation in the Hamiltonian formalism. 

In the context of the Lagrangian formalism, a \emph{natural Lagrangian} is one which takes the form
\begin{equation}
L(q^i,\dot{q}^i) = \tfrac{1}{2} g_{i j} \dot{q}^i\dot{q}^j - V(q^i),
\label{eq:natural-lagrangian}
\end{equation}
where $g_{i j}$ denotes the covariant components of the associated metric tensor $\g$, which is a function of the coordinates $q^i$, and $V(q^i)$ represents the \emph{potential energy}, which characterises the interaction between the particles in a system.   We remark that in physical terms the Lagrangian is defined as the difference between kinetic and potential energy of a mechanical system, which explains the aforementioned terminology.  With this identification, the first term of (\ref{eq:natural-lagrangian}) can be interpreted as the kinetic energy of the system.  For the natural Lagrangian (\ref{eq:natural-lagrangian}) the Euler-Lagrange equations (\ref{eq:euler-lagrange}) written in covariant form appear as
\begin{equation}
\ddot{q}^i + \Gamma^i_{j k} \dot{q}^j \dot{q}^k = g^{i j} V_{, j},
\label{eq:geo}
\end{equation}
where $\Gamma^i_{j k}$ are the components of the Levi-Civita connection discussed in Section \ref{conventions}.  In the case that we have a vanishing potential, i.e.~a system free of particle interactions, (\ref{eq:geo}) takes the form of the \emph{geodesic equation}, which is very familiar: the geodesic equation models the evolution of a system with $n$ degrees of freedom and no external forces.  Through the Legendre transformation (\ref{eq:gen-ham}) we then obtain the equivalent \emph{natural Hamiltonian}, which assumes the form
\begin{equation}
H(q^i,p_i) = \tfrac{1}{2} g^{i j} p_i p_j + V(q^i),
\label{eq:natural-hamiltonian}
\end{equation} 
which our discussion above makes it clear that the Hamiltonian can be interpreted as the \emph{total energy} of our system, i.e.~the sum of kinetic and potential energy, respectively.

Neither the Lagrangian nor the Hamiltonian formulation present a simple system of equations to solve.  Even though the Hamiltonian formulation reduces the Hamiltonian system to a set of first-order equations, the fact remains that we still must solve a coupled set of non-linear ordinary differential equations.  This prompts the need for methods by which we can complete the complicated task of integrating the equations of motion. In the case of systems with two degrees of freedom, we can focus on the method of separation of variables in the context of the Hamilton-Jacobi theory, which will allow us to link integrability of the equations of motion with the existence of an associated valence two Killing tensor.  This link is in part a result of the Arnold-Liouville theorem \cite{arnoldbook} which states that a Hamiltonian system with $n$ degrees of freedom will be integrable by quadratures if it admits $n$ functionally independent first integrals of motion in involution with the Hamiltonian.  
\section{Hamilton-Jacobi Theory}
\label{section:hamilton-jacobi}
The strategy behind the Hamilton-Jacobi theory is to rewrite Hamilton's equations, (\ref{eq:canonical-ham-mom}) and (\ref{eq:canonical-ham-pos}) in a revealing form through a choice of a new system of coordinates (see e.g.~\cite{landau1960,arnoldbook}).  In particular, we will require a \emph{canonical transformation} of the coordinates, i.e.~a transformation that leaves Hamilton's equations invariant.  Indeed, a canonical transformation would allow us to  obtain an equivalent formulation of classical mechanics just as in the previous section.  Explicitly, if we change to a new system of coordinates $(Q^i,P_i)$ through a canonical transformation then the new Hamiltonian $K$ must satisfy
\begin{equation*}
\dot{Q}^i = \frac{\partial  K}{\partial P_i}, \quad \dot{P}_i = -\frac{\partial K}{\partial Q^i}.
\end{equation*}
To this end we can consider the action (\ref{eq:action}) as a function of the coordinates, i.e. independent of time.  We will then study the change in the action $S$, defined in the previous section, from one path to a neighbouring path, which is readily calculated as
\begin{equation*}
\updelta S = \left[\frac{\partial L}{\partial \dot{q}} \updelta q \right]^{t_f}_{t_i} + \int_{t_i}^{t_f}{\left(\frac{\partial L}{\partial q} - \frac{d}{dt}\frac{\partial L}{\partial \dot{q}}\right) \updelta q~dt}.
\end{equation*}

Since the path of the system satisfies the Euler-Lagrange equations (\ref{eq:euler-lagrange}), then the integral in $\updelta S$ vanishes.  If we consider the canonical momenta coordinates (\ref{eq:canonical-momenta}) then we see that the above equation gives us, for an arbitrary number of degrees of freedom,
\begin{equation}
\frac{\partial S}{\partial q^i} = p_i.
\label{eq:action-wrtq}
\end{equation}
From the definition of the action (\ref{eq:action}) we also have the necessary link between this alternative formulation and the Lagrangian formulation.  In particular we see that the time-derivative of the action $S(q^i,t)$ is equal to the Lagrangian of Section \ref{section:hamiltonian}.  Namely,
\begin{equation}
 dS/dt = L.
 \label{superimportant}
\end{equation}
Combining this with (\ref{eq:action-wrtq}) then yields the following relationship,
\begin{align}
\frac{d S(q^i,t)}{dt} &= \frac{\partial S}{\partial t} + \frac{\partial S}{\partial q^i}\frac{d q^i}{dt} \nonumber \\
											&=\frac{\partial S}{\partial t} + p_i \dot{q}^i.
\label{eq:total-time-deriv-action}
\end{align}

Now consider that we seek a canonical transformation.  In this case by imposing that these new coordinates $(Q^i,P_i)$ also satisfy the principle of least action we will derive an analogous relationship which must hold with respect to the new Hamiltonian.  Thus we can take the difference between the new action and the original action, which we write as
\begin{equation*}
p_i \dot{q}^i - P_i \dot{Q}^i=H - K + \frac{d F}{dt},
\end{equation*}
where $F = F(t,q^i,p_i,Q^i,P_i)$ defines a \emph{generating function} that characterises the canonical transformation used to define the new coordinates.  We can determine the appropriate Legendre transformation in the  above equation by first taking our new generating function to be $G = F + P_i q^i$, so that it depends on the old position coordinates and the new momenta coordinates.  With respect to the \emph{canonically conjugate} quantities $q^i$ and $P_i$, we then arrive at the following system of equations
\begin{equation}
p_i = \frac{\partial G}{\partial q_i}, \quad Q_i = \frac{\partial G}{\partial P_i}, \quad K = H + \frac{\partial G}{\partial t}.
\label{eq:gen-function-ham-eqns}
\end{equation}
The generating function will be chosen in such a way that the new Hamiltonian $K$ is identically zero\footnote{This choice refers to a system which has thus been \emph{linearised}.} (see e.g.~\cite{landau1960}).  In this case the above (final) equation (\ref{eq:gen-function-ham-eqns}) yields (replacing with $G = S$)
\begin{equation}
\frac{\partial S}{\partial t} + H\left(q^i, \frac{\partial S}{\partial q^i}\right) = 0,
\label{eq:hamilton-jacobi-time}
\end{equation}
which one can take as the \emph{Hamilton-Jacobi equation}, though in this thesis, we will reserve this term for another version of the equation.

We proceed a bit more in the derivation upon noting that the Hamilton equations (\ref{eq:canonical-ham-mom}) and (\ref{eq:canonical-ham-pos}) in the new coordinates become trivial, i.e. $\dot{Q}^i = \dot{P}_i = 0,$ so that $Q^i = d^i$ and $P_i = c_i$, where $d^i$ and $c_i$ represent an $n$-tuple of constants.  Thus, we can write
\begin{equation*}
S = S(t,q^i;c_i), \quad p_i = \frac{\partial S}{\partial q^i}, \quad d^i = \frac{\partial S}{\partial c_i}.
\end{equation*}
Since the natural Hamiltonian (\ref{eq:natural-hamiltonian}) does not have any explicit time dependence, then we can let
\begin{equation*}
S(t,q^i;c_i) = S_0(t;c_i) + W(q^i,c_i),
\end{equation*}
which makes it clear that the time dependence is trivial. In particular we can choose $S_0(t;E) = -E t$, which corresponds to the \emph{separation of the time variable} since $E$ only represents a constant. With this choice equation (\ref{eq:hamilton-jacobi-time}) then takes the form of what we will refer to as the \emph{Hamilton-Jacobi} equation
\begin{equation}
H\left(q^i,\frac{\partial W}{\partial q^i} \right) = E.
\label{eq:hamilton-jacobi}
\end{equation}
A \emph{complete integral} of the HJ equation will then be of the form $W = W(q^i;c_i)$ which satisfies the non-degeneracy condition
\begin{equation}
\det\left(\frac{\partial^2 W}{\partial q^i \partial c_j} \right)_{n\times n} \neq 0.
\label{eq:non-degeneracy-hamilton-jacobi}
\end{equation}

By Jacobi's theorem once a complete integral is known then one can compute the motion of the system explicitly, i.e.~determine the trajectories of the Hamiltonian system.  And so  finding a complete integral to the Hamilton-Jacobi equation affords us an alternative way to derive a solution to the associated Hamiltonian system. 
\begin{theo}[Jacobi]
Let $W(q^i;c_i)$ be a complete integral of the Hamilton-Jacobi equation (\ref{eq:hamilton-jacobi}) and let $t_0$ and $d^1,\dots,d^{n-1}$ be arbitrary constants.  Then the functions 
\begin{equation*}
q^i = q^i(t;c_i,d^i)
\end{equation*}
defined by the relations
\begin{equation*}
t - t_0 = \frac{\partial W}{\partial E}, \quad d^i = \frac{\partial W}{\partial c_i}, \quad i = 1,\dots, n-1
\end{equation*}
together with the functions
\[
p_i = \frac{\partial W}{\partial q^i}, \quad i = 1,\dots, n,
\]
form a general solution of the canonical Hamilton equations, (\ref{eq:canonical-ham-mom}) and (\ref{eq:canonical-ham-pos}).
\label{theo:Jacobi}
\end{theo}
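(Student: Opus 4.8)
The plan is to verify directly that the functions $q^i(t)$ and $p_i(t)$ defined by the stated relations satisfy Hamilton's canonical equations, treating $W$ as a generating function whose completeness --- the non-degeneracy condition (\ref{eq:non-degeneracy-hamilton-jacobi}) --- is exactly what makes the implicit definitions solvable. Throughout I regard $E$ as one of the $n$ constants of the complete integral, say the $n$-th one, so that the relations $d^i = \partial W/\partial c_i$ for $i=1,\dots,n-1$ together with $t - t_0 = \partial W/\partial E$ constitute $n$ equations implicitly determining the $q^i$ as functions of $t$ and the constants. The invertibility needed to solve for the $q^i$ via the implicit function theorem, and later to pin down the velocities uniquely, comes in both cases from $\det(\partial^2 W/\partial q^i \partial c_j) \neq 0$.

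First I would differentiate the Hamilton-Jacobi equation $H(q^j, \partial W/\partial q^j) = E$ with respect to the constants, holding the $q^j$ fixed. Since only the momenta $p_j = \partial W/\partial q^j$ depend on the constants, the chain rule yields $\sum_j (\partial H/\partial p_j)(\partial^2 W/\partial q^j \partial c_\alpha) = \delta_{\alpha n}$; that is, the expression vanishes when differentiating in $c_i$ for $i=1,\dots,n-1$ and equals $1$ when differentiating in the $E$-slot. Next I would differentiate the defining relations \emph{along the trajectory} with respect to $t$: from $d^i = \partial W/\partial c_i$ being constant I obtain $\sum_j (\partial^2 W/\partial q^j \partial c_i)\dot{q}^j = 0$ for $i=1,\dots,n-1$, and from $t - t_0 = \partial W/\partial E$ I obtain $\sum_j (\partial^2 W/\partial q^j \partial E)\dot{q}^j = 1$.

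The crux is then to observe that the two vectors $(\partial H/\partial p_j)_j$ and $(\dot{q}^j)_j$ satisfy the identical $n\times n$ linear system, with coefficient matrix $\partial^2 W/\partial q^j \partial c_\alpha$ (the $E$-slot included) and right-hand side $\delta_{\alpha n}$. By the non-degeneracy condition this matrix is invertible, so the solution is unique and $\dot{q}^i = \partial H/\partial p_i$, establishing (\ref{eq:canonical-ham-pos}). For the conjugate equation I would differentiate $p_i = \partial W/\partial q^i$ in $t$, giving $\dot{p}_i = \sum_j (\partial^2 W/\partial q^i \partial q^j)\dot{q}^j$, and separately differentiate the Hamilton-Jacobi equation with respect to $q^i$ to get $\partial H/\partial q^i + \sum_j (\partial H/\partial p_j)(\partial^2 W/\partial q^i \partial q^j) = 0$; substituting the already-proven $\dot{q}^j = \partial H/\partial p_j$ makes the two coincide and yields $\dot{p}_i = -\partial H/\partial q^i$, which is (\ref{eq:canonical-ham-mom}).

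Finally I would count constants to confirm generality: the solution carries the $n-1$ values $d^i$, the $n-1$ constants $c_i$, the energy $E$, and $t_0$, for a total of $2n$ independent parameters, precisely the number needed to specify a solution of the $2n$ first-order Hamilton equations, with independence again guaranteed by (\ref{eq:non-degeneracy-hamilton-jacobi}). I expect the main obstacle to be organisational rather than deep: one must keep scrupulous track of which differentiations hold $q$ fixed versus which are taken along the trajectory, and correctly single out the $E$-slot among the constants so that the two linear systems line up. The one genuine analytic point worth stating carefully is that the non-degeneracy condition does double duty --- it guarantees both that the implicit relations define the $q^i(t)$ and that the two coinciding linear systems have a unique solution.
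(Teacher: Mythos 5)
Your proof is correct. One point must be made explicit, however: the paper does not actually prove Theorem \ref{theo:Jacobi} at all --- it states the result and defers to the classical literature (\cite{landau1960,arnoldbook}), so there is no internal argument to measure yours against. What you have written is precisely the standard verification proof found in those references: differentiate the Hamilton--Jacobi equation (\ref{eq:hamilton-jacobi}) once with respect to the constants (holding $q$ fixed) and once with respect to the positions, differentiate the defining relations along the trajectory, and use the non-degeneracy condition (\ref{eq:non-degeneracy-hamilton-jacobi}) to conclude that $(\dot q^j)$ and $(\partial H/\partial p_j)$ solve one and the same invertible linear system, hence coincide; the conjugate equation $\dot p_i = -\partial H/\partial q^i$ then follows by substitution, giving (\ref{eq:canonical-ham-pos}) and (\ref{eq:canonical-ham-mom}). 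You also handle correctly the two points most often fudged in this argument: singling out $E$ as the $n$-th constant so that the right-hand sides $\delta_{\alpha n}$ of the two linear systems agree, and the closing count of $2n$ independent constants ($c_i$, $d^i$, $E$, $t_0$), which is what justifies calling the result a \emph{general} solution rather than merely a family of particular solutions.
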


This result must be considered in connection with the Arnold-Liouville theorem which states that a system in $n$ degrees of freedom will be integrable by quadratures if it has $n$ functionally independent first integrals in involution (see, e.g~\cite{arnoldbook}).  
In our case of systems with two degrees of freedom, i.e.~defined on the Euclidean plane, we have as a corollary to the Arnold-Liouville theorem that finding one first integral $F$ independent of the Hamiltonian $H$ renders the system integrable by quadratures.  Other criterion and results which determine the integrability of a general $n$-dimensional pseudo-Riemannian manifold have also been established by St\"{a}ckel  in the 1890's \cite{stackel1891}, and Levi-Civita  in the early 1900's \cite{levi1904}.

A powerful way of integrating a Hamiltonian system is crafted through an appropriate canonical transformation that places the associated HJ equation into a separable form.  In this case the HJ equation is said to be integrable via \emph{separation of variables} with respect to the newfound \emph{separable coordinates}, which we will denote as $u^i$.  In this situation the separation ansatz for integrating the HJ equation takes an additive form according to
\begin{equation}
W(u^i;c_i) = W_1(u^1;c_i) + W_2(u^2;c_i) + \dots + W_n(u^n;c_i),
\label{eq:separation-ansatz-hj}
\end{equation}
which is subject to the non-degeneracy condition (\ref{eq:non-degeneracy-hamilton-jacobi}).  Any Hamiltonian system for which there exists such a system of separable coordinates $u^i$ on $\m$ that yields a complete integral of the form (\ref{eq:separation-ansatz-hj}) is termed to be \emph{separable}. \emph{Orthogonal separation of variables} occurs in the case that the canonical transformation is a point transformation that leaves the metric tensor $\g$ diagonalised with respect to the new separable coordinates.  Such a  Hamiltonian system is said to be \emph{orthogonally separable}.  In the next section we elaborate on the historical development of separation of variables on $\mathbb{E}^2$.
\section{Separability on the Euclidean Plane}
\label{section:separability}
One of the primary results of relevance to the present study comes from Liouville in 1846  \cite{liouville1846}. Liouville studied Hamilton's equations for the motion of a particle on a curved surface which was under the influence of a time-independent potential.  He found that if the metric and potential of a corresponding Hamiltonian function took a special (separable) form then the Hamiltonian system was solvable. 
\begin{theo}[Liouville]
\label{theo:liouville}
Let $\m$ be a two-dimensional manifold with local position coordinates $(u,v)$ and corresponding canonical momenta coordinates $(p_u,p_v)$.  If a Hamiltonian function is of the form
\begin{equation}
H = \left(A(u) + B(v)\right)^{-1}\left[\tfrac{1}{2}\left(p_u^2 + p_v^2\right) + C(u) + D(v)\right],
\label{eq:hamiltonian-liouville}
\end{equation}
where $A(u), B(v), C(u)$ and $D(v)$ are arbitrary smooth functions, then the Hamiltonian system given by $\{\m, \boldsymbol{X}_H, H\}$ can be solved by quadratures.\footnote{By \emph{quadratures}, one means through algebraic means or by taking an integral.}  
\end{theo}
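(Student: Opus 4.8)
The plan is to exhibit a complete integral of the stationary Hamilton-Jacobi equation (\ref{eq:hamilton-jacobi}) by additive separation of variables, and then to invoke Jacobi's theorem (Theorem~\ref{theo:Jacobi}) to recover the trajectories by quadratures. First I would write $H\!\left(u,v,\partial W/\partial u,\partial W/\partial v\right)=E$ for the given Liouville-type Hamiltonian, multiply through by the nonvanishing factor $A(u)+B(v)$, and rearrange so that all $u$-dependence sits on one side and all $v$-dependence on the other:
\begin{equation*}
\tfrac12\left(\frac{\partial W}{\partial u}\right)^{2} + C(u) - E\,A(u) \;=\; -\left[\tfrac12\left(\frac{\partial W}{\partial v}\right)^{2} + D(v) - E\,B(v)\right].
\end{equation*}

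Second, I would impose the separation ansatz (\ref{eq:separation-ansatz-hj}), $W(u,v)=W_{1}(u)+W_{2}(v)$, so that $\partial W/\partial u = W_{1}'(u)$ and $\partial W/\partial v = W_{2}'(v)$. The two sides of the displayed identity then depend on disjoint variables and must each equal a common separation constant $c$; this is precisely the algebraic feature afforded by the Liouville form, and it is the conceptual heart of the argument. The result is the decoupled pair
\begin{equation*}
W_{1}'(u) = \sqrt{2\bigl(c + E\,A(u) - C(u)\bigr)}, \qquad W_{2}'(v) = \sqrt{2\bigl(-c + E\,B(v) - D(v)\bigr)},
\end{equation*}
and integrating each expression produces $W_{1}$ and $W_{2}$ explicitly as integrals, that is, \emph{by quadratures}. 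The resulting $W(u,v;E,c)=W_{1}+W_{2}$ carries the two constants $E$ and $c$ demanded by a system of two degrees of freedom.

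Third, I would verify the non-degeneracy condition (\ref{eq:non-degeneracy-hamilton-jacobi}) relative to the constants $(E,c)$. Differentiating $W_{1}',W_{2}'$ in $E$ and $c$ and assembling the $2\times2$ matrix $\partial^{2}W/\partial q^{i}\partial c_{j}$, a short computation gives its determinant equal to $-\bigl(A(u)+B(v)\bigr)/\bigl(W_{1}'W_{2}'\bigr)$. Since $A(u)+B(v)\neq 0$ by hypothesis (it appears as a denominator in $H$) and $W_{1}',W_{2}'$ are nonzero wherever the radicands are positive, the determinant does not vanish, so $W$ is a genuine complete integral. Jacobi's theorem then delivers the motion through the relations $t-t_{0}=\partial W/\partial E$ and $d=\partial W/\partial c$, which involve only differentiation of the already-computed quadratures; this establishes integrability by quadratures.

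The hard part is not any single step but the domain bookkeeping behind the last paragraph: one must restrict to a region of phase space on which the radicands $c+E A - C$ and $-c+E B - D$ are strictly positive, so that $W_{1}',W_{2}'$ are real, smooth, and nonvanishing, which is exactly what guarantees both the reality of the complete integral and the non-vanishing of the non-degeneracy determinant. Everything else---the separation, the two quadratures, and the appeal to Jacobi's theorem---is constructive. As a byproduct, the separation constant, re-expressed upon setting $E=H$ as $c = \tfrac12 p_{u}^{2} + C(u) - A(u)\,H$, is an explicit first integral functionally independent of and in involution with $H$, precisely the additional conserved quantity that the Arnold-Liouville theorem requires for a two-degree-of-freedom system.
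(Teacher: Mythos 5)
Your proof is correct. Note that the paper itself contains no proof of this theorem: it is stated as a classical result with a citation to Liouville's 1846 memoir, and the text immediately following it only remarks that, via Hamilton--Jacobi theory, the Liouville form is equivalent to additive separability of the associated HJ equation. Your argument supplies exactly that missing proof, and it does so using precisely the machinery the paper assembles in Section \ref{section:hamilton-jacobi}: the stationary HJ equation (\ref{eq:hamilton-jacobi}), the additive ansatz (\ref{eq:separation-ansatz-hj}), the non-degeneracy condition (\ref{eq:non-degeneracy-hamilton-jacobi}), and Jacobi's theorem (Theorem \ref{theo:Jacobi}). The separation after multiplying by $A(u)+B(v)$, the determinant computation $-\bigl(A(u)+B(v)\bigr)/\bigl(W_1'\,W_2'\bigr)$, and the restriction to the region where both radicands are positive (so that $W_1'$, $W_2'$ are real and nonvanishing) are all correct; your closing observation that $c=\tfrac{1}{2}p_u^2+C(u)-A(u)H$ is a first integral functionally independent of and in involution with $H$ is the standard bridge to the Arnold--Liouville theorem that the paper invokes at the end of Section \ref{section:hamiltonian}, so your write-up is a faithful completion of the route the paper gestures at rather than a divergent one.
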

Through the equivalence of the Hamilton-Jacobi theory, Liouville's result also demonstrated that the associated HJ equation of (\ref{eq:hamiltonian-liouville}) would be solvable under additive separation of variables.  In covariant form, we can write the metric of the kinetic part of (\ref{eq:hamiltonian-liouville}) as 
\begin{equation}
ds^2 = (A(u) + B(v))(du^2 + dv^2), 
\label{eq:liouville-form}
\end{equation}
so that any potential which appears as in the Hamiltonian (\ref{eq:hamiltonian-liouville}), or any metric of the form (\ref{eq:liouville-form}) is said to be in the \emph{Liouville form}. 

In 1881, Morera proved the converse of Liouville's result \cite{morera1881}.  This established that if a system defined by a natural Hamiltonian (\ref{eq:natural-hamiltonian}) was separable in the context of HJ theory then its metric $\g$ and potential $V(q^i)$ would take the Liouville forms with respect to the separable coordinates.   Nevertheless, Morea's equivalence did not provide a way of determining the separable coordinates, nor did it address the possibility that a Hamiltonian system could be separable in some other system of coordinates, with respect to which the metric and potential were not in the Liouville form.  Bertrand and Darboux,  in 1857 \cite{bertrand1857} and 1901 \cite{darboux1901},  respectively, worked to address these concerns. 

Bertrand searched for natural Hamiltonian systems (\ref{eq:natural-hamiltonian}) that could admit a first integral of motion which took the form
\begin{equation}
F(q^i,p_i) = \tfrac{1}{2}K^{i j}(q^i)p_i p_j + U(q^i), \quad i,j = 1,2, 
\label{eq:first-integral-killing}
\end{equation}\newnot{symbol:first}
for some  $K$ and $U$ in the coordinates $q^i$.  Such a first integral is at most \emph{quadratic} in the momenta.  Bertrand showed that subject to the forms (\ref{eq:natural-hamiltonian}) and (\ref{eq:first-integral-killing}), the vanishing of the Poisson bracket of $H$ and $F$, namely $\{H, F\}_{P_0}=0$, imposed two conditions on the Hamiltonian system.  The first of these is the \emph{Killing tensor equation}
\begin{equation}
[\g,\boldsymbol{K}] = 0,
\label{eq:killing-tensor-equation}
\end{equation}
where $[~,~]$ denotes the Schouten bracket presented in Section \ref{conventions}.
The second is the \emph{compatibility condition}
\begin{equation}
\textnormal{d}\left(\boldsymbol{\hat{K}}\textnormal{d}V\right) = 0,
\label{eq:compatibility-condition}
\end{equation}
which in (local) coordinates yields what is called the \emph{Bertrand-Darboux PDE} (see e.g.~\cite{smirnov2006darboux}).  The compatibility condition appears as the integrability condition imposed on $U(q^i)$ in (\ref{eq:first-integral-killing}) by requiring that $H$ and $F$ be in involution, namely
\begin{equation}
\textnormal{d}U = \boldsymbol{\hat{K}} \textnormal{d}V,
\label{eq:exactness-of-potential}
\end{equation}
where $\boldsymbol{\hat{K}}$ is the (1,1)-tensor defined by $\boldsymbol{\hat{K}}:= \boldsymbol{K} \g^{-1}$, i.e. $\hat{K}^i{}_j := K^{i\ell}g_{\ell j}$.  Together, the Killing tensor equation (\ref{eq:killing-tensor-equation}) and  (\ref{eq:exactness-of-potential}) are equivalent to the vanishing of the Poisson bracket of $H$ and $F$ subject to (\ref{eq:natural-hamiltonian}) and (\ref{eq:first-integral-killing}).

  In 1901, Darboux solved the  linear, second-order PDE admitted by (\ref{eq:compatibility-condition}) using the method of characteristics \cite{darboux1901}, by first simplifying the PDE through a coordinate rotation and translation.  This essentially had the effect of transforming the associated Killing tensor $K^{ij}$ in (\ref{eq:first-integral-killing}) to a canonical form, which he could then diagonalise to derive the appropriate separable coordinates $(u,v)$ \cite{smirnov2006darboux}.  In a systematic way, Darboux arrived at a potential satisfying (\ref{eq:compatibility-condition}) with respect to a canonical Killing tensor whose normal eigenvectors generated elliptic-hyperbolic coordinates. 
  
  Indeed, the problem that Darboux solved was to determine the most general potential which would admit orthogonal separation of variables with respect to an elliptic-hyperbolic coordinate system.  This method is very familiar in classifying systems of superintegrable potentials, where one requires that a system be separable with respect to some particular orthogonal coordinate systems and then solves for the most general potential compatible via (\ref{eq:compatibility-condition}) (see Section \ref{superintegrable}).  The complicated part of the above procedure rests in solving (\ref{eq:compatibility-condition}), which yields a non-linear second-order PDE when expressed in local coordinates.  
  
 In 1934, a new approach to this problem was presented in a famous paper by Eisenhart  \cite{eisenhart-famous} who provided an intrinsic characterisation of orthogonally separable Hamiltonian systems.  Eisenhart worked to extend the results of St{\"a}ckel \cite{stackel1893}, who studied the orthogonal separability of the HJ equation.  St{\"a}ckel established that a necessary condition for a system defined by a natural Hamiltonian to be orthogonally separable was that it admits $n-1$ quadratic first integrals of the form (\ref{eq:first-integral-killing}), all functionally independent and in involution with $H$.  Eisenhart  observed that a smooth function $F \in T^*(\m)$ in the restricted form
\begin{equation}
F(q^i,p_i) = \tfrac{1}{2}K^{i j}(q^i)p_i p_j \quad i,j = 1,2,
\label{eq:eisen}
\end{equation}
would be a first integral of the geodesic Hamiltonian,
\begin{equation}
H(q^i,p_i) = \tfrac{1}{2} g^{i j } p_i p_j,
\label{eq:geodesic-hamiltonian}
\end{equation}
if and only if the functions $K^{i j}$ in (\ref{eq:eisen}) were the components of a characteristic Killing tensor field $\boldsymbol{K} \in \mathcal{K}^2(\m)$.  This result is readily established upon computation of the Poisson bracket of $F$ and $H$ in the assumed forms (\ref{eq:eisen}) and (\ref{eq:geodesic-hamiltonian}), respectively.  The pivotal result Eisenhart established in the theory of orthogonal separation of variables is given by
\begin{theo}[Eisenhart]
\label{theo:eisen}
The Hamiltonian system defined by a geodesic Hamiltonian (\ref{eq:geodesic-hamiltonian}) is orthogonally separable if and only if it admits $n-1$ functionally independent first integrals of motion taking the form (\ref{eq:eisen}), such that
\begin{enumerate}[label=(\arabic*)]
	\item all corresponding Killing two-tensors have real and pointwise distinct (almost everywhere) eigenvalues,
	\item all corresponding eigenvector fields of the Killing two-tensors are normal,
	\item the Killing two-tensors defined by the $n-1$ first integrals all have the same eigenvectors.
\end{enumerate}
\end{theo}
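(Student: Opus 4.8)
The plan is to prove the two implications separately, using as the central bridge the fact---established immediately before the statement---that $F=\tfrac12 K^{ij}p_ip_j$ is a first integral of the geodesic Hamiltonian \eqref{eq:geodesic-hamiltonian} if and only if $\boldsymbol{K}\in\mathcal{K}^2(\m)$. Consequently, ``admitting $n-1$ first integrals of the form \eqref{eq:eisen}'' is equivalent to ``admitting $n-1$ Killing two-tensors'', and the entire theorem reduces to relating orthogonal separability to the eigenstructure conditions (1)--(3). Throughout I would work in a frame adapted to the common eigenvectors: since each $\boldsymbol{K}$ is symmetric with respect to $\g$, condition (1) provides $n$ mutually $\g$-orthogonal eigenvector fields, and condition (3) permits the use of a single such frame for all of the tensors and for $\g$ at once.

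\emph{Forward direction.} Assuming orthogonal separability in coordinates $u^i$, the metric is diagonal in these coordinates, and by St\"ackel's necessary condition (recalled above) the system carries $n-1$ functionally independent quadratic first integrals in involution. I would exhibit these explicitly from the St\"ackel matrix: writing the separated Hamilton--Jacobi equation \eqref{eq:hamilton-jacobi} row by row produces separation constants $F_a=\tfrac12 K_a^{ii}p_i^2$ whose coefficients are entries of the inverse St\"ackel matrix. Each $\boldsymbol{K}_a$ is then diagonal in the $u^i$, so the coordinate fields $\partial/\partial u^i$ are common eigenvectors (yielding (3)); as coordinate fields of an orthogonal system they are hypersurface-orthogonal, hence normal (yielding (2)); and the non-degeneracy \eqref{eq:non-degeneracy-hamilton-jacobi} of the St\"ackel matrix forces pointwise distinct eigenvalues and functional independence (yielding (1)). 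The Killing property of each $\boldsymbol{K}_a$ is immediate from the bridge, since each $F_a$ is by construction a first integral.

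\emph{Backward direction.} Assuming (1)--(3), I would first use normality to manufacture coordinates: for each common eigenvector field $E_i$, condition (2) states that its orthogonal complement is an integrable distribution, so by the Frobenius theorem the $n$ resulting families of hypersurfaces intersect in an orthogonal coordinate web $u^i$ with $\partial/\partial u^i$ parallel to $E_i$. In these coordinates $\g$ and every $\boldsymbol{K}_a$ are simultaneously diagonal. The remaining and decisive task is to show that diagonality together with the Killing equations $[\g,\boldsymbol{K}_a]=0$ forces the eigenvalues into St\"ackel form. I would expand each $[\g,\boldsymbol{K}_a]=0$ in the orthogonal coordinates and split it into off-diagonal and diagonal parts: the off-diagonal equations, combined with the distinct-eigenvalue hypothesis (1), constrain how each eigenvalue may depend on the coordinates, and assembling these relations for the $n-1$ independent tensors together with $\g$ yields an invertible St\"ackel matrix. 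St\"ackel's theorem---equivalently, Jacobi's Theorem~\ref{theo:Jacobi} applied to the resulting separated complete integral---then delivers orthogonal separability, closing the equivalence.

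The principal obstacle I anticipate is exactly this backward computation: extracting, from the component form of the Killing equation for a tensor diagonal in a general orthogonal frame, the classical St\"ackel--Levi-Civita separability relations. The delicate points are (i) confirming that normality is precisely the integrability needed to pass from eigenvector fields to a genuine coordinate web, and (ii) verifying that functional independence of the $n-1$ tensors translates into invertibility of the candidate St\"ackel matrix rather than a degenerate configuration. Finally, the ``almost everywhere'' clause in (1) requires a density argument, so that the relations derived on the open dense set where the eigenvalues are distinct extend across the exceptional locus by continuity.
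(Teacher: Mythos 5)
First, a point of comparison: the thesis itself offers no proof of Theorem \ref{theo:eisen}. It is quoted as a classical result of Eisenhart's 1934 paper, and the only argument supplied in the surrounding text is the ``bridge'' you also start from, namely that a function of the form (\ref{eq:eisen}) is a first integral of (\ref{eq:geodesic-hamiltonian}) if and only if the $K^{ij}$ are components of a Killing tensor. Your proposal therefore has to stand on its own, and as written it has two genuine gaps. In the forward direction, the claim that non-degeneracy (\ref{eq:non-degeneracy-hamilton-jacobi}) of the St\"ackel matrix ``forces pointwise distinct eigenvalues'' of each $\boldsymbol{K}_a$ is false once $n \geq 3$. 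The eigenvalues of $\boldsymbol{K}_a$ relative to $\g$ in the separable coordinates are $\lambda^{(a)}_i = (S^{-1})^{ia}/(S^{-1})^{i1}$, where $S$ is the St\"ackel matrix; invertibility of $S$ says no two rows of $S^{-1}$ are proportional, which guarantees only that for each pair $i \neq j$ \emph{some} column $a$ separates $\lambda^{(a)}_i$ from $\lambda^{(a)}_j$, not that one fixed $a$ separates every pair. (Pointwise, $S^{-1}$ may have rows $(1,1,0)$, $(1,1,1)$, $(1,2,3)$: it is invertible, yet the tensor built from the second column has the repeated eigenvalue $1$.) The repair is standard but must be stated: pass to generic constant-coefficient linear combinations of the $\boldsymbol{K}_a$; these are again Killing tensors with the same common normal eigenvectors, they remain functionally independent under an invertible linear change, and they have pairwise distinct eigenvalues on a dense open set precisely because of the row-nonproportionality just noted.

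The second and more serious gap is that your backward direction is a plan rather than a proof: everything that makes Eisenhart's theorem nontrivial sits in the step you defer. Concretely, once Frobenius gives orthogonal coordinates in which $\g$ and all $\boldsymbol{K}_a$ are diagonal, writing $K_a^{ij} = \rho^{(a)}_i g^{ii}\delta^{ij}$ (no summation), the Killing equation $[\g,\boldsymbol{K}_a]=0$ reduces to the Killing--Eisenhart system $\partial_i \rho^{(a)}_i = 0$ and $\partial_k \rho^{(a)}_i = \bigl(\rho^{(a)}_i - \rho^{(a)}_k\bigr)\,\partial_k \ln g_{ii}$ for $k \neq i$; one must then show that the integrability conditions of this overdetermined system, imposed for $n-1$ functionally independent solutions with (almost everywhere) distinct eigenvalues together with $\g$, are exactly the Levi-Civita/St\"ackel separability conditions on the $g_{ii}$, and that the resulting candidate St\"ackel matrix is invertible. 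You correctly identify this as the obstacle and also flag the independence-implies-invertibility issue, but flagging a difficulty does not discharge it; without this computation (or an explicit citation of it, e.g.~to Eisenhart's original argument or Benenti's treatment) the equivalence is asserted, not proved. Your closing remark on the ``almost everywhere'' clause is handled correctly in spirit: derive the relations on the open dense set where the eigenvalues are distinct, then extend across the exceptional locus by continuity.
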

A Killing tensor satisfying (1) and (2) is called a \emph{characteristic Killing tensor}.  
Eisenhart's results demonstrated that Killing tensor fields would play a crucial role in determining the orthogonal separability of a Hamiltonian system.  In 1997, Benenti \cite{benenti1997intrinsic} extended this result to Hamiltonian systems defined by a natural Hamiltonian, i.e.~with non-vanishing potential.    
\begin{theo}[Benenti]
The Hamiltonian system defined by a natural Hamiltonian (\ref{eq:natural-hamiltonian}) is orthogonally separable if and only if there exists a characteristic Killing two-tensor $\boldsymbol{K}$ such that
\[
\textnormal{d}\left(\boldsymbol{\hat{K}}\textnormal{d}V\right) = 0.
\]
\end{theo}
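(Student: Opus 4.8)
The plan is to prove both implications by reducing the orthogonal separability of the full natural Hamiltonian to the orthogonal separability of its geodesic part, already characterised by Eisenhart's theorem, together with the compatibility condition on the potential, and to make the bridge between a quadratic first integral and these two conditions via Bertrand's analysis recorded above. The reverse direction is the more constructive one, so I would treat it first.

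For the direction assuming a characteristic Killing two-tensor $\boldsymbol{K}$ with $\textnormal{d}(\boldsymbol{\hat{K}}\textnormal{d}V)=0$, I would first read the compatibility condition as the statement that the one-form $\boldsymbol{\hat{K}}\textnormal{d}V$ is closed; working locally on $\mathbb{E}^2$, the Poincar\'e lemma then furnishes a function $U$ with $\textnormal{d}U = \boldsymbol{\hat{K}}\textnormal{d}V$. Setting $F = \tfrac{1}{2}K^{ij}p_ip_j + U$, I would invoke Bertrand's equivalence: since $\boldsymbol{K}\in\k^2(\mathbb{E}^2)$ satisfies the Killing equation $[\g,\boldsymbol{K}]=0$ and $U$ satisfies the exactness relation $\textnormal{d}U = \boldsymbol{\hat{K}}\textnormal{d}V$, the Poisson bracket $\{H,F\}_{P_0}$ vanishes, so $F$ is a first integral. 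Because $\boldsymbol{K}$ is characteristic it has pointwise distinct eigenvalues and hence is not a scalar multiple of $\g$, which gives functional independence of $F$ and $H$. Finally, the normal eigenvector fields of the characteristic tensor $\boldsymbol{K}$ generate an orthogonal web, and Eisenhart's theorem applied to the geodesic part shows the kinetic term separates in these coordinates; the additional first integral $F$ then promotes this to separability of the full natural Hamiltonian.

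For the forward direction I would run the same chain in reverse. Assuming the natural Hamiltonian (\ref{eq:natural-hamiltonian}) is orthogonally separable in coordinates $u^i$, the St\"ackel conditions on the separated Hamilton--Jacobi equation force the metric $\g$ alone to be of St\"ackel type, so the geodesic Hamiltonian is orthogonally separable in the same coordinates; Eisenhart's theorem then supplies a characteristic Killing two-tensor $\boldsymbol{K}$ whose eigenvectors generate the web. The separation constant that is quadratic in the momenta yields a first integral of the form $F = \tfrac{1}{2}K^{ij}p_ip_j + U$ with $\{H,F\}_{P_0}=0$, and Bertrand's equivalence converts this into the Killing equation (already available) together with $\textnormal{d}U = \boldsymbol{\hat{K}}\textnormal{d}V$. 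Exactness of $\boldsymbol{\hat{K}}\textnormal{d}V$ implies closedness, which is exactly $\textnormal{d}(\boldsymbol{\hat{K}}\textnormal{d}V)=0$.

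The main obstacle will be justifying the decoupling invoked in both directions, namely that orthogonal separability of the natural Hamiltonian is equivalent to orthogonal separability of its purely geodesic part plus the compatibility condition on $V$. Making this rigorous requires the St\"ackel--Levi-Civita separability conditions and a careful verification, in the eigencoordinates of $\boldsymbol{K}$, that the potential term in the separated equation is controlled precisely by $\textnormal{d}U = \boldsymbol{\hat{K}}\textnormal{d}V$; confirming that this coordinate relation coincides with the intrinsic, coordinate-free equation $\textnormal{d}(\boldsymbol{\hat{K}}\textnormal{d}V)=0$ is where the substantive work lies, the remaining steps being the routine consequences of Eisenhart's and Bertrand's results.
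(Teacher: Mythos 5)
First, a point of reference: the paper does not actually prove this theorem --- it is quoted as Benenti's result with a citation to his 1997 paper, and the surrounding text only remarks on how it strengthens Eisenhart's theorem (one characteristic Killing tensor instead of $n-1$ first integrals). So your proposal cannot be checked against an in-paper argument; it has to stand on its own. As an outline it follows the standard route: Poincar\'e lemma plus Bertrand's equivalence to manufacture the quadratic first integral $F$, Eisenhart's theorem for the geodesic part, and St\"ackel--Levi-Civita theory to handle the potential. The forward direction is essentially sound modulo standard St\"ackel theory (with the small additional point, which you pass over, that the Killing tensor produced by Eisenhart's theorem and the one appearing in the quadratic separation constant must be identified as generating the same web).

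The genuine gap is in the reverse direction, and you have flagged it yourself without closing it. The step ``the additional first integral $F$ then promotes this to separability of the full natural Hamiltonian'' is not an implication you can invoke: the mere existence of a quadratic first integral of a natural Hamiltonian does not yield separability --- that implication, under the hypothesis that $\boldsymbol{K}$ is characteristic and compatible with $V$, \emph{is} Benenti's theorem, so leaning on it is circular. What has to be done is the computation you defer to the final paragraph. Concretely: in the orthogonal eigencoordinates $(u,v)$ of $\boldsymbol{K}$, where $\boldsymbol{\hat{K}} = \mathrm{diag}(\lambda_1,\lambda_2)$ and $\g = \mathrm{diag}(g_{uu},g_{vv})$, the condition $\textnormal{d}(\boldsymbol{\hat{K}}\textnormal{d}V)=0$ reads
\[
(\lambda_2-\lambda_1)\,\partial_u\partial_v V + (\partial_u\lambda_2)\,\partial_v V - (\partial_v\lambda_1)\,\partial_u V = 0.
\]
One must then feed in the Killing equation in its Eisenhart form for the eigenvalues, $\partial_u\lambda_2 = (\lambda_2-\lambda_1)\,\partial_u \ln g_{vv}$ and $\partial_v\lambda_1 = (\lambda_1-\lambda_2)\,\partial_v \ln g_{uu}$, divide by $\lambda_2-\lambda_1$ (legitimate precisely because $\boldsymbol{K}$ is characteristic, i.e.~has distinct eigenvalues), and observe that for the metric in Liouville form $ds^2 = (A(u)+B(v))(du^2+dv^2)$ the result is exactly $\partial_u\partial_v\bigl[(A+B)V\bigr]=0$, i.e.~$(A+B)V = C(u)+D(v)$, which is the Liouville form of the potential and hence separability of the full Hamilton--Jacobi equation. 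This short computation is the entire content of Benenti's refinement over Eisenhart, and it is also what legitimises, in your forward direction, the identification of the separation constant's potential part $U$ with $\textnormal{d}U = \boldsymbol{\hat{K}}\textnormal{d}V$. Until it is carried out, your text is a correct scaffold around the theorem rather than a proof of it.
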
The requirement of $n-1$ first integrals in Eisenhart's theorem is reduced to the existence of a single characteristic Killing tensor in Benenti's theorem.  To determine such a Killing tensor, one can start with the $n-1$ Killing tensors afforded by the $n-1$ first integrals of motion,  and take a linear combination of these Killing tensors.  If we let $K_1, \dots, K_{n-1}$ be as in Theorem \ref{theo:eisen}, then adding the metric tensor $\g$ of $\m$ will yield a basis generating an $n$-dimensional vector subspace of $\k^2(\m)$.  A general Killing tensor in this space
\[
\boldsymbol{K} = \g + \sum_{i = 1}^{n -1} \boldsymbol{K}_i,
\]
is then guaranteed to still have real and distinct eigenvalues and the same eigenvectors admitted by the individual $\boldsymbol{K}_i, i =1, \dots, n-1.$  The condition that the eigenvectors be \emph{normal}, that is,
\begin{equation}
E_i \wedge \textnormal{d}E_i = 0, \quad i = 1, \dots, n \quad \textnormal{(no summation)},
\label{eigen}
\end{equation}
where $E_i$ denote the eigenforms of $\boldsymbol{K}$, then implies that the eigenforms generate $n$ foliations\footnote{In the case of $\mathbb{E}^2,$ we have that (\ref{eigen}) is always satisfied, by dimensional considerations, and so that the eigenvectors of Killing tensors defined on $\mathbb{E}^2$ are always normal.}. These will  consist of $(n-1)$-dimensional hypersurfaces \emph{orthogonal} to the eigenvectors of each Killing tensor (see \cite{bruce2001benenti,bruce2001geometrical} for details).  This is the geometric construction of the \emph{orthogonal coordinate web}.  Such a web defines the separable coordinates with respect to which an associated HJ equation separates.  

These two theorems establish the explicit role characteristic Killing tensors play with regards to the HJ theory of orthogonal separation of variables, and essentially provides a link between their algebraic and geometric properties.  In the next section we will explore this link through the theoretical framework of the invariant theory of Killing tensors with an emphasis on defining \emph{joint invariants}, as first introduced in \cite{smirnov2004covariants}, and later utilised in the theory of superintegrable Hamiltonian systems in \cite{adlam2008joint}.
\section{Invariant Theory of Killing Tensors}
\label{section:ITKT}
We begin this section with a brief historical account on the emergence of Killing tensor fields in mathematics.  For a more in-depth account of this history, we direct the reader to \cite{hawkins2000book}, on which this introduction is based. 

In the 1880s, Wilhelm Killing began an extensive study of non-Euclidean geometry in which he sought to develop a theory of \emph{space forms}, which in modern language is a complete Riemannian manifold $\m$ with constant curvature.  Killing began his study analytically by studying the behaviour of infinitesimal motions of an $n$-dimensional continuous manifold of points $(x_1,\dots,x_n)$ with $n$ degrees of freedom.  In his attempt to deal with all possible space forms, Killing imposed an (unconventional) condition on his infinitesimal motions which lead to the implication that they formed a finite-dimensional Lie algebra, a theory mostly unknown to him at the time.  This naturally brought Killing into contact with Sophus Lie, and through his and Lie's results on transformation groups he succeeded in proving that a \emph{proper space form} would have degree $n(n+1)/2$ and admit  a \emph{Riemannian metric}.  From the fact that an infinitesimal motion ought to leave the metric invariant, Killing then arrived at the equations defining what we now call a Killing vector field
\begin{equation}
\mathcal{L}_{\boldsymbol{K}}\g = 0,
\label{eq:killing-vector}
\end{equation}
where $\mathcal{L}_{\boldsymbol{K}}$ denotes the well-studied \emph{Lie derivative} along the vector field $\boldsymbol{K}$, and $\g$ is the metric tensor of the Riemannian manifold $\m$.  Indeed, we can use this Killing vector field to define a function on the cotangent bundle $T^*(\m)$, namely $K^i p_i$, and (\ref{eq:killing-vector}) then implies that this function will be a first integral of the geodesic flow on $T^*(\m)$ with respect to the Riemannian metric $\g$.     
\subsection{Preliminaries}
\label{prelim}
 The Killing fields with which Killing himself worked were \emph{Killing vector fields}, which act as the \emph{infinitesimal generators of isometries} on a Riemannian manifold.  This is made obvious by considering equation (\ref{eq:killing-vector}), where we observed that the vector fields $\boldsymbol{K}$ admitted by this equation were those which preserved the metric $\g$.  Killing tensor fields of higher valence, $p >1$, provide information regarding quadratic, cubic, and higher-order first integrals of the Hamiltonian geodesic flow.  With this in mind we begin our study of the vector space in $(\m,\g)$ formed by Killing tensors of the same valence $p$, which we will denote as $\mathcal{K}^p(\m)$.  The fact that this collection of tensors forms a vector space follows from the bilinear nature of the Schouten bracket (see Section \ref{conventions}).

If $\m$ is a space of constant curvature then the dimension of $\mathcal{K}^p(\m^n)$ will be maximal, and given by the \emph{Delong-Takeuchi-Thompson (DTT) formula} \cite{delongthesis, takeuchi1983killing,thompson1986}
\begin{equation}
d = \dim\k^{p}(\m^n) = \frac{1}{n}\binom{n+p}{p+1}\binom{ n+p-1}{p}, \quad p \geq 1. 
\label{eq:dtt}
\end{equation}\newnot{symbol:dim} \newnot{symbol:binomial}
In this situation, we see that the vector space $\mathcal{K}^p(\m)$ is determined with respect to $d$ arbitrary parameters.  In other words, an element of $\mathcal{K}^p(\m)$, namely a Killing tensor with fixed valence $p$, can be viewed as being an algebraic object in a vector space.  This identification provides the pivotal link between vector spaces of Killing tensors defined on pseudo-Riemannian spaces of constant curvature and the classical invariant theory of vector spaces of homogeneous polynomials (for more details on the latter see e.g.~\cite{olverblue}).  

The focus of this thesis is the vector space of Killing \emph{two-tensors} defined on the Euclidean plane, namely $\mathcal{K}^2(\mathbb{E}^2)$ and products of this space, e.g.~$\mathcal{K}^2(\mathbb{E}^2) \times \mathcal{K}^2(\mathbb{E}^2)$.  The DTT formula yields that the dimension of $\mathcal{K}^2(\mathbb{E}^2)$ is six, since $n =p =2$.  An alternate way of arriving at this number is to formally derive the solution to the Killing tensor equation (\ref{eq:killing-tensor-equation}) in the Euclidean plane, which we do in Appendix \ref{kt-solved}, and count the total number of integration constants in the end.  Additionally, this approach yields that the general solution to (\ref{eq:killing-tensor-equation}) in $\mathbb{E}^2$, with respect to Cartesian coordinates, is a Killing two-tensor $\boldsymbol{K} \in \k^2(\mathbb{E}^2)$, with components given by 
\begin{align}
K^{11} & =  \beta_1 +2 \beta_4 y + \beta_6 y^2,  \nonumber \\
K^{12} & =  \beta_3 -  \beta_4 x  -\beta_5 y  - \beta_6 x y,  \label{eq:general-killing-tensor}\\
K^{22} & =  \beta_2 +2 \beta_5 x  + \beta_6 x^2,  \nonumber
\end{align}
where $\beta_i, i=1,\dots, 6$ are arbitrary parameters which appear as constants of integration in solving (\ref{eq:killing-tensor-equation}).  And so indeed, we can identify the vector space of Killing two-tensors defined on $\mathbb{E}^2$ with $\R^6$.  Another result which will be frequently made use of throughout this thesis is the following lemma, independently arrived at by \cite{delongthesis,takeuchi1983killing,thompson1986} in the early 1980's: 
\begin{lem}
Any Killing tensor of valence $p$ defined on a constant curvature pseudo-Riemannian manifold $(\m,\g)$ can be expressed as a sum of symmetrised tensor products of a basis of Killing vectors on $\m$.
\label{lemma:killing-tensor-as-sum}
\end{lem}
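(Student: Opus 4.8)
The plan is to prove the two inclusions $\mathcal{S}^p \subseteq \k^p(\m)$ and $\k^p(\m) \subseteq \mathcal{S}^p$, where $\mathcal{S}^p$ denotes the $\R$-linear span of all symmetrised products $\boldsymbol{X}_{(1)} \odot \cdots \odot \boldsymbol{X}_{(p)}$ of Killing vectors $\boldsymbol{X}_{(j)} \in \k^1(\m)$. The most economical framework is the correspondence between symmetric contravariant tensors and functions on $T^*(\m)$ that are polynomial in the momenta: to a symmetric $(q,0)$-tensor $\boldsymbol{K}$ with components $K^{i_1\cdots i_q}$ I would associate the homogeneous polynomial $P_{\boldsymbol{K}} := K^{i_1\cdots i_q}p_{i_1}\cdots p_{i_q}$ on $T^*(\m)$. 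Under this identification the symmetric tensor product becomes the ordinary product of polynomials, $P_{\boldsymbol{K}\odot\boldsymbol{L}} = P_{\boldsymbol{K}}P_{\boldsymbol{L}}$, and---exactly as in the computation preceding Theorem \ref{theo:eisen}---the Killing tensor equation $[\g,\boldsymbol{K}] = 0$ of (\ref{killme}) is equivalent to $\{H_0, P_{\boldsymbol{K}}\}_{P_0} = 0$, where $H_0 = g^{ij}p_ip_j$ is the geodesic Hamiltonian (\ref{eq:geodesic-hamiltonian}).

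The forward inclusion is then immediate and uses nothing about curvature. If $\boldsymbol{X}_{(1)},\dots,\boldsymbol{X}_{(p)}$ are Killing vectors, each $P_{\boldsymbol{X}_{(j)}}$ Poisson-commutes with $H_0$, and since the Poisson bracket is a derivation of the pointwise product,
\[
\{H_0,\, P_{\boldsymbol{X}_{(1)}} \cdots P_{\boldsymbol{X}_{(p)}}\}_{P_0} = \sum_{j=1}^{p} P_{\boldsymbol{X}_{(1)}} \cdots \{H_0, P_{\boldsymbol{X}_{(j)}}\}_{P_0} \cdots P_{\boldsymbol{X}_{(p)}} = 0.
\]
Translating back, $\boldsymbol{X}_{(1)} \odot \cdots \odot \boldsymbol{X}_{(p)}$ solves (\ref{killme}), and by $\R$-bilinearity of the Schouten bracket the whole span satisfies $\mathcal{S}^p \subseteq \k^p(\m)$.

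For the reverse inclusion I would argue by dimension, which is precisely where constant curvature enters. Consider the symmetric-power map $\mu_p : S^p(\mathfrak{g}) \to \k^p(\m)$, with $\mathfrak{g} := \k^1(\m)$, sending a formal product to the corresponding symmetrised tensor product; its image is $\mathcal{S}^p$. On a constant-curvature space the isometry algebra attains its maximal dimension $\dim\mathfrak{g} = \tfrac{1}{2}n(n+1)$, which is the $p=1$ instance of the DTT formula (\ref{eq:dtt}), so that $\mathcal{S}^p$ is nonempty of predictable size; by the first inclusion it suffices to show $\dim\mathcal{S}^p = \dim\k^p(\m)$, the right-hand side being the DTT number. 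The cleanest way to establish surjectivity of $\mu_p$ is to realise $\m^n$ of curvature $\kappa$ as a quadric in a flat ambient space $\mathbb{R}^{n+1}$ of the appropriate signature (the Euclidean case $\kappa = 0$ being $\m = \mathbb{E}^n$ itself), so that the Killing vectors are the restrictions of the ambient translations and rotations; a Killing tensor then prolongs to an ambient symmetric object whose components are polynomials in the linear coordinate functions, and such polynomials are generated by products of those linear functions---i.e.\ by symmetrised products of the restricted Killing vectors.

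The main obstacle is exactly this surjectivity/dimension-matching step, because $\mu_p$ is \emph{not} injective in general: there are relations among products of Killing vectors (the ideal cut out by the defining quadric, together with Casimir-type identities), so a naive count of $\dim S^p(\mathfrak{g}) = \binom{\dim\mathfrak{g}+p-1}{p}$ overshoots the DTT value and one must pin down $\dim(\operatorname{im}\mu_p)$ rather than $\dim S^p(\mathfrak{g})$. As a sanity check confirming the mechanism in the case of interest, take $\m = \mathbb{E}^2$: here $\mathfrak{g}$ is spanned by the two translations and one rotation, so $\dim\mathfrak{g} = 3$ and $\dim S^2(\mathfrak{g}) = \binom{4}{2} = 6 = \dim\k^2(\mathbb{E}^2)$; thus $\mu_2$ is an isomorphism and the six products reproduce precisely the six parameters $\beta_1,\dots,\beta_6$ of the general solution (\ref{eq:general-killing-tensor}). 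Once equality of dimensions is secured, the inclusion $\mathcal{S}^p \subseteq \k^p(\m)$ forces $\mathcal{S}^p = \k^p(\m)$, which is the assertion of the lemma.
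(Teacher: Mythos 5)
Your forward inclusion (symmetrised products of Killing vectors solve (\ref{killme})) is correct, but notice that it is not what Lemma \ref{lemma:killing-tensor-as-sum} asserts: the lemma \emph{is} the reverse inclusion $\k^p(\m) \subseteq \mathcal{S}^p$, i.e.\ the surjectivity of your map $\mu_p$, and that is precisely the step your proposal leaves open. You candidly identify the obstacle --- relations in the kernel of $\mu_p$ prevent the naive count $\dim S^p(\mathfrak{g}) = \binom{\dim\mathfrak{g}+p-1}{p}$ from matching the DTT number (\ref{eq:dtt}) --- but you never overcome it. The quadric-prolongation sketch is not an argument as it stands: knowing that a Killing tensor extends to an ambient object whose components are polynomial in the flat coordinates does not place it in the image of $\mu_p$, because ``components polynomial in the coordinates'' and ``sum of symmetrised products of Killing vectors'' are different conditions (every Killing tensor in flat space has polynomial components of degree at most $p$ in the coordinates, yet the lemma still requires proof); identifying the two is the actual content of the theorem, and closing that gap requires genuine structure theory, e.g.\ Thompson's induction on the polynomial coefficients in flat space or Takeuchi's Lie-algebraic argument \cite{delongthesis,takeuchi1983killing,thompson1986}. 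Your sanity check covers only $n=p=2$, where $\dim S^2(\mathfrak{g}) = 6$ happens to equal the DTT value; even there, equality of dimensions alone does not make $\mu_2$ an isomorphism --- one must still verify that the six products are linearly independent, which is what the explicit solution (\ref{eq:general-killing-tensor}) confirms. Already for $\mathbb{E}^3$ with $p=2$ one has $\dim S^2(\mathfrak{g}) = 21 > 20 = \dim\k^2(\mathbb{E}^3)$, so the kernel is nontrivial and any dimension argument must control the dimension of the image of $\mu_p$ directly, which your proposal does not do.

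For comparison: the thesis does not prove this lemma at all; it is quoted from the literature with the citations above, so there is no internal proof to measure you against. As submitted, your proposal is an honest reduction of the lemma to its hard step --- plus a complete proof of the (easier) converse inclusion, which the lemma does not claim --- but it is not a proof of the lemma.
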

We make use of the above lemma to write our general Killing tensor $\boldsymbol{K} \in \mathcal{K}^2(\mathbb{E}^2)$ as
\begin{equation}
\boldsymbol{K} = A^{i j} \boldsymbol{X}_i \odot \boldsymbol{X}_j + B^i \boldsymbol{X}_i \odot \boldsymbol{R} + C \boldsymbol{R} \odot \boldsymbol{R},
\label{eq:killing-tensor-as-sum}
\end{equation}
where $\boldsymbol{X}_i$ \newnot{symbol:transbasis} denote the translational basis Killing vectors and $ \boldsymbol{R}$ \newnot{symbol:rotbasis} denotes the rotational basis vector. 
In this case the $A^{i j}, B^{i}$ and $C$ define the \emph{Killing tensor parameters}, which we can identify with (\ref{eq:general-killing-tensor}) by letting
\begin{equation}
A^{i j} = \left(
\begin{array}{cc}
	\beta_1 & \beta_3\\
	\beta_3 & \beta_2  
\end{array}
		  \right), 
\quad 
B^i = \left(
\begin{array}{c}
	\beta_4 \\
	-\beta_5
\end{array}
	  \right),
\quad C = \beta_6,
\label{eq:killing-tensor-constants}
\end{equation}
which will be subject to the symmetry property
\[
A^{i j} = A^{(i j)}.
\]
Furthermore, we will take as a basis for the space of Killing vectors on $\mathbb{E}^2$  the usual  vectors given in Cartesian coordinates, $(q^1,q^2) = (x,y)$, namely 
\begin{equation}
\boldsymbol{X}_1 = \frac{\partial}{\partial x}, \quad \boldsymbol{X}_2 = \frac{\partial}{\partial y}, \quad \boldsymbol{R} = y \frac{\partial}{\partial x} - x \frac{\partial}{\partial y}.
\label{eq:basis-killing-vectors}
\end{equation}
The components of the general Killing tensor (\ref{eq:killing-tensor-as-sum}) with respect to the basis (\ref{eq:basis-killing-vectors}) are then expressible in a compact way as
\begin{equation}
K^{i j} = A^{i j} + 2 \epsilon^{(i}{}_{\ell} B_{}^{j)} x^{\ell} + C \epsilon^i{}_m \epsilon^j{}_k x^m x^k.
\label{eq:killing-tensor-as-sum-components}
\end{equation}
One can check (see Section \ref{conventions}) that indeed, subject to (\ref{eq:killing-tensor-constants}) this reproduces the general Killing tensor components defined in (\ref{eq:general-killing-tensor}).  In this compact presentation, it is straightforward to compute the transformation law for the parameters $\beta_i, i=1,\dots, 6$ under the action induced by the isometry group $SE(2) \circlearrowright \mathbb{E}^2$ \newnot{symbol:speciale} \newnot{symbol:acting} which acts  transitively on the Euclidean plane  via
\begin{equation}
q^i \to \Lambda^i{}_j q^j + \delta^i, 
\label{eq:se-group-action}
\end{equation}
where $\Lambda^i{}_j \in SO(2)$, \newnot{symbol:specialo} $\boldsymbol{\delta} \in \R^2$, so $\delta^i=p_i, i=1,2$.  This computation is carried out using the compact notation in Appendix \ref{group-action-stuff}.  For clarity we will now provide the explicit computation of the induced action. 

We seek to derive the transformation law for the parameters in $\k^2(\mathbb{E}^2)$ under the action of $SE(2)$, which we can obtain by making use of the usual transformation rules for tensor components (mentioned in Section \ref{conventions}). In particular, a Killing two-tensor $K^{ij}$ transforms according to the tensor transformation rules via
\begin{equation}
\tilde{K}^{i j} = \Lambda^{i}{}_{\ell} \Lambda^{j}{}_k K^{\ell k},
\label{eq:trans-kt}
\end{equation}
where the $\Lambda^{i}{}_j$ represent the Jacobian matrices (\ref{eq:jacobian}) defined in Section \ref{conventions}. The action of $SE(2) \circlearrowright \mathbb{E}^2$ is identified by
\begin{equation}
	\left( \begin{array}{c}
	\tilde{q}^1 \\ \tilde{q}^2 \end{array}\right) = \left(\begin{array}{cc} \cos\theta & -\sin\theta \\ \sin\theta &\cos\theta \end{array}\right) \left(\begin{array}{c} q^1 \\ q^2 \end{array}\right) + \left(\begin{array}{c} p_1\\ p_2 \end{array} \right),
\label{eq:actionyes}
\end{equation}
so that identifying the components of the Jacobian via (\ref{eq:jacobian}), namely
\begin{equation}
\Lambda^{j}{}_{k} = \frac{\partial \tilde{q}^{j}}{\partial q^k},
\label{eq:jacob}
\end{equation}
we see that indeed this is given by a rotation matrix:
\[
\boldsymbol{\Lambda} = \left(\begin{array}{cc} {\partial \tilde{q}^1}/{\partial q^1} & {\partial \tilde{q}^1}/{\partial q^2} \\ {\partial \tilde{q}^2}/{\partial q^1} & {\partial \tilde{q}^2}/{\partial q^2} \end{array}\right) = \left(
\begin{array}{cc}
	\cos\theta & -\sin\theta\\
	\sin\theta & \cos\theta  
\end{array}
		  \right).
\]
Thus, computationally, equation (\ref{eq:trans-kt}) amounts to (appropriately) multiplying the $2 \times 2$-matrix defining the Killing two tensor $K^{ij}$ with two rotation matrices:
\[
\tilde{\boldsymbol{K}} = \boldsymbol{\Lambda}^T \boldsymbol{K} \boldsymbol{\Lambda},
\]
where we see that we must multiply on the left by the \emph{transpose} of the Jacobian matrix in order to carry out the proper tensor index transformation.  Carrying out this calculation, we obtain in a simple manner that the components of $K^{ij}$ transform according to
\begin{align}
\tilde{K}^{11} &= - K^{11}\cos^2\theta + 2K^{12}\cos\theta\sin\theta + K^{22}\sin^2\theta, \nonumber \\
\tilde{K}^{12} &= -K^{11}\cos\theta\sin\theta + K^{12}(\cos^2\theta -\sin^2\theta) + K^{22}\cos\theta\sin\theta, \label{eq:kt-tranny}\\
\tilde{K}^{22} &= K^{11}\sin^2\theta - 2K^{12}\cos\theta\sin\theta + K^{22}\cos^2\theta.   \nonumber
\end{align}
Using the compact formulae (\ref{eq:killing-tensor-as-sum-components}), the non-transitive action of $SE(2) \circlearrowright \mathcal{K}^2(\mathbb{E}^2)$ is elegantly derived by substituting into (\ref{eq:killing-tensor-as-sum}) with the basis vector transformation rules (\ref{eq:basis-tran}), which yields equation (\ref{A3})\footnote{See Appendix \ref{group-action-stuff} for this computation.}.  Since we work in $\mathbb{E}^2$ we can get by without the compact notation, and so we continue with the above procedure for deriving the invariants as follows.  First, we must substitute with the polynomial components of $K^{ij}$ given by (\ref{eq:general-killing-tensor}) into the transformed components (\ref{eq:kt-tranny}) and then set this equal to $\tilde{K}^{ij}$ written in \emph{transformed} parameters, i.e. we arrive at the following system of equations (we have let $(q^1,q^2) = (x,y)$ for clarity)
\begin{align*}
\tilde{\beta_1} +2 \tilde{\beta_4} \tilde{y} + \tilde{\beta_6} \tilde{y}^2 &= - (\beta_1 +2 \beta_4 y + \beta_6 y^2)\cos^2\theta \\
& \quad + 2(\beta_3 -  \beta_4 x  -\beta_5 y  - \beta_6 x y)\cos\theta\sin\theta \\
& \quad + ( \beta_2 +2 \beta_5 x  + \beta_6 x^2)\sin^2\theta \\
\tilde{\beta_3} -  \tilde{\beta_4} \tilde{x}  -\tilde{\beta_5} \tilde{y}  - \tilde{\beta_6} \tilde{x}\tilde{ y} &=  -(\beta_1 +2 \beta_4 y + \beta_6 y^2)\cos\theta\sin\theta \\
& \quad + (\beta_3 -  \beta_4 x  -\beta_5 y  - \beta_6 x y)(\cos^2\theta -\sin^2\theta) \\
& \quad + ( \beta_2 +2 \beta_5 x  + \beta_6 x^2)\cos\theta\sin\theta\\
\tilde{\beta_2} +2 \tilde{\beta_5} \tilde{x}  + \tilde{\beta_6} \tilde{x}^2 &= (\beta_1 +2 \beta_4 y + \beta_6 y^2)\sin^2\theta \\
& \quad - 2(\beta_3 -  \beta_4 x  -\beta_5 y  - \beta_6 x y)\cos\theta\sin\theta \\
& \quad + ( \beta_2 +2 \beta_5 x  + \beta_6 x^2)\cos^2\theta
\end{align*}
Next, we must substitute for $\tilde{x}$ and  $\tilde{y}$ according to (\ref{eq:actionyes}), after which we can equate coefficients of like terms.  This then yields the following explicit transformation laws for the parameters of $\k^2(\mathbb{E}^2)$:
\begin{equation}
\label{eq:transformed-KT-parameters}
\begin{array}{rcl}
\tilde{\beta_1} &=& \beta_1\cos^2\theta - 2\beta_3\cos \theta\sin \theta + \beta_2\sin^2\theta - 2p_2\beta_4\cos \theta, 
\\
                  & & - 2p_2\beta_5\sin \theta + \beta_6p_2^2, \\ 
\tilde{\beta_2} & =& \beta_1\sin^2 \theta - 2\beta_3\cos \theta\sin \theta, +
\beta_2\cos^2\theta - 2p_1\beta_5\cos \theta,  \\ 
 & &+ 2p_1\beta_4\sin \theta  + \beta_6p_1^2,\\
\tilde{\beta_3} & = & (\beta_1-\beta_2)\sin \theta\cos \theta + \beta_3(\cos^2 \theta - \sin^2 \theta),
\\ 
 & &  + (p_1\beta_4 + p_2\beta_5)\cos \theta + (p_1\beta_5 - p_2\beta_4)\sin \theta - \beta_6p_1p_2, \\
 \tilde{\beta_4} & = & \beta_4\cos \theta + \beta_5\sin \theta - \beta_6p_2, \\
 \tilde{\beta_5} & = & \beta_5\cos \theta - \beta_4\sin \theta - \beta_6p_1,\\
 \tilde{\beta_6} &=& \beta_6. 		  
\end{array}
\end{equation}
This brings to light our first (algebraic) invariant, namely $\Delta_1 = \beta_6.$\footnote{See Appendix A \ref{A2} for the compact form of these six parameters under the group action.}   In order to derive the remaining fundamental invariants we use the method of moving frames. 

\subsection{The Moving Frames Method}
\label{section:group-action}
In what follows we provide a synopsis of the \emph{moving frames method} and its role in the invariant theory of Killing tensors.  For a more complete account we direct the reader to the wealth of literature that has nourished the development of this theory, namely \cite{mclenaghan2002group,deeley2003theory,adlam2005thesis,adlam2008joint,chanu2006geometrical,chana2009rsep,cochran2011thesis,horwood2008thesis,horwood2008invariants,horwood2005invariant,horwood2009minkowski,smirnov2006darboux,smirnov2004covariants,winternitz1965invariants,yue2005thesis} and references therein. In particular, \cite{mclenaghan2002group} pioneered the study of invariants under the isometry group acting on $\mathcal{K}^p(\m)$ into the theory of Killing tensors as a way of classifying the orthogonal coordinate webs in the Euclidean plane.

As noted in  Section \ref{section:separability}, it has been shown (see e.g,  \cite{eisenhart-famous,benenti1997intrinsic,horwood2005invariant}) that an element $\boldsymbol{K} \in \mathcal{K}^2(\m)$ with real and distinct eigenvalues and normal eigenvectors generates an orthogonal coordinate web with $n= \dim(\m)$ foliations whose leaves are $n-1$ dimensional hypersurfaces orthogonal to the eigenvectors of $\boldsymbol{K}$.  In this situation Cartan's geometry \cite{cartan1935} can be combined with the study of principal fiber bundles to provide a framework for the invariant theory.  For a detailed account of the underlying ideas of the generalised moving frames method we direct the reader to \cite{olverblue}, from which this brief overview is based on.  In this section, we focus on demonstrating the recursive approach to constructing the moving frame map, developed by Kogan in 2003 \cite{kogan2003recursive}.  We begin with some preliminary definitions mostly compatible with \cite{olverblue,smirnov2004covariants,smirnov2006darboux,mclenaghan2002group}.

Let $G$ be an $r$-dimensional (Lie transformation) group acting smoothly on an $n$-dimensional manifold $\m$ with $s$-dimensional orbits.  Representative points in each group orbit can be chosen to depend continuously on the orbits provided that the group acts \emph{regularly}\footnote{Regularity means that all the orbits of the group action on $\m$ have the same dimension and each point $p \in \m$ has a system of arbitrarily small neighborhoods whose intersection with each orbit is a pathwise connected subset of the orbit \cite{olverblue}}.  A (local) cross-section is an $(n-s)-$dimensional submanifold $K \subset \m$ that intersects each group orbit transversally\footnote{Two submanifolds $K,N$ are said to intersect transversally at a point $p \in K \cap N$ if $T_p(K) \cap T_p(N) = \{0\}$,i.e.~if they share no common non-zero tangent vectors.} and at most once.  In general, a local cross-section passing through any point in $p \in \m$ can be constructed provided that the the group acts regularly on $\m$.  

Choosing a cross-section amounts to fixing a \emph{moving frame}, i.e.~identifying a map $f: \mathcal{K}^2(\m)/G \to G$, \newnot{symbol:quotientspace} where we focus on the quotient space, considering $G$ as the subspace $G = \{c \g|c \in \R\}$ generated by the metric, which in our case is trivial since it provides no information regarding separable coordinate systems \cite{mclenaghan2002group}.  A moving frame is formally defined as  a smooth, G-equivariant map\footnote{ A map $\gamma: \m \to G$ is G-equivariant if for $g \in G$, $p \in \m$, $\gamma(g \cdot p) = \gamma(p) \cdot g^{-1}$ with respect to the actions of $G$ on $\m$ and on itself by right multiplication.} $\rho: \m \to G$ whose existence  is guaranteed in a neighborhood of a point $p \in \m$ provided that $G$ acts \emph{freely}\footnote{A transformation group $G$ acts freely provided that the isotropy group, $G_p = \{g \in G | g \cdot p = p \}$ where $p\in \m$ is trivial: $G_p = {e},$ for all $p \in \m$, where $e$ denotes the identity element of $G$.  In other words, only the identity element fixes any $p \in \m$.} and \emph{regularly} near $p$ \cite{olverblue}.  To construct a moving frame, one utilises Cartan's normalisation method, whereby the elements of the cross-section can be interpreted as canonical forms for general elements in the underlying manifold $\m$.  The normalisation procedure amounts to choosing local coordinates $g = (g_1, \dots, g_r)$ on $G$ near the identity element, identifying explicit formulae for the group transformations in the coordinates, and then equating the first $r$ components of the formulae in the previous step to given constants.  The Implicit Function theorem then guarantees  that such a system of equations is locally soluble.  

We shall illustrate the method explicitly by taking $G$ to be the isometry group of the Euclidean plane, which acts as an automorphism, i.e.~Killing tensors are mapped to Killing tensors in $\mathcal{K}^2(\mathbb{E}^2)$, and so preserves the geometry of the vector space.  Explicitly, we take $G = SE(2)$, the special Euclidean group, which acts freely and regularly in the vector space of Killing tensors \cite{smirnov2004covariants}.  

Recall that $SE(2)$ consists of the \emph{rigid body} motions, translations and rotations, and so it can be identified as the semi-direct product of the \emph{special orthogonal group} $SO(2)$ with the group of translations on $\mathbb{E}^2$.  The recursive version of the moving frames method to derive invariants then proceeds in two steps.  First, we compute a set of fundamental invariants under the subgroup of translations and choose an appropriate cross-section to obtain a set of translational invariants.  Second, we use these translational invariants as new coordinates in the vector space and compute the action under the subgroup of rotations, $SO(2)$.  After choosing an additional cross-section, or \emph{normalisation} equation, we then arrive at a final set of invariants under the action of the full isometry group \cite{kogan2003recursive}.  

The action of the subgroup of translations amounts to substituting with $\Lambda^j{}_i = \delta^j{}_i$ in (\ref{eq:se-group-action}) and (\ref{eq:transformed-KT-parameters}), where $\delta^j{}_i$ denotes the Kronecker delta.   In other words, the group of translations acts on $\mathbb{E}^2$ via $q^i \to \tilde{q}^i + \delta^i$ and so we find that this induces the transformation\footnote{See Appendix \ref{noway}, eq. \ref{A3} for the compact form of this transformation.}, 
\begin{equation}
\label{eq:trans-group-action}
\begin{array}{rcl}
\tilde{\beta_1} &=& \beta_1 + 2 \beta_4 p_2 + \beta_6 p_2^2,
\\
\tilde{\beta_2} & =& \beta_2 + 2 \beta_5 p_1 + \beta_6 p_1^2,\\
\tilde{\beta_3} & = & \beta_3 - \beta_4 p_1 - \beta_5 p_2 - \beta_6 p_1 p_2,
\\ 
 \tilde{\beta_4} & = & \beta_4 + \beta_6 p_2,\\
 \tilde{\beta_5} & = & \beta_5 + \beta_6 p_1,\\
 \tilde{\beta_6} &=& \beta_6. 		  
\end{array}
\end{equation}
We then choose our cross-section for the restricted group action via
\begin{equation}
\tilde{\beta_4} = \tilde{\beta_5} = 0.
\label{eq:cross-section-restricted}
\end{equation}
The moving frame map for the restricted group action is then defined by the following normalisation equations
\begin{equation}
p_1 = -\beta_4/\beta_6, \quad p_2 = -\beta_5/\beta_6
\label{eq:normal-equations}
\end{equation}
We then substitute these relationships into the remaining four equations (\ref{eq:trans-group-action}) and thus arrive at the following fundamental (translational) invariants of $\k^2(\mathbb{E}^2)$
\begin{align}
I_1 &=\beta_1\beta_6 - \beta_4^2,  \nonumber \\
I_2 &=\beta_2\beta_6-\beta_5^2, \label{eq:trans-group-invariants}\\
I_3 &=\beta_3\beta_6 + \beta_4\beta_5, \nonumber\\
I_4 &=\beta_6. \nonumber
\end{align}
Next, we determine the action of the subgroup of rotations $SO(2)$ on the space of the four invariants (\ref{eq:trans-group-invariants}).  Since $SO(2)$ acts on $\mathbb{E}^2$ via $x \to \Lambda^i{}_j \tilde{x}^j,$ then we can use the same technique as above to determine the transformation formulas induced by this action.  Thus, we find that $SO(2)$ acts on $\k^2(\mathbb{E}^2)$ according to\footnote{See Appendix \ref{noway} eq.~\ref{A4} for compact formulae.}
\begin{equation}
\label{eq:rot-group-action}
\begin{array}{rcl}
\tilde{\beta_1} &=& \beta_1 \cos^2\theta + \beta_2 \sin^2\theta - 2\beta_3 \sin\theta\cos\theta,
\\
\tilde{\beta_2} & =& \beta_1 \sin^2\theta + \beta_2\cos^2\theta - 2\beta_3 \sin\theta\cos\theta,\\
\tilde{\beta_3} & = & (\beta_1-\beta_2)\sin\theta\cos\theta + \beta_3(\cos^2\theta - \sin^2\theta),
\\ 
 \tilde{\beta_4} & = & \beta_4 \cos\theta + \beta_5\sin\theta,\\
 \tilde{\beta_5} & = & \beta_4\sin\theta - \beta_5 \cos\theta,\\
 \tilde{\beta_6} &=& \beta_6. 		  
\end{array}
\end{equation}
Therefore, taking into consideration (\ref{eq:trans-group-invariants}) and (\ref{eq:rot-group-action}), we find that
\begin{align}
\tilde{I}_1 &=-\left(\cos \theta  \beta _4+\sin \theta  \beta _5\right){}^2+\left(\cos ^2 \theta \beta _1+\sin \theta  \left(\sin \theta  \beta _2-2 \cos \theta  \beta _3\right)\right) \beta _6, \nonumber\\
\tilde{I}_2 &=-\left(\sin \theta  \beta _4-\cos \theta  \beta _5\right){}^2+\left(\sin ^2 \theta \beta _1+\cos \theta  \left(\cos \theta  \beta _2-2 \sin \theta  \beta _3\right)\right) \beta _6, \nonumber \\
\tilde{I}_3 &=\tfrac{1}{2} \sin2\theta\left(-\beta _4^2+\beta _5^2+\left(\beta _1-\beta _2\right) \beta _6\right)-\cos 2 \theta  \left(\beta _4 \beta _5+\beta _3 \beta _6\right),\nonumber \\
\tilde{I}_4 &=I_4. \label{invariantos}
\end{align}
We can then obtain the final normalisation equation by taking $\tilde{I}_3= 0$, which yields
\begin{equation*}
\theta = \frac{1}{2}\arctan{\left(\frac{2(\beta_4\beta_5 + \beta_3 \beta_6)}{\beta_6(\beta_1 - \beta_2)-\beta_4^2 + \beta_5^2}\right)}.
\end{equation*}
Substituting this into (\ref{invariantos}), we recover the $SE(2)$ invariants 
\begin{align}
\Delta_1 &= \beta_6, \nonumber \\
\Delta_2 &= \beta_6(\beta_1 + \beta_2)-\beta_4^2 -\beta_5^2, \label{eq:invariants}\\
\Delta_3 &= \left(\beta_6(\beta_1-\beta_2)-\beta_4^2+\beta_5^2\right)^2 + 4\left(\beta_6\beta_3+\beta_4\beta_5 \right)^2, \nonumber
\end{align}
established first in \cite{winternitz1965invariants} and later in \cite{mclenaghan2002group,smirnov2004covariants,smirnov2006darboux}. As was shown first by Winternitz and Fri{\v{s}} in 1965 \cite{winternitz1965invariants} and again in 2002  by McLenaghan et al.~\cite{mclenaghan2002group} via the invariant theory of Killing tensors, the invariants $\Delta_1$ and $\Delta_3$ can be used to classify the orbits of $\k^2(\mathbb{E}^2)/SE(2)$ as follows
\begin{align*}
\textnormal{Elliptic-hyperbolic} &:~ \Delta_1 \neq 0, \quad \Delta_3 \neq 0,\\
\textnormal{Parabolic} &:~ \Delta_1 = 0, \quad \Delta_3 \neq 0,\\
\textnormal{Polar} &:~ \Delta_1 \neq 0, \quad\Delta_3 = 0, \\
\textnormal{Cartesian} &:~ \Delta_1 = 0,\quad  \Delta_3 = 0.
\end{align*}
 
The final tool we require from the invariant theory comes from considering the group action on the product space obtained by taking two copies of the vector space of Killing tensors: $\k^2(\mathbb{E}^2)\times \k^2(\mathbb{E}^2)$. \newnot{symbol:times}  This action leads to the notion of \emph{joint invariants} of Killing tensors, the topic of discussion featured in our next section. 
\subsection{Joint Invariants of Killing Tensors}
\label{section:joints}
We now present the geometric properties of joint invariants of Killing two-tensors  (to be defined below) as first introduced by Smirnov and Yue in 2004 \cite{smirnov2004covariants}.  Our notation will be compatible with Adlam et al.~\cite{adlam2008joint}, who extended the study established in \cite{smirnov2004covariants} and demonstrated its applicability to superintegrable systems.  We specialise this development for the isometry group which acts on the product space $\k^{2}(\mathbb{E}^2) \times \k^{2}(\mathbb{E}^2)$, where we begin with the joint invariants of \emph{non-degenerate} orbits of the orbit space $({\k}^2(\mathbb{E}^2) \times {\k}^2(\mathbb{E}^2))/SE(2)$ studied in \cite{adlam2008joint}.  

A three-dimensional orbit is defined to be \emph{non-degenerate} provided that, along the orbit, the invariant given by $k^2 = \sqrt{\Delta_3}/(\Delta'_1)^2,$ is non-vanishing.  The action $SE(2) \circlearrowright {\k}^2(\mathbb{E}^2) \times {\k}^2(\mathbb{E}^2)$ for which  $k^2 \neq 0$ is likewise termed non-degenerate.  The geometric interpretation of the invariant $k^2$ is the (half) distance between the foci of an elliptic-hyperbolic coordinate web generated by the normal eigenvectors of a Killing two-tensor, and so it follows that such Killing tensors will have a correspondence with the non-degenerate orbits \cite{mclenaghan2002group,smirnov2006darboux,winternitz1965invariants}.  

In this situation, the isometry group $SE(2)$ acts on each copy of $\k^{2}(\mathbb{E}^2)$ with three-dimensional non-degenerate orbits. If we denote the parameters of each vector space by $\alpha_i,\beta_i, i=1, \dots, 6,$ respectively, then the group action is induced by the corresponding transformation laws given by (\ref{eq:transformed-KT-parameters}).  Explicitly, we have that the parameters $\alpha_i, i=1,\dots,6$ in the first copy of $\k^{2}(\mathbb{E}^2)$ transform as
\begin{align}
\label{JAa}
\begin{array}{rcl}
\tilde{\alpha}_1 &=& \alpha_1\cos^2\theta - 2\alpha_3\cos \theta\sin \theta + \alpha_2\sin^2\theta - 2p_2\alpha_4\cos \theta 
\\
                  & & - 2p_2\alpha_5\sin \theta + \alpha_6p_2^2, \\ 
\tilde{\alpha}_2 & =& \alpha_1\sin^2\theta - 2\alpha_3\cos \theta\sin \theta +
\alpha_2\cos^2\theta - 2p_1\alpha_5\cos \theta  \\ 
 & &+ 2p_1\alpha_4\sin \theta  + \alpha_6p_1^2,\\
\tilde{\alpha}_3 & = & (\alpha_1-\alpha_2)\sin \theta\cos \theta + \alpha_3(\cos^2\theta - \sin^2\theta) 
\\ 
 & &  + (p_1\alpha_4 + p_2\alpha_5)\cos \theta + (p_1\alpha_5 - p_2\alpha_4)\sin \theta - \alpha_6p_1p_2, \\
 \tilde{\alpha}_4 & = & \alpha_4\cos \theta + \alpha_5\sin \theta - \alpha_6p_2, \\
 \tilde{\alpha}_5 & = & \alpha_5\cos \theta - \alpha_4\sin \theta - \alpha_6p_1,\\
 \tilde{\alpha}_6 &=& \alpha_6,
\end{array}
\end{align}
and the equivalent relationship in terms of the parameters $\beta_i, i=1,\dots,6,$ for the second copy,
\begin{align}
\label{JAb}
\begin{array}{rcl}
\tilde{\beta}_1 &=& \beta_1\cos^2\theta - 2\beta_3\cos \theta\sin \theta + \beta_2\sin^2\theta - 2p_2\beta_4\cos \theta \\
                  & & - 2p_2\beta_5\sin \theta + \beta_6p_2^2, \\ 
\tilde{\beta}_2 & =& \beta_1\sin^2\theta - 2\beta_3\cos \theta\sin \theta +
\beta_2\cos^2\theta - 2p_1\beta_5\cos \theta  \\ 
 & &+ 2p_1\beta_4\sin \theta  + \beta_6p_1^2,\\
\tilde{\beta}_3 & = & (\beta_1-\beta_2)\sin \theta\cos \theta + \beta_3(\cos^2\theta - \sin^2\theta) 
\\ 
 & &  + (p_1\beta_4 + p_2\beta_5)\cos \theta + (p_1\beta_5 - p_2\beta_4)\sin \theta - \beta_6p_1p_2, \\
 \tilde{\beta}_4 & = & \beta_4\cos \theta + \beta_5\sin \theta - \beta_6p_2, \\
 \tilde{\beta}_5 & = & \beta_5\cos \theta - \beta_4\sin \theta - \beta_6p_1,\\
 \tilde{\beta}_6 &=& \beta_6.
\end{array}
\end{align} 
Additionally, the conditions 
\[
k^2_1 = \frac{\sqrt{(\alpha_4^2-\alpha_5^2 + \alpha_6(\alpha_2-\alpha_1))^2 + 4(\alpha_6\alpha_3 + \alpha_4\alpha_5)^2 }}{\alpha_6} \not=0,
\]
\[
k^2_2 = \frac{\sqrt{(\beta_4^2-\beta5^2 + \beta_6(\beta_2-\beta_1))^2 + 4(\beta_6\beta_3 + \beta_4\beta_5)^2 }}{\beta_6} \not=0
\]
hold true.

If we denote the six dimensional space defined by the parameters $\alpha_i, i=1,\dots,6$ by $A \simeq \R^6$, then we can define a \emph{joint invariant} \cite{smirnov2004covariants} of the product space ${\k}^2(\mathbb{E}^2) \times {\k}^2(\mathbb{E}^2)$ as a function $\mathcal{J}:A \times A \to \R$ that satisfies
\begin{align}
\mathcal{J} & = F(\alpha_1, \dots, \alpha_6, \beta_1,\dots,\beta_6)
\label{eq:JIdef}\\
& = F(\tilde{\alpha_1}, \dots, \tilde{\alpha_6}, \tilde{\beta_1},\dots,\tilde{\beta_6}),
\end{align}
under the transformation laws induced by the isometry group $SE(2)$.  Six joint invariants of the action $SE(2) \circlearrowright {\k}^2(\mathbb{E}^2) \times {\k}^2(\mathbb{E}^2)$ then immediately follow from (\ref{eq:invariants}), namely
\begin{equation}
\label{eq:joint-invariants}
\begin{array}{rcl}
{\Delta}'_1 & = & \alpha_6,\\
{\Delta}'_2 & = & \alpha_6(\alpha_1 + \alpha_2) - \alpha_4^2 - \alpha_5^2,\\
{\Delta}'_3 & = & (\alpha_6(\alpha_1 - \alpha_2) - \alpha_4^2 + \alpha_5^2)^2 + 4(\alpha_6 \alpha_3 + \alpha_4 \alpha_5)^2,\\
{\Delta}'_4 & = & \beta_6,\\
{\Delta}'_5 & = & \beta_6(\beta_1 + \beta_2) - \beta_4^2 - \beta_5^2,\\
{\Delta}'_6 & = & (\beta_6(\beta_1 - \beta_2) - \beta_4^2 + \beta_5^2)^2 + 4(\beta_6 \beta_3 + \beta_4 \beta_5)^2.
\end{array}
\end{equation}
We remark that since the joint invariants are obtained via the moving frames method then they are functionally independent.  Further, any analytic function of the invariants (\ref{eq:joint-invariants}) will also be a joint invariant of the non-degenerate group action $SE(2) \circlearrowright {\k}^2(\mathbb{E}^2) \times {\k}^2(\mathbb{E}^2)$ \cite{olverblue}.  The \emph{Fundamental Theorem on invariants of regular Lie group actions} tells us that we should obtain a total of 9 fundamental invariants, which we arrive at by subtracting the dimension of the orbits, 3, from the dimension of the product space   ${\k}^2(\mathbb{E}^2) \times {\k}^2(\mathbb{E}^2)$, 12.  Thus, we are in search of three more fundamental invariants.  

These remaining joint invariants can be derived by employing geometric consideration.  Recall that each element of a \emph{non-degenerate} orbit of the action ${\k}^2(\mathbb{E}^2) \times {\k}^2(\mathbb{E}^2)$ corresponds to a Killing tensor whose normal eigenvectors generate an elliptic-hyperbolic web.  The foci represent the \emph{singular points} of this coordinate web, which are characterised, following \cite{benenti1997intrinsic}, as the points where the eigenvalues of the associated Killing tensor are equal.  Establishing this degeneracy, one can solve for the explicit form of the singular points in terms of the arbitrary parameters, from which we have that the coordinates for the two foci, denoted $S_1, S_2$,  in the elliptic-hyperbolic web are (see \cite{mclenaghan2002group})
\begin{equation}
\label{S1S2}
\begin{array}{lr}
(x_1,y_1)_{S_1}  = & \\
& \left(\frac{-\beta_5}{\beta_6} + \frac{1}{\beta_6}\left(\frac{\sqrt{\Delta'_6} - \sigma_1}{2}\right)^{1/2}, 
\frac{-\beta_4}{\beta_6} + \frac{1}{\beta_6}\left(\frac{\sqrt{\Delta'_6} + \sigma_1}{2}\right)^{1/2}\right), \\[1cm]
(x_2,y_2)_{S_2}  = & \\
& \left(\frac{-\beta_5}{\beta_6} - \frac{1}{\beta_6}\left(\frac{\sqrt{\Delta'_6} - \sigma_1}{2}\right)^{1/2}, 
\frac{-\beta_4}{\beta_6} - \frac{1}{\beta_6}\left(\frac{\sqrt{\Delta'_6} + \sigma_1}{2}\right)^{1/2}\right), 
\end{array}
\end{equation} 
where $\sigma_1 = \beta_4^2 - \beta_5^2 + \beta_6(\beta_2 - \beta_1)$ and $\Delta'_6$ is as defined in (\ref{eq:joint-invariants}).  Considering the product space $ \k^2(\mathbb{E}^2) \times \k^2(\mathbb{E}^2)$,  we have another set of foci, $(x_3,y_3)_{S_3}$ and $(x_4,y_4)_{S_4},$ appropriately defined with respect to the parameters $\alpha_i, i=1,\dots, 6$, $\sigma_1$ in terms of $\alpha_i$, and $\Delta'_3$.  In \cite{adlam2008joint}, this identification motivates the authors to take the action $SE(2) \circlearrowright  \k^2(\mathbb{E}^2) \times \k^2(\mathbb{E}^2)$ as the free and regular action $SE(2) \circlearrowright \mathbb{E}^2 \times \mathbb{E}^2 \times \mathbb{E}^2 \times \mathbb{E}^2$, and  hence, use the \emph{Weyl theorem on joint invariants} to conclude that the square of the distances between the foci will be a joint invariant of the preceding action.  Thus, in the case of a non-degenerate action, we can take the remaining three invariants as 
\begin{equation}
\label{eq:joint-invariants-three}
\begin{array}{rcl}
{\Delta}'_7 & = & d^2(S_2,S_3) = (x_2 - x_3)^2 + (y_2 - y_3)^2,\\
{\Delta}'_8 & = & d^2(S_1,S_3) = (x_1 - x_3)^2 + (y_1 - y_3)^2,\\ 
{\Delta}'_9 & = & d^2(S_2,S_4) = (x_2 - x_4)^2 + (y_2 - y_4)^2,\\
\end{array}
\end{equation}\newnot{symbol:distance}
where $(x_i,y_i), i=1,\dots,4$ are given by (\ref{S1S2}) in terms of $\beta_i$ and analogously in terms of $\alpha_i$.  Note that we require the quadrilateral $S_1S_3S_4S_2$ to be \emph{rigid} so that we need to only specify the distance of one of the diagonals and the other can be determined as a function of these nine invariants.  Other geometrically motivated joint invariants can be established in this manner as well, which would be obtained as functions of the invariants ${\Delta}'_1 - {\Delta}'_9$, such as the areas within the quadrilateral whose vertices are given by the four singular points--- see Figure \ref{quad}.  In the next chapter, we characterise the case when one of the Killing tensors $\boldsymbol{K} \in \k^2(\mathbb{E}^2)$ belongs to a degenerate orbit of the orbit space $\k^2(\mathbb{E}^2)/SE(2)$.
\begin{figure}[h!tb]
	\centering
		\includegraphics[width=\textwidth]{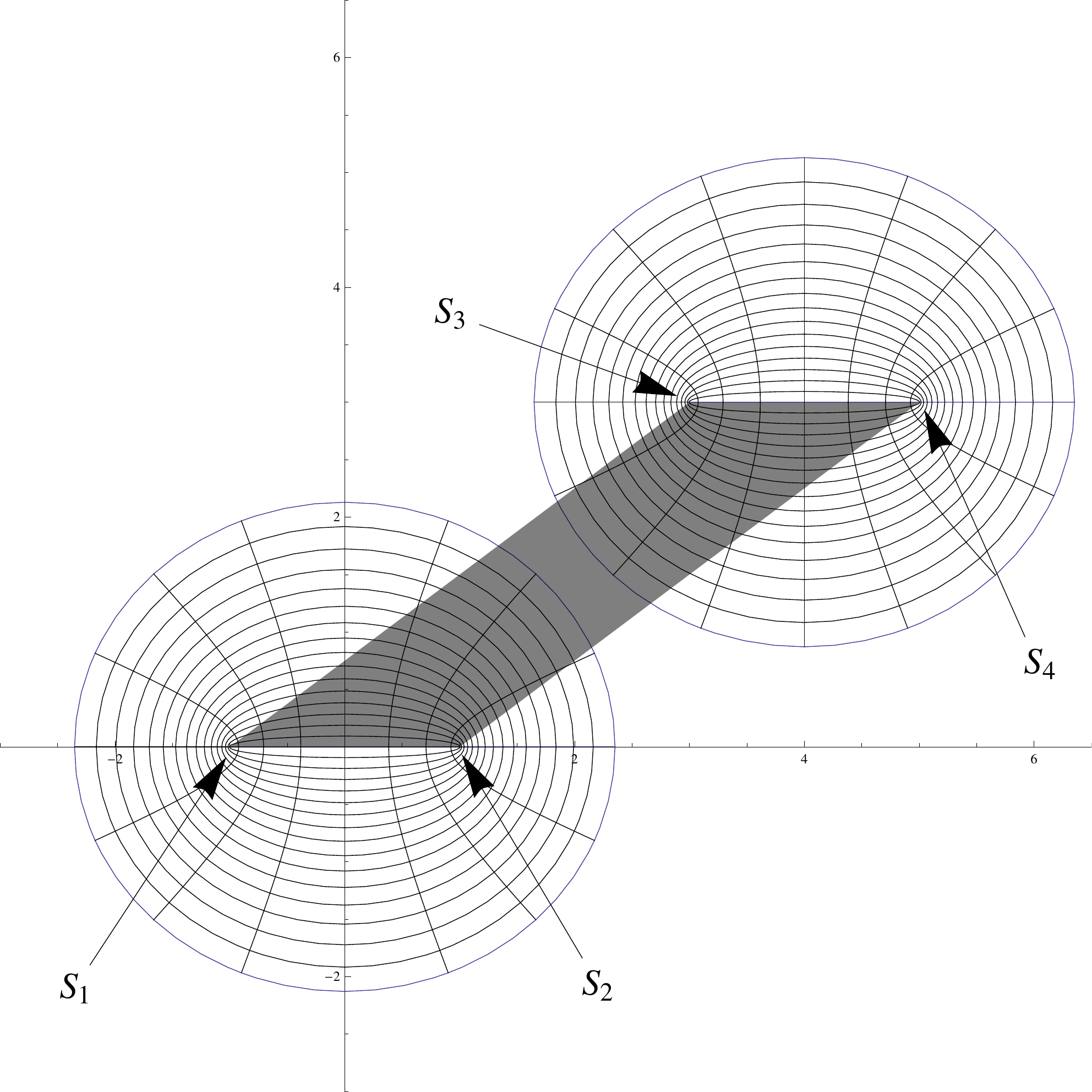}
	\caption{Quadrilateral whose vertices are given by the singular points of two non-degenerate orbits $\{\boldsymbol{K}_{EH},\boldsymbol{K}_{EH}\} \in \k^2(\mathbb{E}^2) \times \k^2(\mathbb{E}^2)$}
\label{quad}
\end{figure}



\chapter{Superintegrable Systems on the Euclidean Plane}
\label{chapter:sw}

\ifpdf
\graphicspath{{Chapter3/Chapter3Figs/PNG/}{Chapter3/Chapter3Figs/PDF/}{Chapter3/Chapter3Figs/}}
\else
\graphicspath{{Chapter3/Chapter3Figs/EPS/}{Chapter3/Chapter3Figs/}}
\fi

We recall the significance of the first integrals of motion for a Hamiltonian system.  These are smooth functions in the generalised canonical coordinates that remain constant along the Hamiltonian flow, or orbits, of a system.  As such, they can be identified as \emph{constants of motion} for a system which elicits many familiar examples of first integrals.  Indeed, the total energy, angular momentum, and linear momentum are all examples of constants of motion that provide physical insight about a Hamiltonian system: in particular, each of these is an example of a conserved quantity that corresponds to a symmetry of the Lagrangian.  A systematic way of deriving these quantities from symmetry considerations is the heart of \emph{Noether's theorem} (see e.g.~\cite{goldstein}) where, for instance, conservation of energy is an artifact of the Lagrangian being invariant under changes in time.  In brief, we see that the first integrals of motion help to establish physical properties of a Hamiltonian system by signifying the existence of structure and symmetry.

\section{Development of Superintegrable Systems}
\label{superintegrable}
A Hamiltonian system with $n$ degrees of freedom is said to be \emph{completely integrable} if it admits $n$ globally defined, functionally independent first integrals of motion in involution with respect to the Poisson bracket (see Section \ref{intro}).  In the case that such a system admits $n+1$ such first integrals of motion the system is said to be \emph{minimally superintegrable}, whereas it is termed to be \emph{maximally superintegrable} if it admits $2n-1$ such first integrals.  In some cases we find that a system is solvable via orthogonal separation of variables in more than one system of coordinates, which qualifies such systems as being \emph{multi-separable}.  For the purpose of this study, we focus on superintegrable systems that exhibit this property of being multi-separable.

As discussed in Section \ref{section:history}, we can study the first integrals of natural Hamiltonian systems 
\begin{equation}
H(q^i,p_i) = \tfrac{1}{2} g^{i j} p_i p_j + V(q^i),
\label{eq:natty}
\end{equation}
 defined on flat pseudo-Riemannian manifolds which take a form quadratic in the momenta according to
\begin{equation}
F(q^i,p_i) = \tfrac{1}{2}K^{i j}(q^i)p_i p_j + U(q^i), \quad i,j = 1,2,
\label{eq:superfirst}
\end{equation}
which will afford orthogonal separation of variables in the associated Hamilton-Jacobi equation in the case that $K^{ij}$ is a characteristic Killing two-tensor satisfying the compatibility condition
\begin{equation}
\textnormal{d}\left(\boldsymbol{\hat{K}}\textnormal{d}V\right) = 0.
\label{eq:supercompat}
\end{equation}
This is the content of Benenti's theorem (see Section \ref{section:separability}). 

We can generalise this development to the study of first integrals \emph{polynomial} in the momenta of degree $p$ upon considering Killing tensors of valence $p$ (for a recent account, see e.g.~\cite{horwood2007higher})\footnote{Though, in studying Killing tensors of valence $p > 2$, we lose the interpretation of a system being orthogonally separable.}.  Indeed, the search for first integrals of motion in higher degrees of the momenta has been extensively studied for over a century, with Drach initiating the study of first integrals \emph{cubic} in the momenta in 1908 \cite{drach1908, ranada1997drach}.  In 2002, Gravel and Winternitz considered superintegrable systems on $\mathbb{E}^2$ characterised by one linear and one (non-trivial) cubic first integral \cite{gravel2002third}.  This work was continued by Gravel who then characterised systems with one quadratic first integral (derived from the system admitting separability in Cartesian coordinates), and one cubic first integral \cite{gravel2004}.  The origins of the invariant theory approach can be traced back to Winternitz and Fri{\v{s}} in 1965 \cite{winternitz1965invariants}, in which the authors computed the invariants of second-order symmetries of the Laplace equation defined on $\mathbb{E}^2$ under the action of the isometry group, and then used these invariants to classify the orthogonal separable webs on $\mathbb{E}^2$.  In 2002, this geometric approach resurfaced with McLenaghan et al.~in \cite{mclenaghan2004cubic}, who presented it in the language of the invariant theory of  Killing tensors.  In 2008, Adlam et al.~\cite{adlam2008joint} extended the theory constructed in \cite{mclenaghan2002group} and derived an invariant characterisation of the superintegrable potential admitted by the Kepler problem through the use of invariants and joint invariants of Killing two tensors that determined first integrals according to (\ref{eq:superfirst}).  

In this thesis, we continue the work of classifying superintegrable potentials on the Euclidean plane by providing a new perspective on the ideas developed in \cite{adlam2008joint}.  In particular, we investigate the joint invariants of a degenerate and non-degenerate orbit of the orbit space $(\k^2(\mathbb{E}^2)\times \k^2(\mathbb{E}^2))/SE(2)$ admitted by a Killing tensor whose eigenvalues generate an elliptic-hyperbolic web in the canonical position, and a second Killing tensor whose eigenvalues generate a polar web in the non-canonical position.  To this end we begin by providing a joint invariant characterisation of the Smorodinsky-Winternitz system, 
\begin{equation}
H = \tfrac{1}{2}\left(p_x^2 +  p_y^2\right) -\omega^2 \left(x^2 + y^2\right) + \frac{\alpha}{x^2} + \frac{\beta}{y^2},
\label{eq:sw-ham}
\end{equation}
where $\alpha, \beta,$ and $\omega$ are arbitrary parameters.  Our goal is to use joint invariants of Killing two tensors $\boldsymbol{K}$ defined on $\mathbb{E}^2$ \cite{smirnov2004covariants} to provide a link between the arbitrary parameters of the SW potential and the parameters which characterise the associated vector space of Killing tensors $\k^2(\mathbb{E}^2)$. 

Further, this will provide motivation for studying an interesting integrable perturbation of the SW potential that has recently appeared in the literature known as the \emph{Tremblay-Turbiner-Winternitz (TTW) potential}.  Our interest will be in determining which potentials admitted by the TTW system remain multi-separable in the Euclidean plane.  We remark that the results of the next two sections have been submitted for publication (see \cite{mypaper}).
\section{Characterisation of the Smorodinsky-Winternitz Potential} 
\label{section:sw}
The superintegrability of the Smorodinsky-Winternitz (SW) potential is a simple consequence of its multi-separability with respect to both canonical polar and canonical Cartesian coordinates.  The property of the coordinate web being \emph{canonical} can be interpreted either geometrically, (see Figure \ref{canon-webs}), or with respect to the form of the general characteristic Killing tensor defining the orthogonal coordinate web (see \cite{mclenaghan2002group}).  Geometrically, canonical Cartesian coordinates are aligned with the coordinate axes and canonical polar coordinates have their singular point coincide with the origin of the coordinate axes; the associated Killing tensors for each canonical web respectively take the forms (\ref{eq:cart-canon-KT}) and (\ref{eq:polar-canon-KT}) given below.  

Since the SW potential is of the form $V = f(x) + g(y)$, it immediately follows that the system enjoys separability with respect to (canonical) Cartesian coordinates, and so the Hamiltonian flow defined with respect to (\ref{eq:sw-ham}) admits a first integral quadratic in the momenta according to
\begin{equation}
F_{1} = \tfrac{1}{2}K^{ij}_{1} p_i p_j + U_{1}, \quad i, j = 1, 2,
\label{eq:f1}
\end{equation}
where $K^{ij}_1$ is the Killing two-tensor whose normal eigenvectors generate the (canonical) Cartesian coordinate web:
\begin{equation}
K^{ij}_{1} = \left(
\begin{array}{cc}
 1 & 0 \\
 0 & 0
\end{array}
\right).
\label{eq:cart-canon-KT}
\end{equation}
Imposing the compatibility condition (\ref{eq:supercompat}) requires that $\textnormal{d}U_1 = \boldsymbol{\hat{K}}_1 \textnormal{d}V,$ which we can readily integrate to obtain that  $U_1(x,y) = \frac{\alpha}{x^2} - \omega^2 x^2$.  This provides us with the first non-trivial quadratic first integral. 

If we transform the potential of (\ref{eq:sw-ham}) to polar coordinates we immediately see that it is of the from $V(r, \theta) = \frac{f(\theta)}{r^2} + g(r)$, and so that indeed the Hamiltonian system defined by (\ref{eq:sw-ham}) is also separable with respect to (canonical) polar coordinates.  Thus, we have that it admits a second quadratic first integral according to 
\begin{equation}
F^{ij}_2 = \tfrac{1}{2}K^{ij}_2 p_i p_j + U_2, \quad i,j = 1,2,
\label{eq:f2}
\end{equation}
where $K^{ij}_2$ is given by the Killing two-tensor whose normal eigenvectors generate the (canonical) polar web:
\begin{equation}
K^{ij}_2 = \left(
\begin{array}{cc}
 y^2 & -x y \\
 -x y & x^2
\end{array}
\right)
\label{eq:polar-canon-KT}
\end{equation}
In this case, we find upon integrating $\textnormal{d}U_2 = \boldsymbol{\hat{K}}_2 \textnormal{d}V$  that $U_2(x,y) = \frac{a y^2}{x^2} + \frac{b x^2}{y^2}$, and so we have obtained two non-trivial quadratic first integrals.

The functional independence of $F_1$ and $F_2$ follows from considering that the wedge product of their differentials with $\textnormal{d}H$ at a point $p \in \mathbb{E}^2$ is non-vanishing
\[
\textnormal{d}H \wedge \textnormal{d}F_1 \wedge \textnormal{d}F_2(p) \neq 0,
\] 
a criterion which yields that the set of $F_1$ and $F_2$ defines a coordinate system, since this is equivalent to saying that the Jacobian determinant of the functions $F_1$ and $F_2$ at a point $p \in \m$ is non-vanishing, and so by the Implicit Function theorem the set indeed forms a coordinate system (e.g.~see \cite{lee2003intro}).  Furthermore, since both are mutually in involution with the Hamiltonian, i.e.~$\{F_1,H\} = \{F_2,H\} = 0$, then we have at once that the Hamiltonian system (\ref{eq:sw-ham}) is superintegrable (since it is multi-separable) with $2n-1 = 3$ functionally independent first integrals of motion, namely the set $\{H, F_1, F_2\}$.  Since the function $F_3 = F_1 + F_2$ is also a first integral of motion, with an associated Killing two-tensor $\boldsymbol{K}_1 + \boldsymbol{K}_2 = \boldsymbol{K}_{EH},$ where $\boldsymbol{K}_{EH}$ denotes the (canonical) Killing tensor whose normal eigenvectors generate the elliptic-hyperbolic coordinate web, then it follows that the SW potential will also be separable in (canonical) elliptic-hyperbolic coordinates.  Indeed, the canonical characteristic Killing tensor that generates the elliptic-hyperbolic coordinate web is given by
\begin{equation}
K^{ij}_{EH} = \left(
\begin{array}{cc}
 y^2 +k^2  & -x y \\
 -x y & x^2
\end{array}
\right),
\end{equation}
(where $k^2$ is an invariant to be defined below) which makes it clear that any potential of a Hamiltonian system compatible with both the valence two Killing tensor of polar type and the valence two Killing tensor of Cartesian type will also be compatible with the valence two Killing tensor of elliptic-hyperbolic type, by linearity of the exterior derivative.  

We can now make use of joint invariants, introduced in Section \ref{section:joints}, to conclude that separation in (canonical) polar and (canonical) elliptic-hyperbolic coordinates implies that the quadrilateral (see Figure \ref{quad}) $S_1S_2S_4S_3$ degenerates in the following manner.  Denote by $S_1$ and $S_2$ the singular points of the Killing two-tensor $\boldsymbol{K}_3$ whose first integral is given by $F_3$ as defined above (i.e.~$\boldsymbol{K}_3 = \boldsymbol{K}_{EH}$), and similarly, take $S_3$ and $S_4$ as the singular points of $\boldsymbol{K}_2$.  Then since both $\boldsymbol{K}_1$ and $\boldsymbol{K}_2$, and hence $\boldsymbol{K}_3$ are in canonical form, we find that the quadrilateral $S_1S_2S_4S_3$ degenerates according to the invariant conditions $S_3 = S_4$, and $\textnormal{dist}(S_1,S_3) = \textnormal{dist}(S_2,S_4)$.  In terms of the fundamental joint invariants (\ref{eq:joint-invariants}), we can equivalently write these conditions as 
\[
\Delta'_3 = 0, \qquad \Delta'_8 = \Delta'_9,
\]  
respectively.

Conversely, if we begin with a general potential $V$ of a natural Hamiltonian (\ref{eq:natty}) and impose that the Hamiltonian system be multi-separable with respect to canonical polar and canonical Cartesian coordinates, i.e.~impose that it admits two quadratic first integrals given by $F_1$ and $F_2$ as above, then integrating the  conditions $\textnormal{d}U_i = \boldsymbol{\hat{K_i}}\textnormal{d}V$ for $i=1,2$ will yield the Smorodinsky-Winternitz potential, which follows from \cite{fris1965higher,adlam2005thesis}.  Therefore, we can conclude that the superintegrable system given by the Smorodinsky-Winternitz potential is characterised by a pair of Killing tensors $(\boldsymbol{K}_2,\boldsymbol{K}_3) \in \k^2(\mathbb{E}^2) \times \k^2(\mathbb{E}^2)$ whose position in the orbit space   $(\k^2(\mathbb{E}^2) \times \k^2(\mathbb{E}^2))/SE(2)$ is determined by the invariant conditions 
\[
\Delta'_1 \neq 0, \quad \Delta'_3 = 0,\quad \Delta'_4 \neq 0,\quad \Delta'_6 \neq 0,\quad \Delta'_7 = \Delta'_8 = \Delta'_9.
\]
Thus, we have the following:
\begin{theo}
Let $V$ be the potential of a natural Hamiltonian 
\[ 
H(q^i,p_i) = \tfrac{1}{2} g^{ij}(q^i)p_ip_j + V(q^i), i, j = 1,2
\]
 satisfying $\textnormal{d}\left(\boldsymbol{\hat{K}}_{\ell} \textnormal{d}V\right) = 0, \ell=2,3$ for a pair of Killing tensors $(\boldsymbol{K}_2,\boldsymbol{K}_3) \in \k^2(\mathbb{E}^2) \times \k^2(\mathbb{E}^2)$.  Then the following statements are equivalent:
\begin{enumerate}[label=(\arabic*)]
	\item The pair $(\boldsymbol{K}_2,\boldsymbol{K}_3) \in \k^2(\mathbb{E}^2) \times \k^2(\mathbb{E}^2)$ is invariantly characterised by the conditions 
	\begin{align}
	\Delta'_1 &\neq 0, \quad \Delta'_3 = 0,\quad \Delta'_4 \neq 0,\nonumber \\
	\Delta'_6 &\neq 0,\quad \Delta'_7 = \Delta'_8 = \Delta'_9.
  \label{inv}
  \end{align}
  \item The potential $V$ given by the natural Hamiltonian is the Smorodinsky-Winternitz potential given in Cartesian coordinates $(q^1, q^2) = (x,y)$ by
  \begin{equation}
  V(x,y) = -\omega^2 (x^2 + y^2) + \frac{\alpha}{x^2} + \frac{\beta}{y^2}.
  \label{eq:swpotty}
  \end{equation}
\end{enumerate}
\end{theo}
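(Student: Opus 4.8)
The plan is to prove the equivalence by exploiting the fact that every condition in (\ref{inv}) is an $SE(2)$-invariant, so both statements may be checked on a canonical representative of the orbit of the pair $(\boldsymbol{K}_2,\boldsymbol{K}_3)$ in $(\k^2(\mathbb{E}^2)\times\k^2(\mathbb{E}^2))/SE(2)$. Throughout I would keep in mind the identification of copies forced by the conditions: $\Delta'_1\neq 0,\ \Delta'_3=0$ place the first copy $\boldsymbol{K}_2$ in the \emph{polar} orbit, while $\Delta'_4\neq 0,\ \Delta'_6\neq 0$ place the second copy $\boldsymbol{K}_3$ in the \emph{elliptic-hyperbolic} orbit, consistent with the orbit classification by $\Delta_1,\Delta_3$ recalled in Section \ref{section:ITKT}.

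For the direction $(2)\Rightarrow(1)$, assume $V$ is the Smorodinsky-Winternitz potential (\ref{eq:swpotty}). As established in the discussion preceding the theorem, $V$ has the form $f(x)+g(y)$ and, in polar coordinates, the form $h(\theta)/r^2+j(r)$, so by Benenti's theorem it is compatible with the canonical Cartesian Killing tensor (\ref{eq:cart-canon-KT}) and the canonical polar Killing tensor (\ref{eq:polar-canon-KT}); taking $\boldsymbol{K}_3=\boldsymbol{K}_1+\boldsymbol{K}_2=\boldsymbol{K}_{EH}$ yields the pair $(\boldsymbol{K}_2,\boldsymbol{K}_3)$ in canonical position. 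I would then read off the parameter values from (\ref{eq:polar-canon-KT}) and the form of $\boldsymbol{K}_{EH}$, substitute them into (\ref{eq:joint-invariants}) and the focal distances (\ref{eq:joint-invariants-three}) computed via (\ref{S1S2}), and verify directly that $\Delta'_1\neq 0$, $\Delta'_3=0$, $\Delta'_4\neq 0$, $\Delta'_6\neq 0$, and $\Delta'_7=\Delta'_8=\Delta'_9$. The last equality is transparent geometrically: the common focus of the canonical polar web sits at the origin, which lies on the perpendicular bisector of the segment joining the two origin-symmetric foci $(\pm k,0)$ of $\boldsymbol{K}_{EH}$, whence the degeneracy $S_3=S_4$ forces all three distances to equal $k$.

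For the harder direction $(1)\Rightarrow(2)$, I would first act by a translation to move the common singular point of the polar tensor $\boldsymbol{K}_2$ to the origin, placing $\boldsymbol{K}_2$ in canonical form, and then use the residual rotational freedom to align the focal axis of $\boldsymbol{K}_3$ with a coordinate axis. The role of the joint condition $\Delta'_7=\Delta'_8=\Delta'_9$ is then to pin down the placement of the foci of $\boldsymbol{K}_3$: since the polar web degenerates so that $S_3=S_4$, these equalities say precisely that the polar centre is equidistant from the two elliptic-hyperbolic foci $S_1,S_2$, which together with the canonical alignment forces them to be symmetric about the origin. With both tensors in canonical position I would form $\boldsymbol{K}_1:=\boldsymbol{K}_3-\boldsymbol{K}_2$ and check that it is a nonzero multiple of the canonical Cartesian tensor (\ref{eq:cart-canon-KT}), the multiple being $k^2\neq 0$ by non-degeneracy ($\Delta'_6\neq 0$). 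By $\mathbb{R}$-linearity of the compatibility condition, $\textnormal{d}(\boldsymbol{\hat{K}}_1\textnormal{d}V)=\textnormal{d}(\boldsymbol{\hat{K}}_3\textnormal{d}V)-\textnormal{d}(\boldsymbol{\hat{K}}_2\textnormal{d}V)=0$, so $V$ is simultaneously compatible with the canonical Cartesian and canonical polar Killing tensors. Integrating the two Bertrand-Darboux conditions $\textnormal{d}U_i=\boldsymbol{\hat{K}}_i\textnormal{d}V$ and invoking the classification of \cite{fris1965higher,adlam2005thesis} then forces $V$ to be the SW potential (\ref{eq:swpotty}).

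I expect the main obstacle to be the precise algebraic translation of the focal-distance equalities $\Delta'_7=\Delta'_8=\Delta'_9$ into the symmetric placement of the elliptic-hyperbolic foci, and in reconciling the degeneracy $S_3=S_4$ of the polar web with the non-degenerate focal formalism in which the invariants (\ref{eq:joint-invariants-three}) were originally defined. Care is needed to confirm that the singular-point coordinates (\ref{S1S2}) specialise gracefully (in the limit $\Delta'_3\to 0$) on the polar copy, and that the translational normalisation used to canonicalise $\boldsymbol{K}_2$ does not exhaust the rotational freedom still required to canonicalise $\boldsymbol{K}_3$. Once canonical position is secured, the integration of the compatibility condition is routine, being exactly the computation already cited from the literature.
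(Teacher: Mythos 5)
Your direction $(2)\Rightarrow(1)$ is essentially the paper's own argument: the SW potential separates in canonical Cartesian and canonical polar coordinates, $\boldsymbol{K}_3=\boldsymbol{K}_1+\boldsymbol{K}_2$ is the canonical elliptic-hyperbolic tensor, and the conditions (\ref{inv}) are then verified on this canonical pair (note the squared distances all equal $k^2$, not $k$). The genuine gap is in $(1)\Rightarrow(2)$, at the step where you claim that equidistance of the polar centre from the two foci, ``together with the canonical alignment, forces them to be symmetric about the origin.'' After you translate the polar centre $S_3=S_4$ to the origin, the residual rotation about the origin can only make the focal axis of $\boldsymbol{K}_3$ \emph{parallel} to the $x$-axis; it cannot force that line to pass through the origin. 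Writing the foci as $(x_0\pm k,\,y_0)$, the condition $\Delta'_7=\Delta'_8$ reads $(x_0-k)^2+y_0^2=(x_0+k)^2+y_0^2$, which kills only $x_0$; the perpendicular offset $y_0$ is completely unconstrained. Concretely, take $\boldsymbol{K}_2$ to be the canonical polar tensor (\ref{eq:polar-canon-KT}) and
\[
K^{ij}_3=\begin{pmatrix}(y-y_0)^2+k^2 & -x(y-y_0)\\ -x(y-y_0) & x^2\end{pmatrix},\qquad y_0\neq 0.
\]
From (\ref{eq:joint-invariants}) and (\ref{S1S2}) one finds $\Delta'_1=1$, $\Delta'_3=0$, $\Delta'_4=1$, $\Delta'_6=k^4$, and $\Delta'_7=\Delta'_8=\Delta'_9=k^2+y_0^2$, so every condition in (\ref{inv}) holds; yet
\[
K^{ij}_3-K^{ij}_2=\begin{pmatrix}k^2+y_0^2-2y_0y & xy_0\\ xy_0 & 0\end{pmatrix}
\]
has $\Delta_1=0$ and $\Delta_3=y_0^4\neq 0$, i.e.\ it is of \emph{parabolic} type, not a multiple of (\ref{eq:cart-canon-KT}). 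So the reduction ``$\boldsymbol{K}_3-\boldsymbol{K}_2$ is Cartesian, hence $V$ is compatible with the canonical Cartesian and polar tensors, hence $V$ is SW'' collapses exactly at the point you yourself flagged as the expected obstacle. In fact the conditions (\ref{inv}) do not cut out a single orbit of $SE(2)\circlearrowright\k^2(\mathbb{E}^2)\times\k^2(\mathbb{E}^2)$: the quantity $\Delta'_7-\sqrt{\Delta'_6}=y_0^2$ is itself a joint invariant separating these mutually non-isometric configurations.

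For comparison, the paper never attempts your normalisation argument: its converse direction simply takes the pair to be the canonical polar/Cartesian (equivalently elliptic-hyperbolic) pair and cites \cite{fris1965higher,adlam2005thesis} for the fact that integrating the two compatibility conditions yields the SW potential. That is, the identification of condition (1) with the canonical configuration---precisely the step your argument gets wrong---is passed over there rather than proved. To repair your proof you would have to dispose of the off-centre configurations $y_0\neq 0$ separately; the paper's own ``weakening'' analysis (the case $a=0$, $b\neq 0$ of (\ref{eq:compatsw})) shows that in those configurations the compatible potential degenerates to a single inverse-square term, so the stated equivalence survives only if one reads (\ref{eq:swpotty}) as allowing vanishing parameters, and a complete proof must say so and handle that branch explicitly.
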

Thus, we have shown that by invariantly characterising the pair $(\boldsymbol{K}_2,\boldsymbol{K}_3) \in \k^2(\mathbb{E}^2) \times \k^2(\mathbb{E}^2)$, we have in fact characterised the SW potential which they define.
Next, we study what happens to the form of the SW potential in the case that we weaken the invariant conditions (\ref{inv}).  
\subsection{Weakening the Conditions}
We shall first suppose that the Hamiltonian system (\ref{eq:natty}) admits two quadratic first integrals of motion according to (\ref{eq:f1}) and (\ref{eq:f2}).  Further, suppose that the Killing tensor $\boldsymbol{K}_{P}$ generates (non-canonical) polar coordinates, and that the Killing tensor $\boldsymbol{K}_{EH}$ generates (non-canonical) elliptic-hyperbolic coordinates.  
Without loss of generality, and to facilitate calculations, we will assume that $\boldsymbol{K}_{EH}$ is in the canonical form, so that the elliptic-hyperbolic web is generated in the canonical position.   Then the singular points of the two webs form a general triangle $\triangle S_1S_2S_3$--- see Figure \ref{triangle}.  
\begin{figure}[h!tb]
	\centering
		\includegraphics[width=0.75\textwidth]{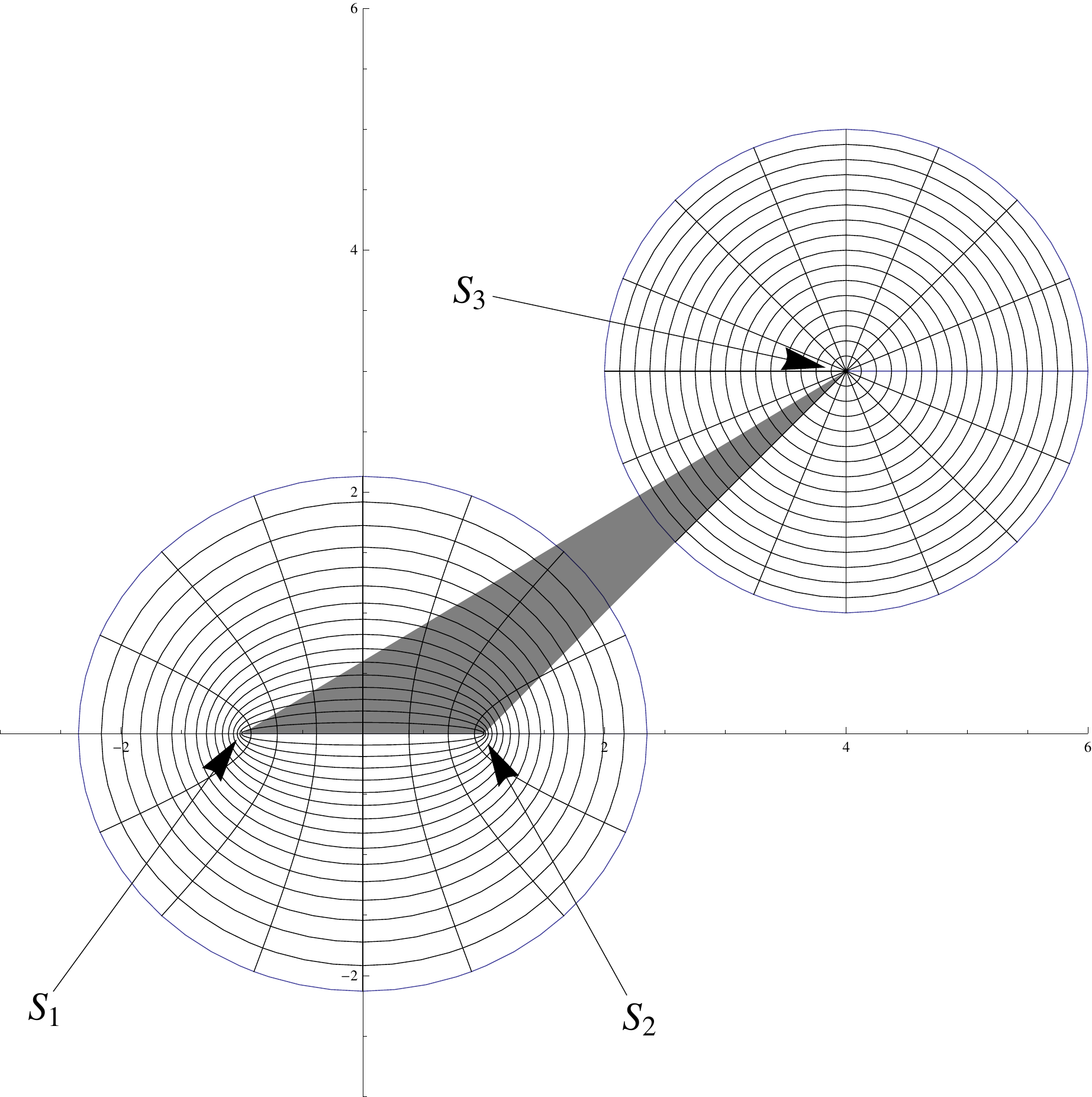}
	\caption{Triangle whose vertices are given by the singular points of two Killing tensors, one non-degenerate and the other degenerate,  $(\boldsymbol{K}_{EH},\boldsymbol{K}_{P}) \in \k^2(\mathbb{E}^2) \times  \k^2(\mathbb{E}^2)$.  In this case, $\boldsymbol{K}_{EH}$ is in the canonical form.} 
\label{triangle}
\end{figure}
Considering the fundamental joint invariants (\ref{eq:joint-invariants}), this arrangement induces the following invariant conditions for the pair $(\boldsymbol{K}_{P},\boldsymbol{K}_{EH}) \in \k^2(\mathbb{E}^2) \times \k^2(\mathbb{E}^2)$:
\begin{align}
\Delta'_1 \neq 0, \Delta'_3 &= 0, \Delta'_4 \neq 0, \nonumber\\
\Delta'_6 \neq 0, \Delta'_7 &= \Delta'_9 \neq \Delta'_8.  
\end{align}
We then have the following components for the respective Killing tensors as follows \cite{mclenaghan2002group}
\begin{equation}
K^{ij}_{P} = \left(
\begin{array}{cc}
 (y-b)^2 & (x-a) (y-b) \\
 (x-a) (y-b) & (x-a)^2
\end{array}
\right), \quad
a, b \in \R,
\label{polarKT}
\end{equation}
and
\begin{equation}
K^{ij}_{EH} = \left(
\begin{array}{cc}
 y^2 + k^2 & -x y \\
 - x y & x^2
\end{array}
\right),
k^2 \in \R,
\label{ehKT}
\end{equation}
where $k^2$ is the invariant given by (half) the distance between the foci of the elliptic-hyperbolic web (see Section \ref{section:joints}).  We then ask the question of how this setup affects the arbitrary parameters in the Smorodinsky-Winternitz potential (\ref{eq:swpotty}).  

We must first determine the freedom we have in the arbitrary parameters afforded by the vector space of Killing tensors.  Taking $F_1$ (\ref{eq:f1}) and $F_2$ (\ref{eq:f2}) as the first integrals of the Hamiltonian system defined by (\ref{eq:sw-ham}), the Killing two-tensor obtained from taking a linear combination as
\[
\boldsymbol{K}_g = c_1 \boldsymbol{K}_{P} + c_2 \boldsymbol{K}_{EH} + c_3 \g, \qquad c_1, c_2, c_3 \in \R,
\] 
\emph{must} also satisfy the compatibility condition with the Smorodinsky-Winternitz potential $V$ admitted by (\ref{eq:sw-ham}).  However, this raises a concern since $\boldsymbol{K}_g$ depends on six arbitrary parameters given by $k^2, a, b, c_1, c_2, c_3$, and six also happens to be the dimension of the vector space of Killing tensors $\k^2(\mathbb{E}^2).$  This implies that $\boldsymbol{K}_g$ must in fact be given by the general formula 
\begin{equation}
\begin{array}{rcl}
{\bf K} & = & \displaystyle (\beta_1 + 2\beta_4 y + \beta_6 y^2)\frac{\partial}{\partial x}
\odot  \frac{\partial}{\partial x} \\ 
& & + \displaystyle (\beta_3 - \beta_4x -
 \beta_5y - \beta_6x y)\frac{\partial}{\partial x} \odot \frac{\partial}{\partial y} \\ 
& & + \displaystyle (\beta_2 + 2\beta_5x+\beta_6x^2) \frac{\partial}{\partial y}\odot
\frac{\partial}{\partial y},
\end{array}
\label{gKt}
\end{equation}
from which it follows by the compatibility condition (\ref{eq:supercompat}) that the potential $V$ is trivial: $V =$ constant.  Therefore, a non-trivial potential will be obtained if at least one of the six parameters vanishes.  Considering the geometric implications afforded if we retain both Killing tensors $\boldsymbol{K}_{P}$ and $\boldsymbol{K}_{EH}$, the metric $\g$, and $k^2$, the only real freedom that remains is to switch off the parameters $a$ and $b$.  This presents us with the following cases:
\begin{center}
\begin{enumerate}[label=(\arabic*)]
	\item $a \neq 0, \quad b\neq 0$,\\
	\item $a = 0,\quad b\neq 0$,\\
	\item $a \neq 0, \quad b=0$, \\
	\item $a = b = 0$.
\end{enumerate}
\end{center}

The first case imposed that the SW potential would assume a trivial form.  We can introduce a new fundamental invariant (replacing $\Delta'_9$ in (\ref{eq:joint-invariants}), since with respect to the two orthogonal coordinate webs $\Delta'_9 = \Delta'_7$) given by the area of $\triangle S_1S_2S_3$.  In terms of the fundamental joint invariants (\ref{eq:joint-invariants}) and the area of $\triangle S_1S_2S_3$, we then have the following invariant characterisation.
\begin{center}
\begin{enumerate}[label=(\arabic*)]
	\item $\Delta'_1 \neq 0, \Delta'_3 = 0, \Delta'_4 \neq 0, \Delta'_6 \neq 0, \Delta'_7 \neq 0, \Delta'_8 \neq 0$, the area of $\triangle S_1S_2S_3 \neq 0$.  
	\item $\Delta'_1 \neq 0, \Delta'_3 = 0, \Delta'_4 \neq 0, \Delta'_6 \neq 0, \Delta'_7 - \Delta'_8 = 0$, the area of $\triangle S_1S_2S_3 \neq 0$.  In this case the $\triangle S_1S_2S_3$ takes the form of an isosceles triangle. 
	\item $\Delta'_1 \neq 0, \Delta'_3 = 0, \Delta'_4 \neq 0, \Delta'_6 \neq 0,\Delta'_7 \neq \Delta'_8,$ the area of $\triangle S_1S_2S_3 = 0$.  We remark that this is the case when the singular points all lie on the $x-$axis.  
	\item $\Delta'_1 \neq 0, \Delta'_3 = 0, \Delta'_4 \neq 0, \Delta'_6 \neq 0, \Delta'_7 = \Delta'_8$, area of $\triangle S_1S_2S_3 =0$
\end{enumerate}
\end{center}

We can next characterise the arbitrary parameters $a$ and $b$ as joint invariants themselves by writing them in terms of the fundamental joint invariants (\ref{eq:joint-invariants}).  To this end, we consider the general Killing tensor whose components with respect to Cartesian coordinates $(q^1,q^2)$ are given by (\ref{gKt}) and suppose that $\beta_6 \neq 0, \Delta =0,$ where 
\[
\Delta = (\beta_4^2 - \beta_5^2 + \beta_6(\beta_2-\beta_1))^2 + 4(\beta_6 \beta_3 + \beta_4 \beta_5)^2.
\] 
If we consider the Killing tensor $\boldsymbol{K}_{P}$, we can compare it with  (\ref{gKt}) via
\begin{equation}
K_{ij} = \ell g_{ij} + m \tilde{K}_{ij}^{P},
\label{eq:blah}
\end{equation}
where $\tilde{K}_{ij}^{P} = g_{i k}g_{\ell j}K^{k \ell}_{P}$, with $K^{k \ell}_{P}$ given by (\ref{polarKT}).  Comparing components on both sides of (\ref{eq:blah}) then amounts to identifying
\begin{align*}
\alpha_1 + 2 \alpha_4 y + \alpha_6 y^2 &= \ell + m (y-b)^2, \\
\alpha_3 - \alpha_4 x - \alpha_5 y - \alpha_6 x y &= -m(x-a)(y-b),\\
\alpha_2 + 2\alpha_5 x + \alpha_6 x^2 &= \ell + m(x-a)^2,
\end{align*}
which admits the following identification 
\[
(a, b) = \left(\frac{-\alpha_5}{\alpha_6}, \frac{-\alpha_4}{\alpha_6}\right).
\]
Through a bit of algebraic manipulation, we can then obtain an expression for each parameter in terms of the joint invariants (\ref{eq:joint-invariants}):
\[
a = \frac{(\Delta'_4)^2(\Delta'_8 - \Delta'_7) - \tfrac{1}{2}\Delta'_6}{2\Delta'_4\sqrt{\Delta'_6}}
\]
and
\[
b = \sqrt{\Delta'_7 - \left(a - \frac{\sqrt{\Delta'_6}}{2\Delta'_4}\right)^2}.
\]
Solving the PDEs that come from imposing the compatibility condition (\ref{eq:supercompat}) for the Killing two-tensor defined by (\ref{polarKT}) and the SW potential (\ref{eq:swpotty}), we arrive at the following equation
\begin{equation}
\omega^2(b x - ay) + \frac{a(b-y)\alpha}{x^4} + \frac{b(-a+x)\beta}{y^4} = 0,
\label{eq:compatsw}
\end{equation} 
which puts into evidence the following relationship
\begin{center}
\begin{enumerate}[label=(\arabic*)]
	\item $a \neq 0, b \neq 0$ yields the trivial potential $V =0$,
	\item $a =0, b\neq 0$ yields the potential $V = \dfrac{\beta}{y^2}$,
	\item $a \neq 0, b = 0$ yields the potential $V = \dfrac{\alpha}{x^2}$,
	\item $a = 0, b = 0$ yields the general Smorodinsky-Winternitz potential (\ref{eq:swpotty}). 
\end{enumerate}
\end{center}
Henceforth, we arrive at the conclusion that the more geometric structure the pair $(\boldsymbol{K}_{P},\boldsymbol{K}_{EH}) \in \k^2(\mathbb{E}^2) \times \k^2(\mathbb{E}^2)$ has (in the sense of placing the polar coordinate web in restricted positions), the more general the Smorodinsky-Winternitz potential (\ref{eq:swpotty}) becomes.  
In each of the cases described above the corresponding pair $(\boldsymbol{K}_{P},\boldsymbol{K}_{EH}) \in \k^2(\mathbb{E}^2) \times \k^2(\mathbb{E}^2)$ belongs to a different orbit of the group action $SE(2) \circlearrowright \k^2(\mathbb{E}^2) \times \k^2(\mathbb{E}^2)$ (see Section \ref{section:group-action}, eq. \ref{eq:transformed-KT-parameters}).  In fact, our above analysis shows that these corresponding orbits have been explicitly distinguished by means of joint invariants (\ref{eq:joint-invariants}) of the action (\ref{eq:transformed-KT-parameters}).  Considering our work as an extension of the results obtained in \cite{adlam2008joint}, we see that we have demonstrated that the use of joint invariants to classify the orbits of $(\k^2(\mathbb{E}^2) \times \k^2(\mathbb{E}^2))/SE(2)$ leads to a classification of the corresponding superintegrable potentials defined by quadratic first integrals in $\mathbb{E}^2$.  We remark on the further development of this perspective in Chapter \ref{conc}.
\section{The Tremblay-Turbiner-Winternitz System}
\label{section:ttw}
We now consider the Tremblay-Turbiner-Winternitz (TTW) potential introduced in 2009 \cite{TTW2009infinite}.  It is defined in polar coordinates by the natural Hamiltonian
\begin{equation}
H = p_r^2 + \frac{1}{r} p_r + \frac{1}{r^2}p_{\theta}^2 - \omega^2 r^2 + \frac{1}{r^2}\left(\frac{\lambda_1}{\cos^2{k\theta}} + \frac{\lambda_2}{\sin^2{k\theta}}\right),
\label{ttweq}
\end{equation}
which demonstrates that its potential,
\begin{equation}
V(r, \theta) = - \omega^2 r^2 + \frac{1}{r^2}\left(\frac{\lambda_1}{\cos^2{k\theta}} + \frac{\lambda_2}{\sin^2{k\theta}}\right),
\label{ttwpotty}
\end{equation}
is of the from $V(r, \theta) = \frac{f(\theta)}{r^2} + g(r)$ and so is separable in (canonical) polar coordinates.  By the results of Section \ref{section:separability}, we have that the system admits a first integral of motion taking the form
\[
F(q^i,p_i) = \tfrac{1}{2}K^{i j}(q^i)p_i p_j + U(q^i), \quad i,j = 1,2,
\]
where $K^{ij}$ is the Killing two-tensor which generates (canonical) polar coordinates, i.e.~(\ref{eq:polar-canon-KT}) and $U$ is determined via $\textnormal{d}U = \boldsymbol{\hat{K}}\textnormal{d}V$.  We then seek to answer the following question posed in the introduction of this thesis: 
\begin{center}
\emph{For which values of $k$ is the TTW system multi-separable?}  
\end{center}
In other words, we seek to determine for which values of $k$ the TTW potential (\ref{ttwpotty}) admits an additional (non-trivial) functionally independent quadratic first integral of motion.

To allow for arbitrary $k \in \R$, we cannot transform the TTW system back into Cartesian coordinates.  Indeed, such a transformation would require the use of trigonometric identities in $\cos k\theta$ and $\sin k\theta$, which require $k$ to in general be an integer.  Thus we must carry out all our calculations in polar coordinates.  To this end we shall first rewrite the formula defining a general Killing tensor $\boldsymbol{K}_g \in \k^2(\mathbb{E}^2)$ in polar coordinates.  

On this note, we make use of the result that on a manifold $\m$ of constant curvature, generalised Killing tensors are formed as sums of symmetrised tensor products of generalised Killing vectors on $\m$ (see \cite{delongthesis}).  This means that on $\mathbb{E}^2$, we have  
\begin{equation}
\boldsymbol{K}_g = A^{i j} \boldsymbol{X}_i \odot \boldsymbol{X}_j + B^i \boldsymbol{X}_i \odot \boldsymbol{R} + C \boldsymbol{R} \odot \boldsymbol{R}, \quad i,j = 1,2,
\label{eq:killing-as-sum}
\end{equation}
for some tensor $A^{ij}$, vector $B^i$ and constant $C.$  If we take as our basis of Killing vectors on $\mathbb{E}^2$
\begin{align}
\boldsymbol{X}_1 &= \cos\theta \frac{\partial}{\partial r} - \frac{\sin\theta}{r}\frac{\partial}{\partial \theta}, \nonumber \\
\boldsymbol{X}_2 &= \sin\theta\frac{\partial}{\partial r} + \frac{\cos\theta}{r}\frac{\partial}{\partial \theta},\label{eq:suppy} \\
\boldsymbol{R}   &=\frac{\partial}{\partial \theta} \nonumber,
\end{align}
and let
\begin{equation}
\left(A^{i j}\right) = \left( \begin{array}{cc}
\beta_1 & \beta_3\\
\beta_3 & \beta_2
\end{array} \right),\quad
B^i = (\beta_4,\beta_5),\quad C = \beta_6,
\label{eq:parameters}
\end{equation}
then we find upon substituting (\ref{eq:suppy}) and (\ref{eq:parameters}) into (\ref{eq:killing-as-sum}) and taking the symmetric tensor products then
\[
\boldsymbol{K}_{\tilde{p}} = A^{ij} \boldsymbol{X}_i \odot \boldsymbol{X}_j + B^i \boldsymbol{X}_i \odot \boldsymbol{R} + C \boldsymbol{R} \odot \boldsymbol{R}
\]
and we derive that the components of the general Killing tensor  $\boldsymbol{K}_p \in \k^2(\mathbb{E}^2)$ defined with respect to polar coordinates are given by
\begin{align}
K^{11}_{\tilde{p}} &= 
 2 \beta_3 \cos{\theta } \sin{\theta}+\beta_1\cos^2{\theta} +\beta_2\sin^2{\theta}, \nonumber \\
K^{1 2}_{\tilde{p}} &=  2 \beta_4\cos{\theta } +2 \beta_5\sin{\theta } + \frac{2 \beta_3  \cos{2 \theta}}{r}-\frac{2 \beta_1\cos{\theta } \sin{\theta } }{r}+\frac{2\beta_2 \cos{\theta } \sin{\theta } }{r}, \label{eq:polarKT} \\
K^{2 2}_{\tilde{p}} &=  \beta_6-\frac{2 \beta_3  \cos{\theta } \sin{\theta }}{r^2}+\frac{\beta_1\sin^2{\theta }}{r^2}+\frac{\beta_2\cos^2{\theta }}{r^2}  -\frac{2 \beta_4\sin{\theta }}{r}+\frac{2 \beta_5\cos{\theta }}{r}.  \nonumber
\end{align} 
Substituting $\boldsymbol{K}_{\tilde{p}}$ and the potential (\ref{ttwpotty}) into the compatibility condition,
\begin{equation}
\textnormal{d}\left(\boldsymbol{\hat{K}}_{\tilde{p}}\textnormal{d}V\right) = 0,
\label{compatttw}
\end{equation}
we arrive at a complicated trigonometric equation in terms of the arbitrary parameters $k, \lambda_1, \lambda_2, \beta_i, i = 1, \dots 6$ (see eq.~(\ref{bad}) in Appendix \ref{bad-equation}).  To solve this equation, we expand it over the following set $M$ of trigonometric functions
\begin{align}
M =& \{1, \cos(\theta), \sin(\theta), \cos(2\theta), \sin(2\theta), \nonumber \\
  & \quad \cos((1 + k)\theta), \sin((1 + k)\theta), \cos((1 - k)\theta), \sin((1 - k)\theta), \nonumber \\
	& \quad \cos((1 + 2k)\theta), \sin((1 + 2k)\theta), \cos((1 - 2k)\theta), \sin((1 - 2k)\theta), \nonumber \\
	& \quad \cos((1 + 3k)\theta), \sin((1 + 3k)\theta), \cos((1 + 3k)\theta), \sin((1 + 3k)\theta), \nonumber \\
  & \quad \cos(2(1 + k)\theta), \sin(2(1 + k)\theta), \cos(2(1 - k)\theta), \sin(2(1 - k)\theta), \nonumber \\
	& \quad \cos(2(1 + 2k)\theta), \sin(2(1 + 2k)\theta), \cos(2(1 - 2k)\theta), \sin(2(1 - 2k)\theta), \nonumber \\
	& \quad \cos(2(1 + 3k)\theta), \sin(2(1 + 3k)\theta), \cos(2(1 + 3k)\theta), \sin(2(1 + 3k)\theta)\},
\label{eq:trig-basis}
\end{align}
which is the set obtained by expanding the trigonometric functions in (\ref{bad}), ignoring any trig identities.
The set $M$ is linearly dependent when the arguments of the trigonometric functions coincide (see Appendix \ref{section:trig-proof}), a property which depends on the values assumed by the parameter $k$.  By plotting the arguments as functions over $k$ and determining the intersection points of these functions, we thus determine that the set (\ref{eq:trig-basis}) is linearly \emph{dependent} for the following values of $k$: 
\begin{equation}
\begin{split}
k = &\pm 2,\pm \frac{3}{2},\pm 1,\pm \frac{1}{2},\pm \frac{1}{4},\pm \frac{1}{6},\pm \frac{1}{8},\pm \frac{1}{10},\pm \frac{1}{12},\pm \frac{1}{14},\pm \frac{1}{16},\pm \frac{3}{4},\pm \frac{2}{3},\\
&\pm\frac{3}{8},\pm \frac{1}{3},\pm \frac{3}{10},\pm \frac{2}{7},\pm \frac{3}{14},\pm \frac{1}{5},\pm \frac{3}{16},\pm \frac{2}{5},\pm \frac{1}{7}.
\end{split}
\label{eq:specialk-values}
\end{equation}
In the case when $k$ does not equal one of the above values, the set $M$ is linearly \emph{independent} and so we can set each of the coefficients to zero in the trigonometric equations that we obtain upon the substituting  (\ref{eq:killing-as-sum}) and (\ref{ttwpotty}) into (\ref{compatttw}) and expanding over the set $M$.  This argument leads to an overdetermined system of polynomial equations which we can then solve for the variables $k,\beta_i,$ and $\lambda_j, i=1,\dots,6, j=1,2$.  Before completing this computation we shall reduce the number of polynomial equations with the following argument. 

Since the TTW potential (\ref{ttwpotty}) is separable in polar coordinates then if it were to admit another quadratic first integral of motion it would have been defined by a Killing two-tensor which is an element of the 5-dimensional vector subspace
\[
\boldsymbol{K}_s = A^{i j} \boldsymbol{X}_i \odot \boldsymbol{X}_j + B^i \boldsymbol{X}_i \odot \boldsymbol{R}, \quad i,j=1,2,
\]
which is the vector space (\ref{eq:killing-as-sum}) with the ``polar" term factored out.  Substituting this Killing tensor $\boldsymbol{K}_s$ into the compatibility condition (\ref{compatttw}) for the TTW potential (\ref{ttwpotty}), we arrive a system of equations for which it is possible to expand over the basis $L = \{1,\cos\theta,\sin\theta,\cos 2\theta,\sin 2\theta\}$.  This results in the following necessary condition on the parameters of the Killing tensor $\boldsymbol{K}_s$:
\[
\beta_4^2 + \beta_5^2 = 0.
\]
Since all parameters are assumed to be real, the above holds provided $\beta_4 = \beta_5 = 0$.  Therefore we conclude that if the potential (\ref{ttwpotty}) is separable in another orthogonal coordinate system, other than polar, it can only be the Cartesian system of coordinates.  We can now repeat the same argument outlined above with a simpler Killing tensor.  Namely, the Killing two-tensor that defines Cartesian coordinates in a general position \cite{mclenaghan2002group}, 
\begin{equation}
K_{C}^{i j}= 
\left(
\begin{array}{cc}
 \cos^2(\theta -\phi ) & -\frac{1}{2} r \sin(2 (\theta -\phi )) \\
 -\frac{1}{2} r \sin(2 (\theta -\phi )) & r^2 \sin^2(\theta -\phi)
\end{array}
\right),
\label{eq:killing-cartesian}
\end{equation}
where $\phi$ denotes the angle by which the Cartesian coordinate system is rotated (recall the Cartesian web in Figure \ref{non-canon-webs}).  Substituting (\ref{eq:killing-cartesian}) and (\ref{ttwpotty}) into the compatibility condition (\ref{compatttw}), we arrive at the following equation
\begin{dmath}
\csc^2k\theta \left(6 k \cos2 (\theta -\phi ) \cot k \theta +\left(4+2 k^2-3 k^2 \csc^2 k \theta\right) \sin 2(\theta -\phi )\right) \lambda _2 +\sec^2 k \theta \left(-6 k \cos 2 (\theta -\phi ) \tan k \theta +\left(4+2 k^2-3 k^2 \sec^2 k \theta \right) \sin 2 (\theta -\phi )\right)\lambda _1  = 0.
\label{eq:compat-cond-cartKT}
\end{dmath}
We can then expand this over a subset of $N \subset M$ given by 
\begin{align}
N =& \{1, \cos(2\theta), \sin(2\theta), \nonumber \\
  &\quad \cos(2(1 + k)\theta), \sin(2(1 + k)\theta), \cos(2(1 - k)\theta), \sin(2(1 - k)\theta), \nonumber \\
	&\quad \cos(2(1 + 2k)\theta), \sin(2(1 + 2k)\theta), \cos(2(1 - 2k)\theta), \sin(2(1 - 2k)\theta), \nonumber \\
	&\quad \cos(2(1 + 3k)\theta), \sin(2(1 + 3k)\theta), \cos(2(1 + 3k)\theta), \sin(2(1 + 3k)\theta)\}
\label{eq:trig-basis-subset}
\end{align}  
which we determine to be linearly dependent for the following values of $k$
\begin{equation}
k = \pm 1,\pm 2,\pm 2/3,\pm 1/2,\pm 2/5,
\label{eq:reduced-k}
\end{equation}
which are indeed a subset of the values obtained in (\ref{eq:specialk-values}).  Expanding the equation (\ref{eq:compat-cond-cartKT}) over the set $N$, and assuming it to be linearly independent (i.e. excluding the values of $k$ given by (\ref{eq:reduced-k})), we arrive at the following (reduced) system of polynomial equations in the variables $\phi, k, \lambda_1,\lambda_2$:
\begin{align*}
  (-2+k) (-1+k) \left(\lambda _1-\lambda _2\right)\sin2\phi  &= 0,\\
 (-2+k (-15+23 k)) \left(\lambda _1-\lambda _2\right)\sin2\phi  &= 0,\\
 (-2+k (15+23 k)) \left(\lambda _1-\lambda _2\right) \sin2\phi &= 0,\\
 (1+k) (2+k) \left(\lambda _1-\lambda _2\right)\sin2\phi  &= 0,\\
 (-1+k) (-1+2 k)  \left(\lambda _1+\lambda _2\right) \sin2\phi &= 0,\\
 (1+k) (1+2 k) \left(\lambda _1+\lambda _2\right) \sin2\phi &= 0,\\
 (-1+2k)(1+2k)  \left(\lambda _1+\lambda _2\right)\sin2\phi  &= 0,
\end{align*}
\begin{align*}
 (-2+k) (-1+k) \left(\lambda _1-\lambda _2\right) \cos2\phi &= 0,\\
 (-2+k (-15+23 k))\left(\lambda _1-\lambda _2\right)\cos2\phi &= 0,\\
 (-2+k (15+23 k)) \left(\lambda _1-\lambda _2\right)\cos2\phi &= 0,\\
 (1+k) (2+k) \left(\lambda _1-\lambda _2\right)\cos2\phi &= 0,\\
 (-1+k) (-1+2 k)  \left(\lambda _1+\lambda _2\right)\cos2\phi &= 0,\\
 (1+k) (1+2 k)  \left(\lambda _1+\lambda _2\right)\cos2\phi &= 0,\\
 (-1+2k)(1+2k)  \left(\lambda _1+\lambda _2\right)\cos2\phi &= 0.
 \end{align*}
There is no non-trivial solution for which the above system is satisfied.  Considering the set of linearly \emph{dependent} functions, we substitute into equation (\ref{eq:compat-cond-cartKT}) with the special values of $k$ given in (\ref{eq:reduced-k}) to yield the following possible cases
\begin{enumerate}[label=(\arabic*)]
	\item $k = \pm 1$ \\
	Imposes $\phi = 0$. 
	\item $k = \pm 2$ \\
	Imposes $\{\lambda_1 = 0,\phi= n\pi/2\}$ or, $\{\lambda_2 =0, \phi = n\pi/4\}, \quad n \in \N$.
	\item $k = \pm 2/3, \pm 1/2, \pm 2/5$,\\
	 Imposes $\lambda_1 = \lambda_2 = 0$.
\end{enumerate}
Thus, we conclude that $k = \pm 1$ are the only values of $k$ for which the full TTW potential (\ref{ttwpotty}) admits two quadratic first integrals of motion.  For these values of $k$, the TTW system reduces to the Smorodinsky-Winternitz system discussed in Section {\ref{section:sw}}.  Hence, we have proven the following
\begin{prop}
The general TTW potential given by (\ref{ttwpotty}) is a multi-separable superintegrable potential only when $k = \pm 1$.
\end{prop}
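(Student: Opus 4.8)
The plan is to prove directly that, beyond the quadratic first integral guaranteed by its polar separability, the TTW potential (\ref{ttwpotty}) admits a \emph{second} independent quadratic first integral only for $k = \pm 1$; for those values it collapses to the SW system, whose multi-separability is already established in Section \ref{section:sw}. By Benenti's theorem, any such second integral corresponds to a characteristic Killing two-tensor $\boldsymbol{K}$, functionally independent from the polar one, satisfying the compatibility condition $\textnormal{d}(\boldsymbol{\hat{K}}\textnormal{d}V) = 0$. Since $k$ ranges over all of $\R$, the trigonometric identities needed to pass to Cartesian coordinates are unavailable, so I would run the whole argument in polar coordinates, writing the general $\boldsymbol{K}_g \in \k^2(\mathbb{E}^2)$ via Lemma \ref{lemma:killing-tensor-as-sum} as a sum of symmetrised products of the Killing vectors (\ref{eq:suppy}). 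As polar separation is already secured, the rotational term $C\,\boldsymbol{R}\odot\boldsymbol{R}$ adds nothing new, so the search reduces to the $5$-dimensional subspace $\boldsymbol{K}_s = A^{ij}\boldsymbol{X}_i \odot \boldsymbol{X}_j + B^i \boldsymbol{X}_i \odot \boldsymbol{R}$.

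First I would substitute $\boldsymbol{K}_s$ into the compatibility condition (\ref{compatttw}) and expand the resulting identity over the short basis $L = \{1,\cos\theta,\sin\theta,\cos 2\theta,\sin 2\theta\}$. The decisive coefficient relation is the necessary condition $\beta_4^2 + \beta_5^2 = 0$, which over $\R$ forces $\beta_4 = \beta_5 = 0$ and thereby leaves the Cartesian web as the only possible second system. I would then replace $\boldsymbol{K}_s$ by the general-position Cartesian Killing tensor (\ref{eq:killing-cartesian}), keeping its rotation angle $\phi$ as an unknown, and feed it with (\ref{ttwpotty}) into (\ref{compatttw}) to obtain the single trigonometric equation (\ref{eq:compat-cond-cartKT}) in the unknowns $\phi, k, \lambda_1, \lambda_2$.

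The crux --- and the step I expect to be the main obstacle --- is the bookkeeping of the trigonometric set $N$ in (\ref{eq:trig-basis-subset}) as a function of $k$. For \emph{generic} $k$ the arguments $2(1 \pm mk)$ are pairwise distinct, so $N$ is linearly independent and each coefficient in the expansion of (\ref{eq:compat-cond-cartKT}) must vanish separately; this produces an overdetermined system whose equations factor as integer-coefficient polynomials in $k$ times $(\lambda_1 \pm \lambda_2)$ and $\sin 2\phi$ or $\cos 2\phi$, and a direct inspection shows it has no non-trivial solution. The subtlety is that linear independence breaks down exactly when two arguments collide, i.e.\ for the finite exceptional list (\ref{eq:reduced-k}); there one may not annihilate coefficients termwise, and must instead substitute each such $k$ into (\ref{eq:compat-cond-cartKT}) and examine the collapsed equation by hand. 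Doing so, $k = \pm 2$ forces one of $\lambda_1, \lambda_2$ to vanish, the values $k = \pm 2/3, \pm 1/2, \pm 2/5$ force $\lambda_1 = \lambda_2 = 0$, and only $k = \pm 1$ yields a genuinely non-trivial solution, $\phi = 0$, restoring the canonical Cartesian web. Combining the generic and exceptional analyses then establishes that $k = \pm 1$ are the only admissible values.
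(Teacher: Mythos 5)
Your proposal is correct and follows essentially the same route as the paper's own proof: the reduction (by linearity of the compatibility condition together with the already-known polar separability) to the five-dimensional subspace $\boldsymbol{K}_s$, the coefficient condition $\beta_4^2+\beta_5^2=0$ forcing any second separable web to be Cartesian, the substitution of the general-position Cartesian Killing tensor (\ref{eq:killing-cartesian}) into (\ref{compatttw}), and the split between generic $k$ (where $N$ is linearly independent and the overdetermined system has no non-trivial solution) and the exceptional values (\ref{eq:reduced-k}) treated case by case. The only difference is organisational: you bypass the paper's preliminary, never-completed expansion of the full six-parameter equation over the larger set $M$ and go directly to the reduction argument, which is a harmless streamlining rather than a different method.
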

The case $k = \pm 2$ yields the well-known Calogero system on the line  (see e.g.~\cite{calogero1969,chanu2008super}).  
The conclusion of our above argument is that only when either $\lambda_1$ or $\lambda_2$ vanishes does this potential admit separation in general Cartesian coordinates according to a non-canonical Killing tensor $\boldsymbol{K}_C$ whose components are given by (\ref{eq:killing-cartesian}).
In particular, the compatibility condition $\textnormal{d}(\boldsymbol{\hat{K}}_C\textnormal{d}V) = 0$ will be satisfied for the potential 
\[
V(r, \theta) = -\omega^2 r^2 + \frac{\lambda_1}{\cos 2\theta} + \frac{\lambda_2}{\sin 2\theta}
\]
 when either $\lambda_1 = 0$ and $\phi = n \pi/2$, or $\lambda_2 = 0$, and $\phi = n \pi/4$. 

Our results demonstrate that this approach to defining an infinite family of superintegrable systems appears to destroy the geometry of the associated orbit space $({\k}^2(\mathbb{E}^2) \times {\k}^2(\mathbb{E}^2))/SE(2)$.  In other words, it seems unlikely for systems defined in this manner to admit two quadratic first integrals of motion for multiple values of the introduced parameter.  In 2010 \cite{post2010}, this method was utilised by Post and Winternitz to construct another infinite family of superintegrable systems dependent on a parameter $k$, namely a potential that appeared as a second-order perturbation of the \emph{Kepler-Coulomb potential}.  They then established an equivalence between this new system and the TTW system through a St\"{a}ckel transform, which maps Hamiltonians to Hamiltonians, and hence developed a direct relation between the integrals of motion and trajectories of each system.  Indeed, it would be of interest to pursue this equivalence through the invariant theory of Killing tensors which could possibly facilitate the search for these \emph{St\"{a}ckel equivalent} systems on a manifold.  



\chapter{Conclusions and Outlook}
\label{conc}

\ifpdf
\graphicspath{{Chapter4/Chapter4Figs/PNG/}{Chapter4/Chapter4Figs/PDF/}{Chapter4/Chapter4Figs/}}
\else
\graphicspath{{Chapter4/Chapter4Figs/EPS/}{Chapter4/Chapter4Figs/}}
\fi

We have invariantly characterised the Smorodinsky-Winternitz superintegrable potential modulo the action of the isometry group $SE(2)$ in terms of joint invariants of the associated Killing tensors by taking them as points in the product space $\k^2(\mathbb{E}^2) \times \k^2(\mathbb{E}^2)$.  We can extend this result to the problem of classification of arbitrary superintegrable potentials defined in spaces of constant curvature by classifying first any associated Killing tensors that define first integrals of motion for the class of superintegrable systems in question (e.g., superintegrable systems that admit only quadratic first integrals of motion).  These Killing tensors are identified as elements of the corresponding product spaces, whose factors are the corresponding vector spaces of Killing tensors representing the associated orbits in the induced orbit space under the action of the isometry group in such product spaces.

Additionally, we have proven that the general TTW potential is a superintegrable system with two  functionally independent quadratic first integrals of motion only for $k = \pm 1$, that is when it reduces to the Smorodinsky-Winternitz potential.  If we allow one of the arbitrary parameters in the TTW potential to vanish, then a second multi-separable system is obtained when $k=\pm 2$.  Therefore we have established that the general superintegrability of the TTW system for other $k$ values must come from a (non-trivial) higher order (i.e.~not quadratic) first integral.    

The results contained within this thesis motivate the complete classification of Hamiltonian systems with two degrees of freedom, which are superintegrable in the sense of being multi-separable.  The introduction of an additional arbitrary parameter into a superintegrable potential appears to be very destructive, in the sense that it destroys the property of multi-separability for all but one choice of the arbitrary parameter.  This hints towards a characterisation on the Euclidean plane for all superintegrable systems obtained as perturbations of known superintegrable systems, and further, the classification of all multi-separable superintegrable systems defined on the Euclidean plane.  

Furthermore, we can extend our results to studying superintegrable systems defined on different two-dimensional spaces, such as on the Minkowski plane or the $2D$ sphere.  In the case of the former, the study becomes complicated by the fact that there are nine orthogonal coordinate webs and instead of a discrete set of singular points, there are singular \emph{regions} admitted by complex-valued eigenvalues of the associated characteristic Killing tensor.  In the case of the $2D$ sphere, the study is simplified as a result of both the additional symmetry of the manifold and its compactness.  Indeed, there are instead only two orthogonal coordinate webs defined on the sphere, and the set of singular points remains discrete.  It is of immediate interest to the author to carry out the classification of all multi-separable systems defined on spaces of constant curvature with two degrees of freedom. 




\bibliographystyle{amsplain}
\renewcommand{\bibname}{Bibliography} 
\bibliography{References/biblio} 

\appendix
\chapter{Compact Formulae}
\section{The Action of the Isometry Group}
\label{group-action-stuff}
We provide compact formulas for the equations appearing in Section \ref{section:ITKT}.  Of course, these equations have been produced in the literature for higher $(n > 3)$ dimensional spaces (see, e.g~\cite{horwood2008thesis}), though we rewrite them specifically for $\mathbb{E}^2$ at the indulgence of the author. 

This begins with the compact formulae for the group action $SE(2) \circlearrowright \mathcal{K}^2(\mathbb{E}^2)$, for the general Killing tensor given by 
\[
\boldsymbol{K} = A^{i j} \boldsymbol{X}_i \odot \boldsymbol{X}_j + B^i \boldsymbol{X}_i \odot \boldsymbol{R} + C \boldsymbol{R} \odot \boldsymbol{R},
\]
 where we can define the parameters by
 \[
\boldsymbol{A} = \left(
\begin{array}{cc}
	\beta_1 & \beta_3\\
	\beta_3 & \beta_2  
\end{array}
		  \right), 
\quad 
\boldsymbol{B} = \left(
\begin{array}{c}
	\beta_4 \\
	-\beta_5
\end{array}
	  \right),
\quad C = \beta_6.
\]
The transformation laws for the basis vectors are given by
\begin{equation}
\boldsymbol{X}_i = \Lambda^j{}_i \boldsymbol{\tilde{X}}_j, \quad \boldsymbol{R} = \mu^j \boldsymbol{\tilde{X}}_j + \boldsymbol{\tilde{R}},
\label{eq:basis-tran}
\end{equation}
where $\mu^j = \epsilon^k{}_i \delta^i \Lambda^j{}_k$, 
where $\epsilon^k{}_{i}$ is the two-dimensional Levi-Civita symbol with the usual orientation, $\epsilon^1{}_2 = +1, \epsilon^2{}_1 = -1$, $\delta^i = (p_1, p_2)$ and the $\Lambda^j{}_k$ are elements of the special orthogonal group, $SO(2)$, i.e.
\[
\boldsymbol{\Lambda} = \left(
\begin{array}{cc}
	\cos\theta & -\sin\theta\\
	\sin\theta & \cos\theta  
\end{array}
		  \right),
\]
we find that the action of $SE(2) \circlearrowright \k^2(\mathbb{E}^2)$ induced by the action of $SE(2) \circlearrowright \mathbb{E}^2$, i.e.~$q^i \to \Lambda^i{}_j \tilde{q}^j + \delta^i$, can be written in a nice, compact way:
\begin{align}
\tilde{A}^{i j} &= \Lambda_k{}^i\Lambda^j{}_p A^{k p} + 2 B^{k}\Lambda_k{}^{(i}\mu^{j)} + \beta_6 \mu^i\mu^j, \nonumber\\
\tilde{B}^j &= \Lambda^j{}_\ell B^{\ell} + \beta_6 \mu^{j}, \label{A2}\\
\tilde{C} & = \beta_6 \nonumber,
\end{align}
where the parentheses around the indices denote symmetrisation as defined in Section \ref{conventions}.  This is equivalent to equation  (\ref{eq:transformed-KT-parameters}).
\section{Kogan's Recursive Approach}
\label{noway}
We can also provide compact formulae for those appearing in the derivation of group invariants using Kogan's recursive approach (see Section \ref{section:group-action}).  Recall that we first compute the action of the subgroup of translations on the vector space of Killing tensors $\k^2(\mathbb{E}^2).$ This is induced by the action of the translational group on $\mathbb{E}^2$, namely $q^i \to \tilde{q}^i + \delta^i$, which yields the compact form of eq.~(\ref{eq:trans-group-action}) as
\begin{align}
\tilde{A}^{i j} &= A^{ij} + 2B^{(i}\epsilon^{j)}{}_m\delta^m + C\epsilon^i{}_m\epsilon^j{}_\ell\delta^m\delta^\ell, \nonumber\\
\tilde{B}^i &= B^j + C \epsilon^i{}_{\ell} \delta^{\ell}, \label{A3}\\
\tilde{C} &= \beta_6. \nonumber
\end{align}
The equivalent compact form of the normalisation equations (\ref{eq:normal-equations}) are then given by
\[
\delta^i = \epsilon^i{}_j B^j/C.
\]
Finally, we needed to compute the action of the subgroup of rotations $SO(2) \circlearrowright \k^2(\mathbb{E}^2)$.  This action is obtained in a straightforward manner as the induced action of $SO(2) \circlearrowright \mathbb{E}^2$, namely $q^i \to \Lambda^i{}_j \tilde{q}^j$ to yield the following compact form of eq.~\ref{eq:rot-group-action} as
\begin{align}
\tilde{A}^{ij} &= \Lambda_k{}^i\Lambda^j{}_p A^{kp}, \nonumber\\
\tilde{B}^i &=  \Lambda_{\ell}{}^i B^{\ell}, \label{A4}\\
\tilde{C} & = \beta_6 \nonumber.
\end{align}
This concludes the compact representation of the Section \ref{section:group-action} equations. Though, we remark that the development can indeed be continued in order to derive the invariants of the vector space of valence two Killing tensors under the action of the isometry group, $SE(2)$.  Specifically, by substituting for $\delta^i$ into (\ref{A3}) and using the transformation law given by (\ref{A4}).  Provided that $C \neq 0$, then a basis for the invariants are admitted by the trace and determinant of $\tilde{A}^{ij}$.  The case when $C=0$ must then be treated separately (see \cite{horwood2008thesis} for explicit details).



\chapter{The Killing Tensor Equation in Local Coordinates}
\label{kt-solved}
 With respect to local (Cartesian) coordinates, i.e.~$g_{ij} = \delta_{ij}$, the Killing tensor equation in Section \ref{conventions}, namely 
 \[
 [\g,\boldsymbol{K}] = 0,
 \]
 is equivalent to\footnote{This equivalence holds true \emph{only} in $\mathbb{E}^2$, since contravariant and covariant components with respect to Cartesian coordinates are equivalent.} 
\[
K_{\left(i j; k\right)} = 0, \quad i,j,k = 1,2,
\]
where the covariant derivative  (see Section \ref{conventions}) is in fact a partial derivative since all the connection coefficients vanish in a space of constant curvature.  Using the fact that $K_{i j}$ is symmetric, i.e.~$K_{i j} = K_{j i}$, and applying the symmetrization rule, this equation becomes
\[
K_{i j, k} + K_{j k, i} + K_{k i, j} = 0, 
\]
which yields a system of PDE's in $K_{i j}$, for  $i, j = 1, 2$.  Namely, we have
\begin{center}
\begin{enumerate}[label=(\arabic*)]
	\item $K_{11,1} = 0$,
	\item $K_{22,2} = 0$,
	\item $K_{12,1} = -\tfrac{1}{2}K_{11,2}$,
	\item $K_{12,2} = -\tfrac{1}{2}K_{22,1}$ 
\end{enumerate}
\end{center}
We will take the Killing tensor components as functions in the Cartesian coordinates $(q^1,q^2) = (x,y)$, so that, for instance $K_{11,1} = \frac{\partial K_{11}(x,y)}{\partial x}$.  Integrating $(1)$ with respect to $x$ and $(2)$ with respect to $y$ then yields that $K_{11}(x,y) = F(y)$ and $K_{22}(x,y) = G(x)$, respectively, where $F$ and $G$ are arbitrary functions.  Taking a partial derivative of $(3)$ with respect to $x$, we find that
\begin{align*}
\frac{\partial^2 K_{12}}{\partial x^2} &= 0, \\
	&\Rightarrow K_{12,1}(x,y) = H(y) = -\frac{1}{2} \frac{\partial F}{\partial y}.
\end{align*}
Integrating $ K_{12,1}$ with respect to $y$ then yields that
\begin{equation}
F(y) = -2 \int{H(y) dy} + \beta_1.
\label{eq:fy}
\end{equation}
Analogously, if we take a partial derivative of $(4)$ with respect to $y$ , we find that
\begin{align*}
\frac{\partial^2 K_{12}}{\partial y^2} &= 0 \\
	&\Rightarrow K_{12,2} = M(x) = -\frac{1}{2}\frac{\partial G}{\partial x},
\end{align*}
so 
\begin{equation}
G(x) = -2\int{M(x) dx} + \beta_2.
\label{eq:gx}
\end{equation}
If we now use the fact that the partial derivatives commute, so $\frac{\partial K_{12}}{\partial x\partial y} = \frac{\partial K_{12}}{\partial y\partial x},$ then since $K_{12,1}(x,y) = H(y)$ and $K_{12,2} = M(x)$, we find that
\[
\frac{\partial H}{\partial y} = \frac{\partial M}{\partial x} = -\beta_6,
\] 
where $\beta_6$ is a separation constant.  Thus, we can separate these two PDEs to obtain that 
\[
H(y) = -\beta_6 y -\beta_4, \quad M(x) = -\beta_6 - \beta_5.
\]
Substituting these into (\ref{eq:fy}) and (\ref{eq:gx}), respectively, we find after integrating that
\[
F(y) = \beta_6 y^2 + 2\beta_4 y + \beta_1, \quad G(x) = \beta_6 x^2 + 2\beta_5 x+ \beta_2,
\]
so that 
\[
K_{11} = \beta_1 +  2\beta_4 y +\beta_6 y^2  , \quad K_{22} = \beta_2 + 2\beta_5 x + \beta_6 x^2.
\]
Lastly, integrating $K_{12,1}(x,y) = H(y)$ with respect to $x$, and using $K_{12,2} = M(x)$, we will find that
\[
K_{12} = \beta_3 - \beta_4 x -\beta_5 y + \beta_6 x y.
\]
This gives the general solution to the Killing tensor equation defined on the Euclidean plane.  Indeed, we see that it depends on 6 arbitrary parameters, $\beta_i, i = 1, \dots 6$.
\chapter{A Complicated Trigonometric Equation}
\label{bad-equation}
Imposing the compatibility condition $\textnormal{d}\left(\boldsymbol{\hat{K}}\textnormal{d}V\right) = 0$ in the case of a general Killing tensor $\boldsymbol{\hat{K}}$ in six arbitrary parameters and the potential of the TTW system given in Section \ref{section:ttw}
as eq. (\ref{ttwpotty}), we arrive at the following trigonometric equation
\begin{dmath}
r^5 \omega ^2 \left(r \sin 2 \theta  \left(\beta _1-\beta _2\right)-2 r \cos 2 \theta  \beta _3+2 \cos \theta  \beta _4+2 \sin \theta  \beta _5\right)
+3 k^2 r^4 \sec ^4 k \theta \left(\sin 2 \theta  \left(-\beta _1+\beta _2\right)+2 \cos 2 \theta  \beta _3+2 r \cos \theta  \beta _4+2 r \sin \theta  \beta _5\right) \lambda _1
+\sec ^2 k \theta \left(\left(4+r^2+2 k^2 r^4\right) \sin 2 \theta  \left(\beta _1-\beta _2\right)-2 \left(4+r^2+2 k^2 r^4\right) \cos 2 \theta  \beta _3
-2 r \left(3+2 k^2 r^4\right) \cos \theta  \beta _4-2 r \left(3+2 k^2 r^4\right) \sin \theta  \beta _5\right) \lambda _1
+2 k r^2 \left(1+r^2\right) \cot k \theta  \csc ^2 k \theta \left(\cos 2 \theta  \left(\beta _1-\beta _2\right)+2 \sin 2 \theta  \beta _3+r \sin \theta  \beta _4-r \cos \theta  \beta _5\right) \lambda _2
+3 k^2 r^4 \csc ^4 k \theta \left(\sin 2 \theta  \left(-\beta _1+\beta _2\right)+2 \cos 2 \theta  \beta _3+2 r \cos \theta  \beta _4+2 r \sin \theta  \beta _5\right) \lambda _2
+\frac{2}{-1+\cos 2 k \theta } \left(-\left(4+r^2+2 k^2 r^4\right) \sin 2 \theta  \left(\beta _1-\beta _2\right)+2 \left(4+r^2+2 k^2 r^4\right) \cos 2 \theta  \beta _3
+2 r \left(3+2 k^2 r^4\right) \cos \theta  \beta _4+2 r \left(3+2 k^2 r^4\right) \sin \theta  \beta _5\right) \lambda _2
-2 k r^2 \left(1+r^2\right) \sec ^2 k \theta \left(\cos 2 \theta  \left(\beta _1-\beta _2\right)+2 \sin 2 \theta  \beta _3+r \sin \theta  \beta _4-r \cos \theta  \beta _5\right) \lambda _1 \tan k \theta
 = 0,
 \label{bad}
\end{dmath}
which must be solved for $\omega, \lambda_1,\lambda_2,k,\beta_i, i=1,\dots, 5$.
\chapter{Linear Independence of a Set of Functions}
\label{section:trig-proof}
We provide a proof to the claim in Section \ref{section:ttw} that the set $M$ (\ref{eq:trig-basis}) is linearly independent when the arguments of the trigonometric functions do not coincide.  Indeed, it suffices to prove the following.
\begin{prop}[cf. Yuan \cite{yuan}]

The set of trigonometric functions defined by 
\[
\{1, \sin(r\theta), \cos(r\theta) \},
\]
for positive $r \in \R$ is linearly independent over $\R$. 
\end{prop}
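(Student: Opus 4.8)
The plan is to reduce the claim to the classical fact that exponential functions with distinct exponents are linearly independent, and then exploit that $1$, $\cos(r\theta)$, and $\sin(r\theta)$ are finite $\mathbb{C}$-linear combinations of $e^{i\lambda\theta}$ with $\lambda \in \{0,r,-r\}$. Since linear independence of a (possibly infinite) family means that \emph{every} finite subfamily is independent, it suffices to treat a finite collection $\{1\}\cup\{\cos(r_j\theta),\sin(r_j\theta)\}_{j=1}^{m}$ with $0 < r_1 < \cdots < r_m$.

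First I would suppose a real linear relation
\[
c_0 + \sum_{j=1}^{m}\big(a_j\cos(r_j\theta) + b_j\sin(r_j\theta)\big) = 0
\]
holding identically in $\theta \in \mathbb{R}$. Using Euler's formula $\cos(r_j\theta) = \tfrac12(e^{ir_j\theta}+e^{-ir_j\theta})$ and $\sin(r_j\theta) = \tfrac{1}{2i}(e^{ir_j\theta}-e^{-ir_j\theta})$, I would rewrite the left-hand side as a combination $\sum_\lambda C_\lambda e^{i\lambda\theta}$ over the $2m+1$ \emph{distinct} real frequencies $\lambda \in \{0,\pm r_1,\dots,\pm r_m\}$, where $C_0 = c_0$ and $C_{\pm r_j} = \tfrac12(a_j \mp i b_j)$.

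Next I would invoke (and prove) the key lemma: for pairwise distinct real numbers $\lambda_1,\dots,\lambda_N$, the functions $\theta \mapsto e^{i\lambda_k\theta}$ are linearly independent over $\mathbb{C}$. The cleanest route is to differentiate the relation $\sum_k C_k e^{i\lambda_k\theta} = 0$ a total of $0,1,\dots,N-1$ times and evaluate at $\theta = 0$, which yields the homogeneous system
\[
\sum_{k=1}^{N}(i\lambda_k)^p\, C_k = 0, \qquad p = 0,1,\dots,N-1,
\]
whose coefficient matrix is a Vandermonde matrix in the distinct nodes $i\lambda_1,\dots,i\lambda_N$. Since a Vandermonde determinant in distinct nodes is nonzero, the only solution is $C_k = 0$ for all $k$. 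Applying this with the frequencies above forces $c_0 = 0$ and $a_j \mp i b_j = 0$, whence (separating real and imaginary parts) $a_j = b_j = 0$ for every $j$, which is the asserted linear independence.

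The only genuine obstacle is the lemma on distinct exponentials, and even that is routine once organised as a Vandermonde argument; the rest is bookkeeping in matching the real coefficients $a_j,b_j$ to the complex coefficients $C_{\pm r_j}$. As an alternative that avoids complexification, I could instead equip the almost-periodic functions with the mean inner product $\langle f,g\rangle = \lim_{T\to\infty}\tfrac{1}{2T}\int_{-T}^{T} f(\theta)g(\theta)\,d\theta$; a short computation shows that $1$, $\cos(r_j\theta)$, and $\sin(r_j\theta)$ are mutually orthogonal with positive norms whenever the $r_j$ are distinct and positive, and an orthogonal set of nonzero vectors is automatically linearly independent. This second route is attractive here because the frequencies $r_j$ arising in the set $M$ need not be commensurate, so no common period exists and the ordinary Fourier orthogonality over a single interval is unavailable.
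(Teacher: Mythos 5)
Your proof is correct, but it takes a genuinely different route from the paper's. The paper's own argument stays entirely real: it assumes a minimal non-trivial relation $\sum c_r\sin(r\theta)+d_r\cos(r\theta)=0$, differentiates a large even number of times so that the coefficient of the highest-frequency cosine dominates all other cosine coefficients, and evaluates at $\theta=0$ (which kills every sine term) to reach a contradiction; a large odd number of derivatives then rules out the sine terms, leaving only the constant. That argument is elementary and avoids complexification, but its ``dominates'' step is left informal (it really needs the estimate $r_0^{2N}\lvert d_{r_0}\rvert>\sum_{r<r_0}r^{2N}\lvert d_r\rvert$ for $N$ large). Your route --- Euler's formula to pass to exponentials with distinct real frequencies, then a Vandermonde system obtained by differentiating $0,1,\dots,N-1$ times at $\theta=0$ --- is closer in spirit to the paper's \emph{second} proposition (independence of $\{e^{z\theta}\}$ over $\C$), but even there the paper proves the exponential lemma differently, by a shortest-counterexample trick: differentiate the relation and subtract $z_0$ times it to produce a strictly shorter non-trivial relation. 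What your approach buys is a single, fully finite and completely rigorous pass: the Vandermonde determinant pinpoints exactly where distinctness of the frequencies enters, at the modest cost of the real-to-complex bookkeeping, which you carry out correctly ($C_0=c_0$, $C_{\pm r_j}=\tfrac12(a_j\mp ib_j)$, hence $a_j=b_j=0$). Your alternative via the mean inner product $\langle f,g\rangle=\lim_{T\to\infty}\tfrac{1}{2T}\int_{-T}^{T}fg\,d\theta$ is a genuinely third method that appears nowhere in the paper, and your observation that the frequencies need not be commensurate --- so that orthogonality over a single common period is unavailable and the Bohr mean is the right substitute --- is exactly the correct justification for it.
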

Though it is straightforward to compute the Wronskian of the set we provide an alternative proof which allows us to extend the result.  
\begin{proof}
Suppose that we have a non-trivial dependency relation with a minimal number of terms in $\sin(r\theta)$ and $\cos(r\theta)$, 
\[
\sum{ c_r \sin(r\theta) + d_r \cos(r\theta)} = 0.
\]
Consider the largest possible real $r_0$ for which $d_{r_0}$ is non-vanishing.  If we take a large, even number of derivatives eventually the $d_{r_0}$ coefficient will dominate all other coefficients in the other cosine terms.  Suppose we take a considerable amount of an even number of derivatives so that this is the case.  If we then substitute with $\theta = 0$, we eliminate all sine terms and are left with a large number which cannot be equal to zero.  Therefore, it must be the case that  no cosine terms appear in the dependency relation. 

Similarly, we can consider the largest possible real $r_1$ for which $c_{r_1}$ is non-vanishing.  We will then take a substantial amount of an odd number of derivatives until the coefficient of the $\cos (r_1\theta)$ term dominates all other coefficients in  the other cosine terms (where these cosines come from differentiating sines).  Substituting with $\theta =0 $, we again find ourselves with a large number which cannot be equal to zero.  Thus, it follows that there can be no sine terms in the dependency relation.

What remains is that 1 is the only function which can appear in the non-trivial dependency relation, which is clearly impossible.  Thus, we conclude that no such dependency relation can exist and so the set must be linearly independent.
\end{proof}
Prompted by the discussion in \cite{yuan}, we arrive at a generalisation of the above proposition.  Considering that 
\[
\cos(r \theta) = \frac{e^{i r\theta} + e^{- i r \theta}}{2}, \qquad \sin(r\theta) = \frac{e^{i r \theta} - e^{-i r\theta}}{2 i },
\]
it suffices to prove that the set of functions $\{ e^{i r\theta} \}$  for $r \in \R$ are linearly independent over $\R$.  Further, we can show that the set of functions $\{e^{z \theta} \}$ for a $z \in \C$ are linearly independent over $\C$.
\begin{prop}
The set $\{e^{ z \theta}\}$ for $z \in \C$ is linearly independent over $\C$.
\end{prop}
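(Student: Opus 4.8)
\emph{The plan is} to reduce the assertion to finite subsets and then argue by induction on the number of terms, using differentiation to strip the exponentials one at a time. Since linear independence of an infinite family means that \emph{every} finite subfamily is linearly independent, it suffices to fix finitely many pairwise distinct complex numbers $z_1, \dots, z_n$ and to show that the functions $e^{z_1 \theta}, \dots, e^{z_n \theta}$ admit only the trivial dependency relation over $\C$. Distinct exponents do give distinct functions, since $e^{z_1 \theta} = e^{z_2 \theta}$ for all $\theta$ forces $z_1 = z_2$, so this genuinely captures the statement.

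\emph{The induction} runs as follows. For $n = 1$ the relation $c_1 e^{z_1 \theta} = 0$ forces $c_1 = 0$, because $e^{z_1 \theta}$ never vanishes. For the inductive step, suppose
\[
\sum_{j=1}^{n} c_j e^{z_j \theta} = 0
\]
holds identically in $\theta$. Differentiating with respect to $\theta$ yields
\[
\sum_{j=1}^{n} c_j z_j e^{z_j \theta} = 0,
\]
and subtracting $z_n$ times the first identity from the second eliminates the last term:
\[
\sum_{j=1}^{n-1} c_j (z_j - z_n) e^{z_j \theta} = 0.
\]
This is a dependency relation among $n-1$ exponentials with pairwise distinct exponents, so by the inductive hypothesis $c_j (z_j - z_n) = 0$ for $j = 1, \dots, n-1$. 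Since the $z_j$ are distinct, $z_j - z_n \neq 0$, whence $c_j = 0$ for $j < n$; feeding this back into the original relation forces $c_n = 0$ as well. Thus the only dependency relation is trivial, which is what is to be shown.

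\emph{An alternative route} is the Wronskian of $e^{z_1 \theta}, \dots, e^{z_n \theta}$, which factors as $\exp\!\bigl((z_1 + \dots + z_n)\theta\bigr)$ times the Vandermonde determinant $\prod_{i < j}(z_j - z_i)$; this is nonzero precisely when the exponents are distinct, giving linear independence at once. \emph{I do not anticipate a genuine obstacle here.} The only point requiring care is the legitimacy of differentiating the identity term by term, which is valid since each $e^{z_j \theta}$ is entire, and the observation that the distinctness of the exponents is exactly the hypothesis used to drop one term at each stage. The earlier propositions on $\{1, \sin(r\theta), \cos(r\theta)\}$ and $\{e^{i r \theta}\}$ then follow by rewriting $\cos(r\theta)$ and $\sin(r\theta)$ in terms of $e^{\pm i r \theta}$, as indicated in the text.
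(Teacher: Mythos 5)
Your proof is correct and uses the same key device as the paper's own argument: differentiate the dependency relation and subtract a scalar multiple of the original to eliminate one exponential, thereby reducing the number of terms. The only difference is cosmetic --- you organize this as induction on the number of terms (with an explicit base case and an explicit reduction to finite subfamilies), whereas the paper phrases it as a minimal-counterexample/descent argument; the Wronskian--Vandermonde remark is a nice bonus but not needed.
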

\begin{proof}
Suppose we have a non-trivial dependency relation with a minimal number of terms given by
\[
\sum{a_z e^{z \theta}} = 0.
\]
Differentiating this equation with respect to $\theta$ then yields that 
\[
\sum{z a_z e^{z \theta}} = 0,
\]
as well.  By our assumption, there must exist a non-zero $z_0 \in \C$ with $a_{z_0} \neq 0$, and so subtracting this from the previous equation we find that
\[
\sum{(z - z_0) a_z e^{z t}} = 0
\]
holds as well.  But this is a new non-trivial dependency relation \emph{in fewer terms}, which contradicts our original assumption.  Therefore, the set of functions $\{e^{z \theta}\}$ is indeed linearly independent over $\C$.  
\end{proof}
We comment that the linear independence of the above functions follows from a more general property regarding eigenvectors of a linear operator, where those with distinct eigenvalues are linearly independent.  Considering that the exponential, sine, and cosine functions are all essentially eigenfunctions of differentiation\footnote{A non-zero function $f$ is called an eigenfunction of a linear operator $A$ iff it satisfies $A f = \lambda f,$ for some scalar $\lambda$ called the eigenvalue.}, one can prove the result holds in this more general context.

\end{document}